\documentclass[journal,twocolumn,twoside]{IEEEtran}
\pdfoutput=1
% Packages
\usepackage{amsmath}
\usepackage{amssymb}
\usepackage{amsthm}
\usepackage{bbm}\let\mathbb\mathbbmss
\usepackage{graphicx}\graphicspath{{./figs/}}
\usepackage[caption=false]{subfig}
\usepackage{cite}
\usepackage{booktabs}
\usepackage{microtype}
\usepackage[hyphens]{url}

% Provide \texorpdfstring if not already defined
\providecommand{\texorpdfstring}[2]{#1}

% Theorems etc.
\theoremstyle{plain}
\newtheorem{lemma}{Lemma}
\newtheorem{theorem}{Theorem}
\newtheorem{proposition}{Proposition}
\newtheorem{corollary}{Corollary}[theorem]
\theoremstyle{definition}
\newtheorem{definition}{Definition}

\newtheorem{assumption}{Assumption}
\theoremstyle{remark}

% Include external files here
% Bold lowercase: syntax \nb# where # is {a ... z, 0,1}

% Bold capital letters: syntax \nb# where # is {A ... Z}

% \mathcal: syntax \ncal# where # is {A ... Z}
\def\ncalA{{\mathcal{A}}}
\def\ncalB{{\mathcal{B}}}

\def\ncalH{{\mathcal{H}}}
\def\ncalI{{\mathcal{I}}}

\def\ncalS{{\mathcal{S}}}

% \mathbb: syntax \nbb# where # is {A ... Z}

% \mathfrak:

% Roman: {\rm } syntax \nrm# where # is {a ... z}

% Math operators and functions

\def\argmax{\operatorname*{arg~max}}
\def\E{\mathbb{E}}
\def\P{\mathbb{P}}
\def\b{\mathbbmss{b}}   % ball (requires package bbm)
\def\R{\mathbb{R}}

\def\indicator{{\mathbb{1}}}

% Math constants
\def\d{\mathrm{d}}

% Text elements
\def\th{{^\text{th}}}

% Notation specific to our work
\def\pl{\mathtt{P_L}}
\def\plk{\mathtt{P_L^K}}
\def\PPPa{{\Phi}}
\def\pppa{{\phi}}
\def\pppai{{\lambda}}

\def\PPPb{{\Psi}}

\def\Ia{{\ncalI_1}}
\def\Ix{{\ncalI_2}}
\def\N{{\sigma^2}}
\def\threshold{{\beta}}
\def\SINR{{\sf SINR}}

\def\SIR{{\sf SIR}}

\def\xa{{\hat{x}}}
\def\Ra{{\hat{R}}}
\def\ra{{\hat{r}}}
\def\origin{{o}}
\def\pg{{\gamma}}
\def\btheta{{\boldsymbol{\theta}}}
\def\pOmega{{f_{\Omega}}}
\def\pomega{{\pOmega(\omega)}}
\def\omegasum{{\sum_{\omega=1}^{L-1}\pomega}}
\def\GammaL{{(L-1)!}}

%\def\Acirc{{\nbbA_\text{\rm\tiny\fullmoon}}}

%\def\Alune{\nbbA}

% Other useful commands
\allowdisplaybreaks
\newlength{\figurewidth}
\setlength{\figurewidth}{.95\columnwidth}

% Main text starts here
\begin{document}

\title{Towards a Tractable Analysis of Localization\\Fundamentals in Cellular Networks}

\author{Javier Schloemann, {\em Member, IEEE,} Harpreet S. Dhillon, {\em Member, IEEE,} and 
\\R. Michael Buehrer, {\em Senior Member, IEEE} %
%\thanks{Manuscript received February 24, 2015; revised July 12, 2015 and September 30, 2015; accepted October 5, 2015. The associate editor coordinating the review of this paper and approving it for publication was H.-C. Wu.}
\thanks{J. Schloemann is with Northrop Grumman Corporation, Raleigh, NC. He was with Wireless@VT, Department of ECE, Virginia Tech, Blacksburg, VA, USA. Email: javier@vt.edu. H. S. Dhillon and R. M. Buehrer are with Wireless@VT, Department of ECE, Virginia Tech, Blacksburg, VA, USA. Email: \{hdhillon, buehrer\}@vt.edu. This paper was presented in part at the IEEE ICC 2015 Workshop on Advances in Network Localization and Navigation (ANLN), London, UK~\cite{Schloemann2015a}. \hfill Manuscript last updated: \today.}
}

\maketitle

\begin{abstract}
When dedicated positioning systems, such as GPS, are unavailable, a mobile device has no choice but to fall back on its cellular network for localization. Due to random variations in the channel conditions to its surrounding base stations (BS), the mobile device is likely to face a mix of both favorable and unfavorable geometries for localization. Analytical studies of localization performance (e.g., using the Cram\'{e}r-Rao lower bound) usually require that one fix the BS geometry, and favorable geometries have always been the preferred choice in the literature. However, not only are the resulting analytical results constrained to the selected geometry, this practice is likely to lead to overly-optimistic expectations of \emph{typical} localization performance. Ideally, localization performance should be studied across all possible geometric setups, thereby also removing any selection bias. This, however, is known to be hard and has been carried out only in simulation. In this paper, we develop a new tractable approach where we endow the BS locations with a distribution by modeling them as a Poisson point process (PPP), and use tools from stochastic geometry to obtain easy-to-use expressions for key performance metrics. In particular, we focus on the probability of detecting some minimum number of BSs, which is shown to be closely coupled with a network operator's ability to obtain satisfactory localization performance (e.g., meet FCC E911 requirements). This metric is indifferent to the localization technique (e.g., TOA, TDOA, AOA, or hybrids thereof), though different techniques will presumably lead to different BS hearability requirements. In order to mitigate excessive interference due to the presence of \emph{dominant} interferers in the form of other BSs, we incorporate both BS coordination and frequency reuse in the proposed framework and quantify the resulting performance gains analytically.
\end{abstract}

\begin{IEEEkeywords}
Cellular localization, E911, hearability, stochastic geometry, point process theory, base station coordination, frequency reuse.
\end{IEEEkeywords}

%%%%%%%%%
% Introduction %
%%%%%%%%%

\section{Introduction}\label{Sec:Introduction}

\IEEEPARstart{G}{eolocation} (also called \emph{positioning}, \emph{localization}, and \emph{position location}) is deeply ingrained in our daily lives and has been studied by the scientific community for many years~\cite{Stein1981,Torrieri1984,Stein1993,Chan1994,Zekavat2012}. The driving force behind much of the research is a mandate by the Federal Communications Commission (FCC) requiring cellular network operators to locate those calling 911 to within certain accuracy requirements~\cite{Rappaport1996,Reed1998}. Until recently, these requirements included only outdoor location accuracies. Accordingly, the predominant way cellular network operators have met the requirements of the mandate is by relying on the Global Positioning System (GPS). In January of 2015, however, the FCC expanded its mandate to include a phase-in of indoor positioning requirements, citing that the bulk of emergency calls now originate indoors~\cite{FCC2015}.

While accurate outdoor positioning using GPS is reliably available under clear sky conditions, many years of positioning study have not yet resulted in equally reliable positioning methods in \emph{urban canyons} and indoor scenarios. Consider the classical example of locating emergency personnel (e.g., firefighters) indoors, where our current inability to provide accurate indoor positioning has recently been described as a dilemma ``where people [are] literally dying within a hundred feet of safety''~\cite{Harris2013}. Despite this, economic limitations dictate that global navigation satellite systems (GNSS), such as GPS, are likely to be the only widespread dedicated location systems in the foreseeable future. This necessitates a fallback to terrestrial cellular networks for geolocation in  situations where these prevalent location technologies are unavailable. Presently, however, no analytical approaches exist to study the fundamentals of localization performance in these networks.

It is the objective of this paper to introduce a new tractable model for studying terrestrial geolocation using cellular networks. The model uses concepts from point process theory~\cite{Kingman1993} and stochastic geometry~\cite{Haenggi2013} to lend tractability to the study of network localization, which is demonstrated through accurate and easy-to-use expressions characterizing the number of base stations (BSs) that are able to participate in a localization procedure, a key factor influencing the localization performance.

\subsection{Prior art and motivation}
Regardless of the technique, geolocation performance fundamentally depends upon three things: \emph{(i) the number of participating BSs, (ii) the geometry or locations of these BSs relative to the device being localized, and (iii) the accuracy of the positioning observations}. When these factors are deterministic, localization performance is typically studied analytically using the Cram\'{e}r-Rao lower bound (CRLB) (e.g.~\cite{Chang2004,Savvides2005,Guvenc2009}). Furthermore, since localization systems are being discussed, the topologies considered are often ones favorable to location estimation (such as by using a carefully controlled set of scenarios to avoid poor geometries, e.g.~\cite{Savvides2005}). This approach, however, is not appropriate for studying cellular localization performance because none of the aforementioned factors may reasonably be considered deterministic. Instead, the different possible locations of the mobile device within a network, the topology of the BSs relative to the mobile device, and random variations in the channel conditions will (i) affect the number of BSs capable of participating in the localization procedure, (ii) result in those BSs providing a mix of both good and bad topologies for localization, and (iii) define the strengths of the received BS signals which directly impact the accuracies of the positioning observations. Clearly, only a very large set of deterministic scenarios could provide insight into localization performance over all possible variations of the above. Even so, crafting realistic scenarios requires the assumption of specific system design parameters and propagation conditions, meaning that any insights from the results would be specific to the chosen assumptions, rather than being general. This makes a classic metric such as the CRLB not suitable for a general analysis.

In this work, our focus is on providing a truly general analysis of one metric which we will see is indicative of a network operator's ability to meet its mandated localization accuracy requirements--the number of successfully detectable BSs. For communication, it is ideal for a mobile device to receive a strong signal from its serving BS and weak signals from all its neighbors. For localization, on the other hand, it is increasingly beneficial for a mobile device to receive usable signals from more and more neighbors. Although communication is the primary concern of network operators, cellular system designers have no choice but to also cater to localization demands. As a result, the need to satisfy these conflicting goals has a name among designers--the \emph{hearability problem}, due to the fact that increasing hearability from neighboring cells for the purposes of positioning is contrary to the principles of cellular system design~\cite{R1-090053}. In order to provide a general analysis of cellular hearability, it is appropriate to characterize this metric across all possible network topologies and channel conditions for the aforementioned reasons.

One field of study in which random network topologies are considered when discussing localization is that of wireless or ad-hoc sensor networks. However, these studies differ from the cellular scenario we consider in that the sensor network literature almost certainly ignores interference and propagation effects, instead opting to define coverage using some fixed detection range~\cite{Savvides2005,Li2008,Gribben2010}, though in rare cases the received signal strength may be used to define a circular coverage region when shadowing is also considered~\cite{Daneshgaran2007}. In order to apply to cellular networks, our analysis differs by not only including interference, but also by employing a model enriched with additional cellular and localization-specific parameters.

Lastly, when cellular networks are specifically considered, they are typically studied using the popular hexagonal grid model, making analytical results difficult to come by. Instead, network designers resort to complex system-level simulations using a common set of agreed-upon evaluation parameters~\cite{R1-091443}, as is clear from the 3GPP standardization process~\cite{R1-091912}. This can also be seen in the limited amount of literature on hearability~\cite{Yap2002,Oborina2009,Vaghefi2014a}. Though simulation-based, the work in \cite{Yap2002} and \cite{Oborina2009} is most similar to ours in that the focus is on studying cellular hearability (in GSM and WCDMA~\cite{Yap2002} and LTE~\cite{Oborina2009}). Interference, coordination, and fractional load are considered, but in the end, one is left with only simulation results, from which it is not possible to gather general insights into how localization performance is impacted by system design parameters and propagation effects. This motivates the need for a tractable analytical model that can provide preliminary design insights and will either circumvent the need for simulations completely or limit the ranges of the simulation parameters. In~\cite{Andrews2011} and \cite{Dhillon2012}, such tractable, yet accurate, analytical models were developed to study the coverage and rate of single-tier and multi-tier cellular communication systems, respectively. It was shown that the spatial layout of the BSs in cellular networks can be tractably and realistically modeled using Poisson point processes (PPPs), especially as cellular networks continue to deviate from centrally-planned macro-cell networks to networks which include an increasing number of more arbitrary small-cell deployments such as picocells and femtocells. This opens the door to the use of powerful tools from stochastic geometry to derive closed form expressions for key performance metrics. Motivated by these advances, we develop a similar approach to lend tractability to the analysis of localization systems.

\subsection{Contributions}
\noindent The main contributions of this paper are as follows.

\noindent {\em A general model for studying localization in cellular networks}: In Section~\ref{Sec:SystemModel}, we build on the foundations of \cite{Andrews2011} and \cite{Dhillon2012} and present a tractable model specifically for studying localization in cellular networks. The model includes average network load, the ability for BSs participating in the localization procedure to coordinate transmissions (e.g., through \emph{joint scheduling} in LTE~\cite{R1-090053}), and is easily adapted to frequency reuse when frequency bands are assigned randomly. From \cite{Andrews2011}, it is easy to see that self-interference from the network often leads to poor coverage probabilities even when the reception of only a single signal is required. Since localization procedures require the successful reception of multiple signals and are thus even more demanding, localization systems typically necessitate the integration of signals over time in order to improve detectability. Our model assumes that this integration provides some fixed processing gain, while fast fading is excluded under the assumption that it is averaged out at the receiver. This exclusion of small-scale fading effects in our model agrees with the state-of-the-art 3GPP simulation models~\cite{R1-091443}.

\noindent {\em Definition and characterization of accuracy-related hearability metric}: We define a metric in order to study the number of BSs whose signals arrive with sufficient quality to successfully participate in a localization procedure. After presenting a clear relationship between the metric and positioning accuracy, upper bounds and approximations for the distribution of this metric are derived using the proposed model, resulting in expressions that are closed-form in some special cases of interest, while easy-to-compute integral expressions for the others. A dominant-interferer type analysis is employed~\cite{Heath2013,Weber2007}, whereby the strongest interference source is always treated exactly, leading to remarkably accurate approximations, nearly indistinguishable from truth.

\noindent {\em Design insights}: Lastly, we present results and draw out some interesting observations useful in the design of localization systems. We make evident the fact that localization performance is limited by the interference from the BSs participating in the localization procedure. It is shown that mitigating this interference through BS cooperation helps up to a point, beyond which the true gains in localization performance are realized through frequency reuse. The need to employ frequency reuse for accurate positioning has been observed previously, but only by first going through the process of running many complex system-level simulations. In LTE, for example, a reuse of six was deemed necessary for downlink TDOA positioning, a verdict reached after many simulations \cite{R1-091912}.

%%%%%%%%%%%
% Proposed Model %
%%%%%%%%%%%
\section{System Model}\label{Sec:SystemModel}
We now formally describe the proposed model. The key notation presented in this section is summarized in Table~\ref{Table:Notation}.

\begin{table}
\centering
\caption{Summary of Key Notation}
\label{Table:Notation}
\begin{tabular}{@{}c@{\hspace{2em}}l@{}}
\toprule
\textbf{Notation} & \textbf{Description} \\
\midrule
%$\alpha$ & Path loss exponent ($\alpha > 2$) \\
$\| \cdot \|$; $\alpha$ & $\ell_2$-norm; path loss exponent ($\alpha > 2$) \\
$\PPPa$; $\pppai$ & PPP of BS locations; Density of $\PPPa$ \\
$\origin$ & Origin (location of the typical user) \\
%$\pppai$ & Density of $\PPPa$ \\
$x_i$ ($\xa_i$) & Location of the $i$th closest (active) BS in $\PPPa$\\
$R_i$ ($\Ra_i$) & Distance of $x_i$ ($\xa_i$) from the origin \\
$\pg$ & Processing gain \\
$\threshold$ & Post-processing SINR required for a successful wireless link \\
$L$ & Number of participating BSs \\
$p$ & Activity factor among $L$ participating BSs \\
$q$ & Average network load for non-participating BSs \\
$\Omega$ & Number of participating BSs interfering with the $L\th$ BS \\
$\b(\theta, r)$ & A ball centered at $\theta$ with radius $r$ \\
$\vert \ncalA \vert$; $\ncalA^c$ & The Lebesgue measure of region $\ncalA$; the complement of $\ncalA$ \\
%$\ncalA^c$ & The complement of region $\ncalA$ \\
$\ncalB \subseteq \ncalA$ & Region $\ncalB$ is contained in $\ncalA$ \\
$\ncalA \backslash \ncalB$ & Region $\ncalA$ excluding its overlap with $\ncalB$; $\ncalA \backslash \ncalB = \ncalA \cap \ncalB^c$ \\
$\indicator(\cdot)$ & Indicator function; % $\indicator(\cdot) \in \{0,1\}$; 
$\indicator(\ncalA) =1$ if $\ncalA$ is true, $0$ otherwise\\
\bottomrule
\end{tabular}
\end{table}

\subsection{Spatial base station layout}
The locations of the BSs are modeled using a homogeneous PPP $\PPPa \in \R^2$ with density $\pppai$~\cite{Haenggi2013}. Due to the stationarity of a homogeneous PPP, the device to be localized is assumed to be located at the origin $\origin$. If the interference is treated as noise at the receiver, the most appropriate metric that captures link quality is the signal-to-interference-plus-noise ratio (SINR). For the link from some BS $x \in \PPPa$ to the origin, the SINR can be expressed as:
\begin{equation}
\SINR_x = \frac{P \ncalS_{x} \|x\|^{-\alpha}}{\sum_{\substack{y \in \PPPa\\y \neq x}} P \ncalS_{y} \|y\|^{-\alpha} + \sigma^2},
\label{Eq:SINR_k}
\end{equation}
where $P$ is the transmit power, $\ncalS_z$ denotes the independent shadowing affecting the signal from BS $z$ to the origin, $\alpha > 2$ is the pathloss exponent, and $\N$ is the noise variance. Note that \eqref{Eq:SINR_k} represents the SINR prior to any processing gain, yet as will be evident in the sequel, positioning systems typically have to work at lower target SINRs, thereby necessitating the need for some form of processing gain. In general, the post-processing SINR will include some multiplicative factor $\pg$ representing the processing gain, which depends upon system parameters (e.g. integration time) and is assumed to average out the effect of small scale fading. This is the reason why the SINR expression in \eqref{Eq:SINR_k} does not contain a fast fading term, which is consistent with current models for evaluating cellular positioning performance~\cite{R1-091443}. Those conversant with stochastic geometry-based analyses of wireless networks will recognize that the absence of fast fading on the serving link, specifically one from the exponential family of distributions (e.g., Rayleigh fading), adds to the technical challenge of a localization system analysis. Lastly, although we consider interference as noise, the gain from some multiuser detection technique, such as interference cancellation, can be abstracted into the proposed model using the processing gain parameter $\pg$~\cite{Al-Tameemi2014}.

\subsection{Selection of participating base stations}\label{Sec:BSSelection}

As is common in communication system analyses, serving BSs are selected according to the \emph{strongest BS association policy}, measured using average signal strength, which typically includes long time-scale effects such as shadowing and pathloss. Now, consider that we desire to use a total of $L$ BSs for positioning. Thus, we assume that the $L$ BSs which provide the highest average received power make up the set of \emph{participating BSs}. Their successful participation, however, is not guaranteed as there is some post-processing SINR threshold $\threshold$ (or equivalently, pre-processing SINR threshold $\threshold/\pg$) above which the signals from the participating BSs must arrive in order for them to successfully contribute to the localization procedure. In the absence of shadowing, the set of potential BSs simply corresponds to the set of the $L$ nearest BSs. When shadowing is considered and BSs are selected according to average signal strength, the effect of shadowing may be absorbed as a perturbation in the locations of the BSs provided that the fractional moment $\E\left[\ncalS_z^{2/\alpha} \right] < \infty$~\cite{Blaszczyszyn2013a,Dhillon2014}. Thus, when this condition is fulfilled and without loss of generality, we define a new equivalent PPP with density $\pppai\,\E\left[\ncalS_z^{2/\alpha} \right]$. In doing so, we ensure that the strongest BS association policy in the original PPP is equivalent to the \emph{nearest BS association policy} in the transformed PPP. For notational simplicity, we will continue to represent the transformed PPP by $\PPPa$ with density $\pppai$, under the assumption that if shadowing is present, it is already reflected in the density of $\PPPa$. Note that the condition on the fractional moment is fairly mild and will almost always be satisfied by the distributions of interest, including the most common assumption of log-normal shadowing with finite mean and standard deviation~\cite{Dhillon2014}. As a final note, it has been shown that for SINR, shadowing causes even more regular network models, such as the common hexagonal lattice, to behave like a PPP model~\cite{Blaszczyszyn2013,Keeler2014,Blaszczyszyn2015}. This further validates the use of a PPP to model BS locations.

\subsection{Base station coordination and network load}
BS coordination and network load are modeled through two activity factors, $p$ and $q$, respectively. During a localization procedure, the $L$ participating BSs coordinate with each other and attempt to blank their own transmissions while the others are active, but are unable to do so with probability $p$ due to network traffic demands. The remaining BSs are assumed active with probability $q$, which is simply the average network load~\cite{Dhillon2013}. For simplicity of exposition, let us order the BSs in the now shadowing-transformed $\Phi$ in terms of increasing distance from the origin such that the location of the $k^\text{th}$ farthest BS from the origin is denoted by $x_k \in \Phi$. We now enrich the previous SINR expression in \eqref{Eq:SINR_k} for the signals arriving from the participating BSs (i.e., $x_k$ for $k \in \{1, \ldots, L\}$) by including BS coordination and network load as follows:
\begin{equation}
\SINR_{k}(L) = \frac{P \|x_k\|^{-\alpha}}{\displaystyle \sum_{\substack{i=1\\i \neq k}}^L a_i P \|x_i\|^{-\alpha} \! + \!\!\!\sum_{j=L+1}^\infty b_j P \|x_j\|^{-\alpha} + \sigma^2},
\label{Eq:SINR_k_L}
\end{equation}
where $a_i$ and $b_j$ are independent indicator random variables (fixed throughout the localization procedure) modeling the thinning of the interference field by taking on value 1 with probabilities $p$ and $q$, respectively. Note that $\SINR_k$ is now a function of $L$ because of the potentially different activity factors for the participating BSs and the rest of the network. The number of active participating BSs interfering with the $L^\text{th}$ BS is $\Omega = \sum_{i=1}^{L-1} a_i$, which is a binomial random variable with probability mass function
\begin{equation}
\pomega = \binom{L-1}{\omega} p^\omega (1-p)^{L-1-\omega},\qquad \omega \in \{0, \ldots, L-1\}.
\end{equation}

\subsection{Base station range distributions}

\begin{figure}
\centering
\includegraphics[width=.8\columnwidth]{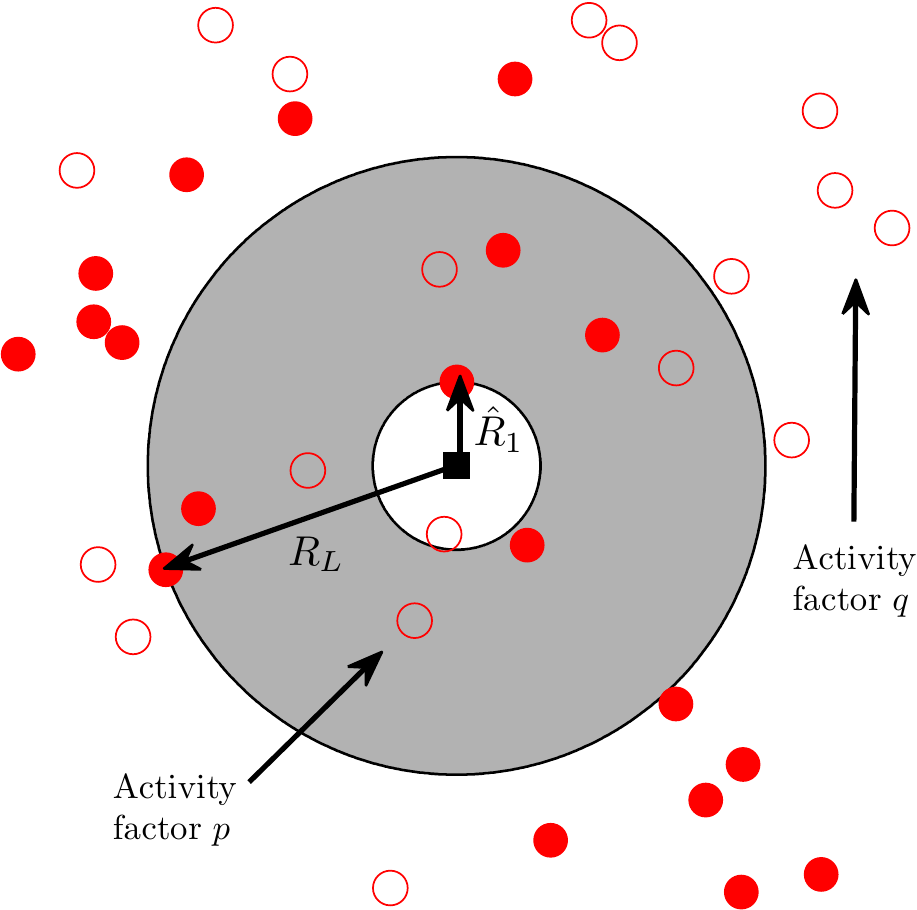}
\caption{\textsc{The proposed model}. The mobile device is denoted by the filled square and actively transmitting BSs are denoted by solid circles.}
\label{Fig:Model_Example}
\end{figure}

From \eqref{Eq:SINR_k_L}, it is clear that the SINRs are dependent on the distances of the BSs from the origin, rather than the locations themselves. Thus, it is worthwhile to characterize these. Let $R_k = \|x_k\|$ and $\Ra_m = \|\xa_m\|$, where $\xa_m$ is now the $m^\text{th}$ farthest \emph{active} BS. This is illustrated in Figure~\ref{Fig:Model_Example} for the closest active ($\Ra_1$) and $L^\text{th}$ closest overall ($R_L$) BSs. The distribution of $R_L$ is~\cite{Haenggi2005}:
\begin{align}
	f_{R_L}(r)
	&= e^{-\pppai \pi r^2 } \frac{2 (\pppai \pi r^2)^L}{r \GammaL}.
\end{align}
Now, conditioned on $\Omega$ and the distance of the $L^\text{th}$ BS from origin, we present a useful lemma for understanding the distribution of the locations of the $\Omega$ active participating BSs. 

\begin{lemma}\label{Lemma:CircularBPP}
Conditioned on the distance of the $L^\text{th}$ BS from the origin, $R_L$, the $\Omega$ active BSs closer to the origin than the $L^\text{th}$ BS are independent and identically distributed over $\b(\origin, R_L)$ with the location of each BS sampled uniformly at random from $\b(\origin, R_L)$.
%distributed according to a binomial point process (BPP) (i.e., in a uniformly random manner) inside the circle of radius $R_L$ centered at the origin.
\end{lemma}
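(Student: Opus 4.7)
The plan is to reduce the claim to two classical properties of the homogeneous Poisson point process: the \emph{conditional uniformity property}, by which the points of a PPP inside a bounded set, conditioned on their total number, are i.i.d.\ uniform on the set, and the \emph{independent thinning property} combined with the exchangeability of the Bernoulli activity marks $\{a_i\}$ introduced in \eqref{Eq:SINR_k_L}.

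First, I would condition on $R_L = r$. By the definition of $R_L$ as the distance to the $L\th$ closest point of $\PPPa$, the open ball $\b(\origin, r)$ contains exactly $L-1$ points of $\PPPa$ almost surely (with the $L\th$ point lying on $\partial \b(\origin, r)$). The conditional uniformity property of $\PPPa$ then gives that, conditioned on $R_L = r$, these $L-1$ inner points are distributed as an i.i.d.\ sample from the uniform distribution on $\b(\origin, r)$. Because the argument is scale-free in $r$, no Jacobian manipulation is needed: the density $\pppai$ simply cancels in the conditioning.

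Next, I would bring in the activity marks. The indicators $a_1,\dots,a_{L-1}$ are by construction i.i.d.\ Bernoulli$(p)$ and independent of both the BS locations and of $R_L$, so $\Omega = \sum_{i=1}^{L-1} a_i$ is Binomial$(L-1, p)$, matching the $\pomega$ stated in the paper. Conditional on $\Omega = \omega$, exchangeability of the marks implies that the set of active indices is a uniformly random size-$\omega$ subset of $\{1,\dots,L-1\}$, independent of the positions. Composing this with the i.i.d.\ uniform law of the inner points from the previous step then yields that the $\omega$ active locations are themselves i.i.d.\ uniform on $\b(\origin, r)$, for every realization of $R_L$, which is the stated conclusion. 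There is no serious technical obstacle: the lemma is a direct chaining of two textbook PPP facts (e.g.\ from \cite{Kingman1993,Haenggi2013}), and the only point meriting care is verifying that the activity marks are independent of $\PPPa$ before the exchangeability step is invoked, which holds by the model's construction.
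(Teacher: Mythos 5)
Your proposal is correct, and it reaches the same underlying fact as the paper, but it gets there by citation where the paper gives a first-principles computation, so a brief comparison is worthwhile. The paper's Appendix~\ref{Proof:AnnularBPP} proves conditional uniformity directly: it introduces thin annular shells of widths $\delta$ and $\varepsilon$ around $x_i$ and $x_j$, uses the independence of Poisson counts on disjoint regions, and lets the shell widths tend to zero to obtain $\P(N_D=n\mid\cdot)=\binom{\aleph}{n}(|D|/|C|)^n(1-|D|/|C|)^{\aleph-n}$, which is exactly the binomial characterization of a uniform BPP; Lemma~\ref{Lemma:CircularBPP} then follows by taking $i=0$, $j=L$. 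You instead invoke the conditional uniformity property of the PPP as a known result and append an explicit thinning/exchangeability argument for the activity marks $a_1,\dots,a_{L-1}$. Two remarks. First, your mark-handling step is actually \emph{more} explicit than the paper's: the published proof establishes the BPP structure for all $L-1$ interior points and leaves implicit that an independently thinned subset, conditioned on $\Omega=\omega$, inherits the i.i.d.\ uniform law; your observation that the active index set is a uniformly random size-$\omega$ subset independent of positions closes that small gap cleanly. Second, the one place you are slightly glib is the phrase ``the density $\pppai$ simply cancels in the conditioning'': the event $\{R_L=r\}$ is a null event, so conditioning on it formally requires either the limiting-shell construction the paper uses or an appeal to Palm calculus; what actually justifies your step is that the PPP restricted to $\b(\origin,r)$ is independent of its configuration on the boundary shell and exterior, so conditioning on $R_L=r$ only fixes the interior count at $L-1$. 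This is standard and not a genuine gap, but if you were writing the proof out in full you would need to supply exactly the $\delta,\varepsilon\to 0$ argument that the paper gives.
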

\begin{proof}
See Appendix~\ref{Proof:AnnularBPP}.
\end{proof}

Note that a point process is which a fixed number of points are distributed uniformly at random independent of each other in a given compact set is termed as a {\em uniform Binomial Point Process (BPP)}~\cite{Haenggi2013}. Therefore, the above Lemma simply states that conditioned on $R_L$, $\Omega$ active BSs closer to the origin than the $L^\text{th}$ BS form a BPP on $\b(\origin, R_L)$.

\noindent Using Lemma~\ref{Lemma:CircularBPP}, we obtain the following distribution for the most dominant (i.e., closest active) interferer, given the distance of the $L^\text{th}$ BS and $\Omega$. This is formally presented below.
\begin{lemma}\label{Lemma:FRa1}
The cumulative distribution function of the closest active BS distance $\Ra_1$ given $R_L$ and $\Omega$ is
\begin{equation}\label{Eq:F_Ra_1_given_R_L}
F_{\Ra_1|R_L,\Omega}(r | R_L,\Omega) = 1 - \left(\frac{R_L^2 - r^2}{R_L^2}\right)^\Omega, \ \ \ 0 \leq r \leq R_L.
\end{equation}
\end{lemma}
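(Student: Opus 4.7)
The plan is to reduce the computation to a straightforward order-statistics argument over a uniform Binomial Point Process, exploiting Lemma~\ref{Lemma:CircularBPP} which is already at our disposal. Conditioning on both $R_L$ and $\Omega$, Lemma~\ref{Lemma:CircularBPP} tells us that the $\Omega$ active participating BSs closer to the origin than $x_L$ are i.i.d.\ and uniformly distributed on the disk $\b(\origin, R_L)$. The event $\{\Ra_1 > r\}$ is then exactly the event that all $\Omega$ of these uniform points lie outside $\b(\origin, r)$, so the CDF follows by independence and elementary geometry.

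More concretely, I would proceed in three short steps. First, for a single point $X$ drawn uniformly in $\b(\origin, R_L)$, compute the marginal distance distribution from the ratio of areas,
\begin{equation*}
\P(\|X\| \le r \mid R_L) = \frac{\pi r^2}{\pi R_L^2} = \frac{r^2}{R_L^2}, \qquad 0 \le r \le R_L,
\end{equation*}
so that $\P(\|X\| > r \mid R_L) = (R_L^2 - r^2)/R_L^2$. Second, invoke Lemma~\ref{Lemma:CircularBPP} and the conditional independence of the $\Omega$ active BSs to write
\begin{equation*}
\P(\Ra_1 > r \mid R_L, \Omega) = \left(\frac{R_L^2 - r^2}{R_L^2}\right)^{\Omega}, \qquad 0 \le r \le R_L.
\end{equation*}
Third, take complements to obtain the stated CDF.

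The only subtle point worth flagging is the boundary case $\Omega = 0$: the formula then gives $F_{\Ra_1 \mid R_L, \Omega}(r \mid R_L, 0) = 0$ for every $r \in [0, R_L]$, which is consistent with the fact that when no participating BS closer than $x_L$ is active, $\Ra_1$ simply falls outside $\b(\origin, R_L)$ and contributes no mass to the interval of interest. Aside from this bookkeeping observation, there is no genuine obstacle in the argument, since Lemma~\ref{Lemma:CircularBPP} has already absorbed the PPP-specific technicalities and reduces the problem to the distance distribution of the minimum of $\Omega$ i.i.d.\ uniform points in a disk.
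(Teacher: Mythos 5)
Your proposal is correct and follows essentially the same route as the paper: condition on $R_L$ and $\Omega$, invoke Lemma~\ref{Lemma:CircularBPP} to treat the $\Omega$ active BSs as i.i.d.\ uniform points on $\b(\origin, R_L)$, compute the single-point survival probability as a ratio of areas, raise it to the power $\Omega$ by independence, and take the complement. The extra remark about the $\Omega=0$ boundary case is harmless bookkeeping that the paper sidesteps by stating the proof for $\Omega \geq 1$.
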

\begin{proof}
See Appendix~\ref{Proof:FRa1}.
\end{proof}

\noindent Let $\ncalA = \b(\origin, R_L) \backslash \b(\origin, \Ra_1)$, where $\b(\btheta, r)$ represents a ball of radius $r$ centered at $\btheta$. We have the following lemma characterizing the distribution of the $\Omega-1$ remaining active BSs in the annular region $\ncalA$.

\begin{lemma}\label{Lemma:AnnularBPP}
Conditioned on $\Ra_1$ and $R_L$, the $\Omega-1$ active BSs located inside the annular region $\b(o, R_L) \backslash \b(o, \Ra_1)$ are distributed according to a uniform BPP.
\end{lemma}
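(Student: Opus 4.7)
The plan is to deduce this lemma directly from Lemma~\ref{Lemma:CircularBPP} combined with a standard conditioning-on-the-minimum argument for i.i.d.\ random variables. By Lemma~\ref{Lemma:CircularBPP}, conditionally on $R_L$ and $\Omega$, the $\Omega$ active BSs inside $\b(\origin, R_L)$ are i.i.d.\ uniform on that ball; call them $X_1, \ldots, X_\Omega$. Then $\Ra_1 = \min_i \|X_i\|$, and the task reduces to describing the joint law of the remaining points after further conditioning on $\Ra_1$.

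First I would partition the event according to which index $k$ realizes the minimum. By exchangeability of the $X_i$, each of the $\Omega$ labelings is equally likely, so it suffices to analyze the conditional law of $\{X_i : i \neq k\}$ given $\|X_k\| = r$ and $\|X_i\| > r$ for all $i \neq k$. Because the $X_i$ are mutually independent, conditioning on events defined separately on each coordinate factorizes, so $\{X_i : i \neq k\}$ remain independent and each has the conditional law of a uniform point on $\b(\origin, R_L)$ restricted to $\{\|\cdot\| > r\}$. Since restricting a uniform distribution on a set to a measurable subset yields the uniform distribution on that subset, each such $X_i$ is uniform on the annulus $\ncalA = \b(\origin, R_L) \setminus \b(\origin, r)$. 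Averaging over $k$ leaves the conclusion unchanged, and this is exactly the statement that the $\Omega - 1$ remaining active BSs form a uniform BPP on $\ncalA$.

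The only technical point worth being careful about is that $\{\Ra_1 = r\}$ is a zero-probability event, so the conditional law above must be made rigorous. The cleanest route is to work at the level of densities: write the joint density of $(X_1, \ldots, X_\Omega)$ as the constant $(\pi R_L^2)^{-\Omega}$ on $\b(\origin, R_L)^\Omega$, change variables by letting $X_{k^\ast}$ denote the nearest point (with $k^\ast$ ranging over the $\Omega$ labelings), and integrate against an infinitesimal event $\{r \leq \Ra_1 < r + \mathrm{d}r\}$. Dividing by the marginal density of $\Ra_1$ that is implicit in Lemma~\ref{Lemma:FRa1} shows that the conditional density of the remaining $\Omega - 1$ coordinates equals $|\ncalA|^{-(\Omega-1)}$ on $\ncalA^{\Omega-1}$, which is the BPP density. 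This normalization check is the only place where real calculation is needed; everything else is an immediate consequence of exchangeability and independence.
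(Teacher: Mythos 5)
Your proof is correct, but it follows a genuinely different route from the paper's. The paper proves Lemmas~\ref{Lemma:CircularBPP} and~\ref{Lemma:AnnularBPP} together with a single computation carried out directly on the PPP: it conditions on the $i$th and $j$th points of $\PPPa$ lying in infinitesimal annuli, uses the independence of Poisson counts on disjoint regions to show that the number of points falling in any subset $D$ of the annulus $C$ between them is binomial with success probability $|D|/|C|$, and then specializes to $(i,j)=(0,L)$ for Lemma~\ref{Lemma:CircularBPP} and $(i,j)=(1,L)$ for Lemma~\ref{Lemma:AnnularBPP}. You instead take Lemma~\ref{Lemma:CircularBPP} as the starting point and run a finite-dimensional order-statistics argument: the $\Omega$ active points are i.i.d.\ uniform on $\b(\origin,R_L)$, and conditioning on which index attains the minimum norm and on the value of that minimum leaves the remaining $\Omega-1$ points independent and uniform on the annulus, with the null conditioning event handled at the level of densities (your implicit marginal for $\Ra_1$ is consistent with Lemma~\ref{Lemma:FRa1}, so the normalization checks out). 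Your route has a modest advantage in fidelity to the statement being proved: Lemma~\ref{Lemma:AnnularBPP} conditions on $\Ra_1$, the distance to the closest \emph{active} BS, which is generally not $x_1$; the paper's specialization $i=1$, $j=L$ strictly conditions on the closest point of $\PPPa$ and leaves the interaction with the independent thinning implicit, whereas your argument works with the $\Omega$ active points from the outset. What the paper's approach buys in exchange is generality and economy: it establishes the annular BPP property between any two order statistics of $\PPPa$ in one stroke, without needing Lemma~\ref{Lemma:CircularBPP} as an input.
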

\begin{proof}
See Appendix~\ref{Proof:AnnularBPP}.
\end{proof}

%%%%%%%%%%%%%%%
% Localization Performance %
%%%%%%%%%%%%%%%

\section{Localization Performance}
As mentioned in Section~\ref{Sec:Introduction}, localization performance fundamentally depends on the number of positioning observations, their accuracies, and the locations of the participating BSs relative to the device being localized. These three factors are all highly interdependent. For instance, both the number of BSs whose signals arrive with sufficiently-high SINRs to participate in the localization procedure and the quality of their positioning observations are strongly dependent on the interference field, which is itself driven by the realized network deployment. Thus, a full characterization of localization performance would take into account all possible \emph{geometric conditionings} of the BSs over all possible channel conditions, a task which is extremely difficult. By taking into account even just the number of participating BSs averaged over the channel conditions, however, valuable insights into localization performance may be gleaned.

\subsection{Base station participation and geolocation performance}

A clear relationship exists between the number of BSs involved in positioning and a cellular network operator's ability to meet its geolocation performance requirements \cite{R1-091912}. In order to see this, consider handset-based TDOA positioning, such as Observed Time-Difference-Of-Arrival or OTDOA in LTE. The FCC E911 mandate requires that operators employing handset-based localization estimate the locations of mobile devices to within an accuracy of 50 m 67\% of the time and 150 m 90\% of the time \cite{FCCE911CFR}. Using the model described in the previous section with a BS deployment density equivalent to that of an infinite hexagonal grid with 500 m intersite distances, a shadowing standard deviation of 8~dB, and a pathloss exponent of $\alpha = 3.76$, an example localization system was simulated requiring a pre-processing SINR $\threshold/\pg=-13$~dB for successful signal detection \cite{Fischer2014}. A fully-loaded network was assumed with no BS coordination (i.e., $p=q=1$) and clock inaccuracies at each BS modeled normally with a standard deviation of 100 ns. The SINR threshold determines which BSs may successfully participate in the localization procedure, while an individual ranging observation is modeled as the sum of an exponentially-distributed non-line-of-sight bias with a mean of 30 m and a zero-mean Gaussian random variable with a variance based on the exact SINR and calculated using the well-known TOA ranging CRLB~\cite{Urkowitz1983} using a signal bandwidth of 10~MHz. Lastly, positions were estimated using the algorithm presented in \cite{Chan1994} with the strongest BS selected as the reference BS. The resulting localization performance is shown in Figure~\ref{Fig:fccE911motivation} for several minimum BS hearabilities. From this particular example, a network operator might conclude that, with universal frequency reuse on the positioning signals, a hearability of 6 BSs would be required to be in compliance with the requirements of the FCC E911 mandate.

%\begin{figure}%
%\centering%
%\begin{minipage}{0.475\textwidth}%
%\centering%
%\includegraphics[width=\figurewidth]{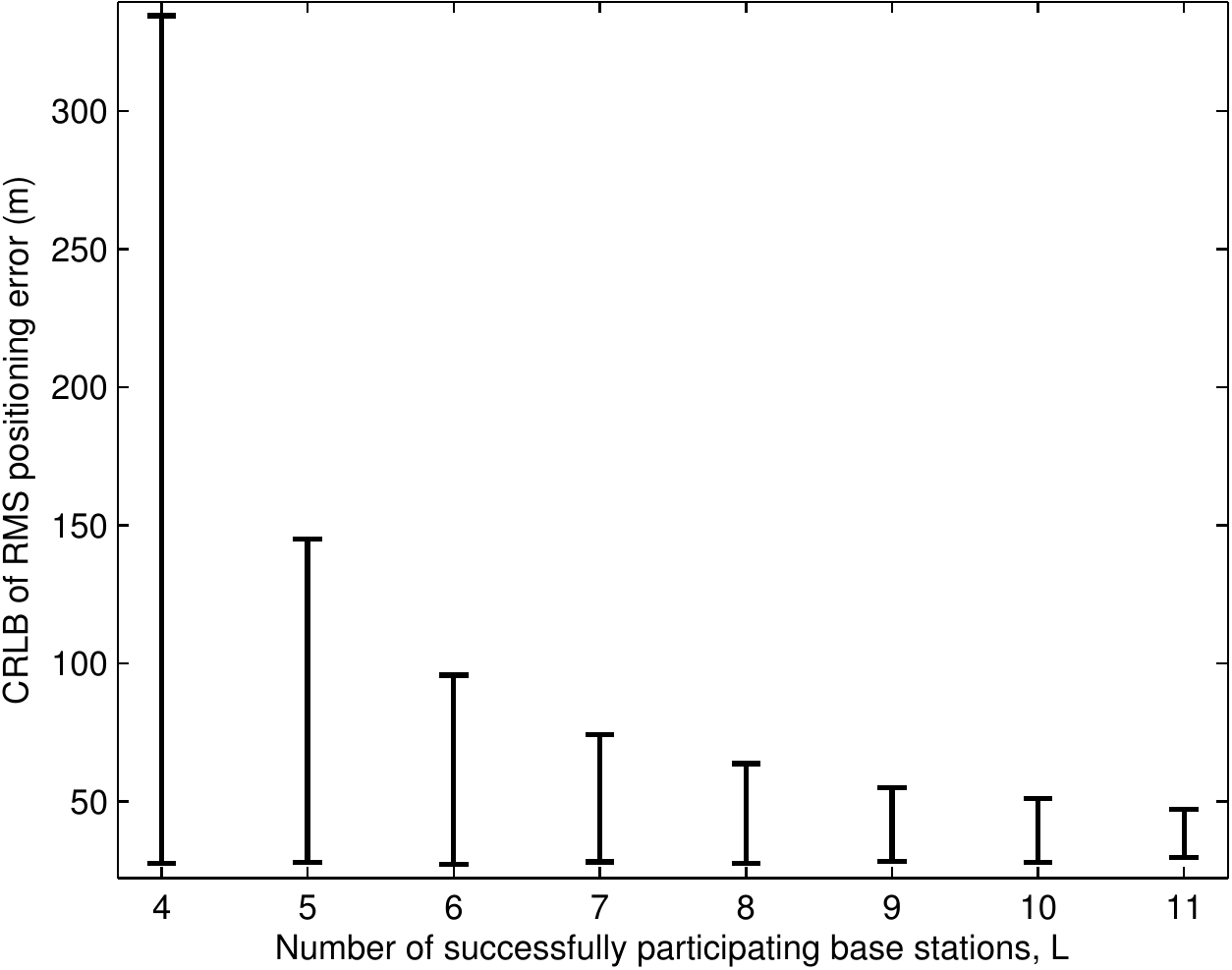}%
%\caption{\textsc{Relationship between localization accuracy and BS participation}: Ranges of the lower $95^\text{th}$ percentiles of positioning accuracies using TDOA positioning.}%
%\label{Fig:CRLBRange95_v_L}%
%\end{minipage}\hfill%
%\begin{minipage}{0.475\textwidth}%
%\centering%
%\includegraphics[width=\figurewidth]{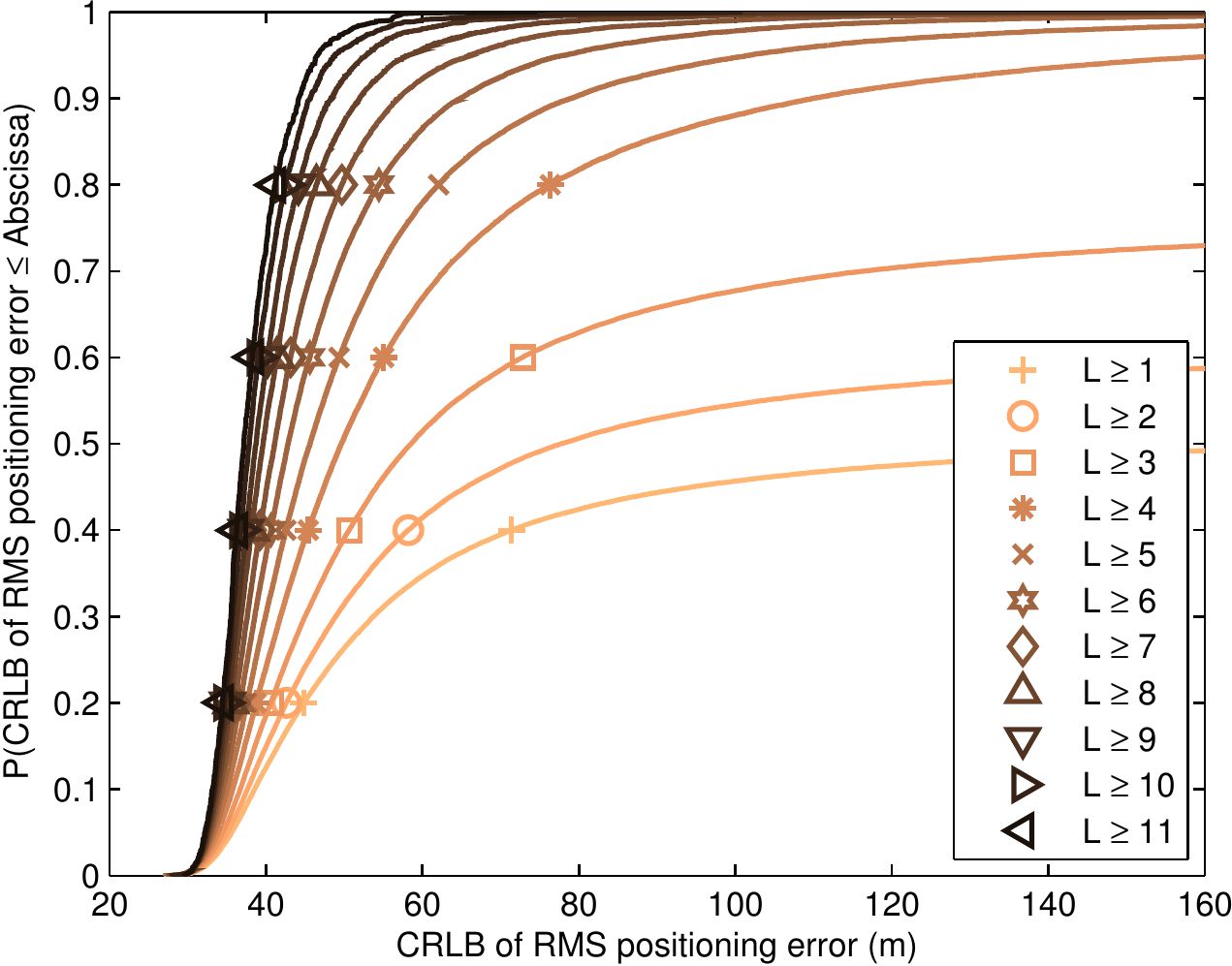}%
%\caption{\textsc{Impact of minimum BS participation}: CDFs of TDOA positioning error when various minimum numbers of successful BS connections are available.}%
%\label{Fig:CRLB_CDFs_v_MinL}%
%\end{minipage}%
%\end{figure}

%\begin{figure}
%\centering
%\includegraphics[width=\figurewidth]{CRLBRange95_v_L}
%\caption{\textsc{Relationship between localization accuracy and BS participation}: Ranges of the lower $95^\text{th}$ percentiles of achievable positioning accuracies using TDOA positioning.}
%\label{Fig:CRLBRange95_v_L}
%\end{figure}
%
%\begin{figure}
%\centering
%\includegraphics[width=\figurewidth]{CRLB_CDFs_v_MinL}
%\caption{\textsc{Impact of minimum BS participation}: CDFs of TDOA positioning error when various minimum numbers of successful BS connections are available.}
%\label{Fig:CRLB_CDFs_v_MinL}
%\end{figure}

%\begin{figure}
%\centering
%\includegraphics[width=\figurewidth]{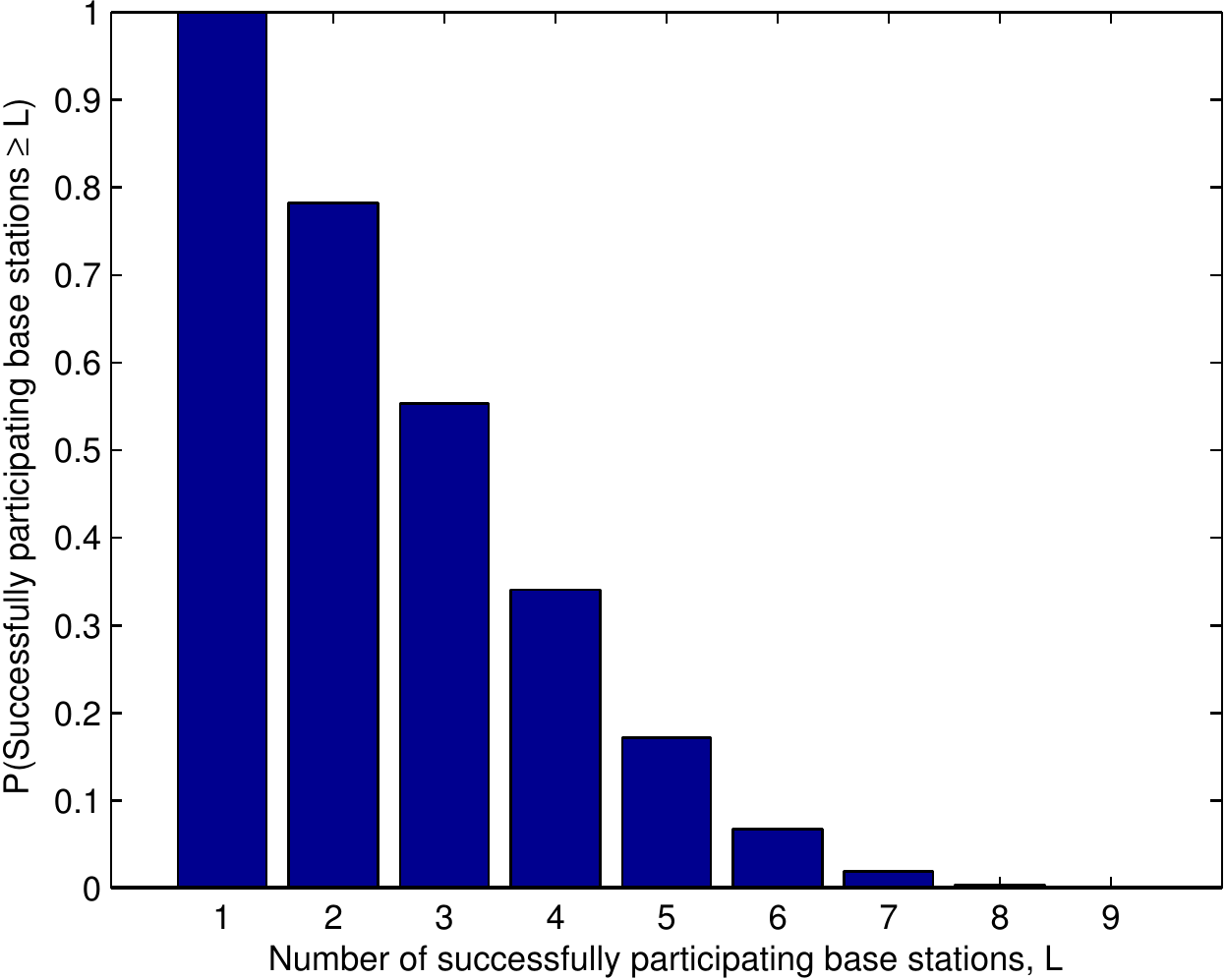}
%\caption{\textsc{Decreasing availability of BSs}: A histogram illustrating the decreasing probability of attaining an increasing number of successful BS connections.}
%\label{Fig:MinL_Availability_v_L}
%\end{figure}

\begin{figure}
\centering
\includegraphics[width=\figurewidth]{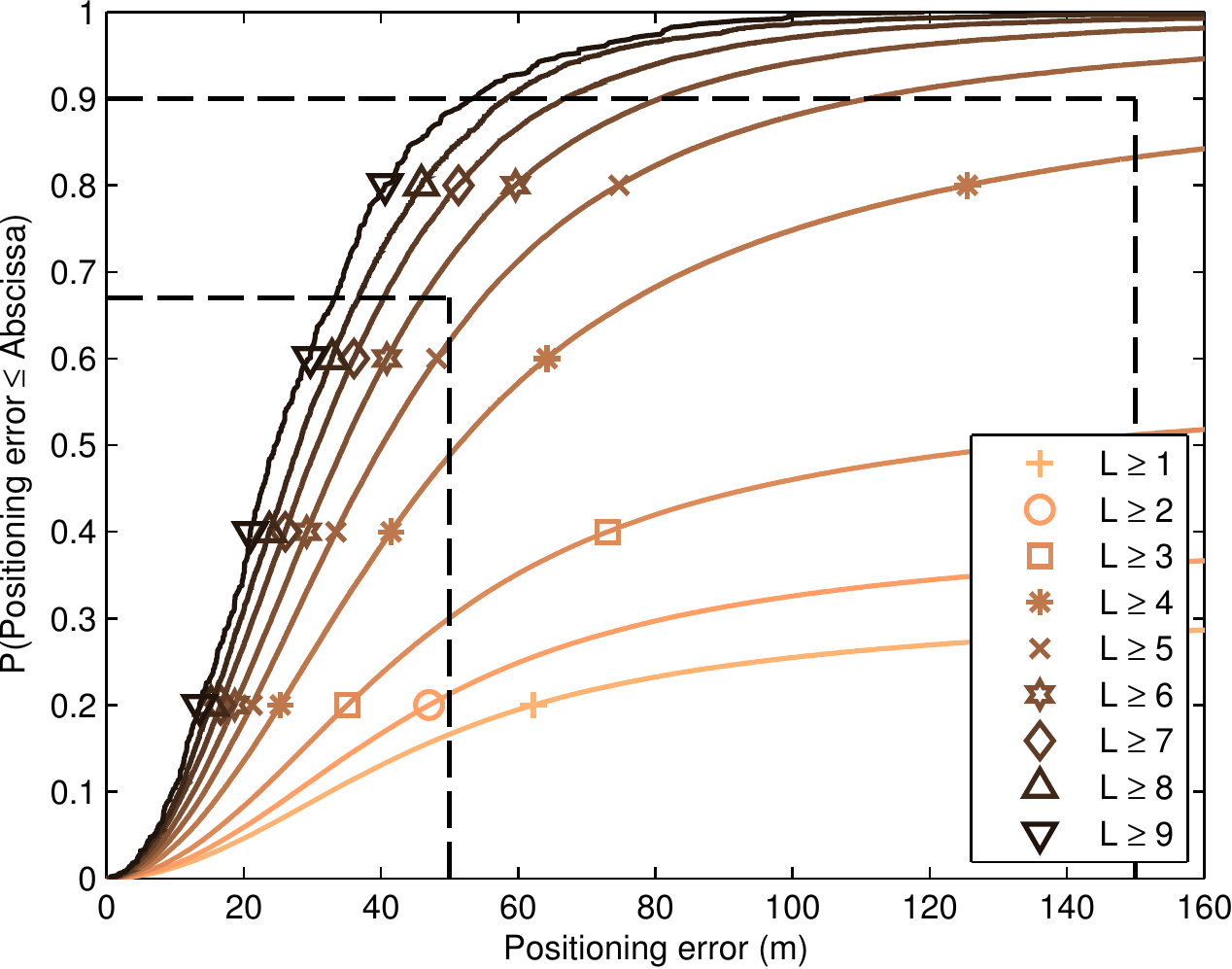}
\caption{\textsc{Impact of BS participation on meeting FCC E911 requirements}: Positioning performance of an OTDOA-like system with settings derived from \cite{Fischer2014} (see text). A performance curve must pass through \emph{both} horizontal dashed lines in order to meet the FCC mandate.}
\label{Fig:fccE911motivation}
\end{figure}

Motivated by this connection, we now formally propose a metric to study the number of BSs able to successfully participate in a localization procedure for a given target SINR $\threshold$ and processing gain $\pg$.

\begin{definition}[Participation metric]
For a given BS deployment $\pppa \in \PPPa$, let $\Upsilon$ represent the maximum number of selectable BSs such that all successfully participate in a localization procedure. This can be mathematically defined as:
\begin{equation}
\Upsilon = \argmax_\ell\ \ell \times \prod_{k=1}^\ell \indicator \left(\SINR_k(\ell) \geq \frac{\threshold}{\gamma} \right).
\end{equation}
\end{definition}

Note that this metric differs from a traditional metric such as the CRLB in that it is not directly tied to a specific positioning accuracy value, though it is still closely related. Its advantage lies in its tractability when characterized over all the possible BS topologies, whereas the CRLB, on the other hand, does provide direct insight into achievable positioning accuracy, but only for \emph{deterministic networks}, quickly becoming intractable to characterize over all BS topologies.

\subsection{Participation from a desired number of base stations}
While generally better localization performance can be attained by increasing BS participation, the probability of obtaining some desired number of successful BS connections is known to decrease sharply as the number increases. How this fundamental factor influencing localization system performance changes across different parameter sets is exactly the objective of our study. Thus, it is desirable to understand exactly how $\Upsilon$ is impacted by the network design parameters and propagation effects, averaged over the geometric conditions and channel variations.

\begin{definition}[$L$-localizability probability]
For a given $\pppa \in \PPPa$, a mobile device is said to be $L$-localizable if at least $L$ BSs may successfully participate in the localization procedure. The probability of this occurring is simply:
\begin{align}
\pl
&= \P \left( \Upsilon \geq L \right)
= \E \left[\prod_{k=1}^L \indicator \left(\SINR_k(L) \geq \frac{\threshold}{\pg} \right) \right]. \label{Eq:pl}
\end{align}
%Note that for $L=1$, this simply gives the downlink coverage probability.
\end{definition}

It may be insightful to step back and consider a specific application of $\pl$. Note that the first objective in any location system is to make sure that the device to be located can detect positioning signals from a sufficient number of BSs (i.e., that it is \emph{localizable}). By this, we simply mean that an estimate of the device's location can be found without ambiguity. In the noiseless case, this means that there can only be one solution. In the noisy case, this means that there is a single global minimum to the appropriate cost function. Commonly-accepted minimum values of $L$ for the unambiguous operation of a localization system in the $\R^2$ plane are 2, 3, and 4 for AOA, TOA/RSS, and TDOA, respectively. Thus, for example, $\mathtt{P_4}$ can be equivalently thought of as the coverage probability of TDOA positioning.

For two edge cases of interest, that with no BS coordination ($p=1$) and that with perfect BS coordination ($p=0$), it is  straightforward to infer from \eqref{Eq:SINR_k_L} that
\begin{equation}\label{Eq:SINRrelationship}
\indicator \left(\SINR_k(L) \geq \threshold \right) \geq \indicator \left(\SINR_l(L) \geq \threshold \right)
\end{equation}
for all $k \leq l \leq L$. This simply means that the received SINR from a BS farther from the mobile device is lower than that of a closer BS, implying that the probability of $L$-localizability in \eqref{Eq:pl} can be equivalently expressed as
\begin{equation}\label{Eq:pl_2}
\pl = \E\left[\indicator \left(\SINR_L(L) \geq \threshold/\pg \right) \right].
\end{equation}
With partial BS coordination ($0 < p < 1$), the relationship in \eqref{Eq:SINRrelationship} does not hold in certain corner cases. These rare occurrences have a minimal impact on our analysis, and we proceed by using the expression for $\pl$ in \eqref{Eq:pl_2} for all $p$. In the numerical results, we will validate that \eqref{Eq:pl_2} is in fact an accurate approximation of $\pl$ for all $p$ by comparing with the true $\pl$, which jointly considers the SINRs of all participating BSs.

\subsection{\texorpdfstring{A simple bound on $\pl$}{A simple bound on P\_L}}\label{Sec:pl_ubound}
We now derive an upper bound on $L$-coverage probability by (i) considering the closest interferer's location exactly, (ii) placing the remaining $\Omega-1$ active interior interferers at the edge of the field of participating BSs, and (iii) ignoring thermal noise and all distant BSs. We illustrate this setup in Figure~\ref{Fig:UpperBound_Using_Model_Example}, which applies the above considerations to the network realization of Figure~\ref{Fig:Model_Example}. Note that the interference from all BSs beyond the $L^\text{th}$ BS is ignored, while the active interior BSs have been pushed out to the same distance as the $L^\text{th}$ BS. The dominant interferer, however, has been left untouched.

\begin{figure}
\centering
\includegraphics[width=.8\columnwidth]{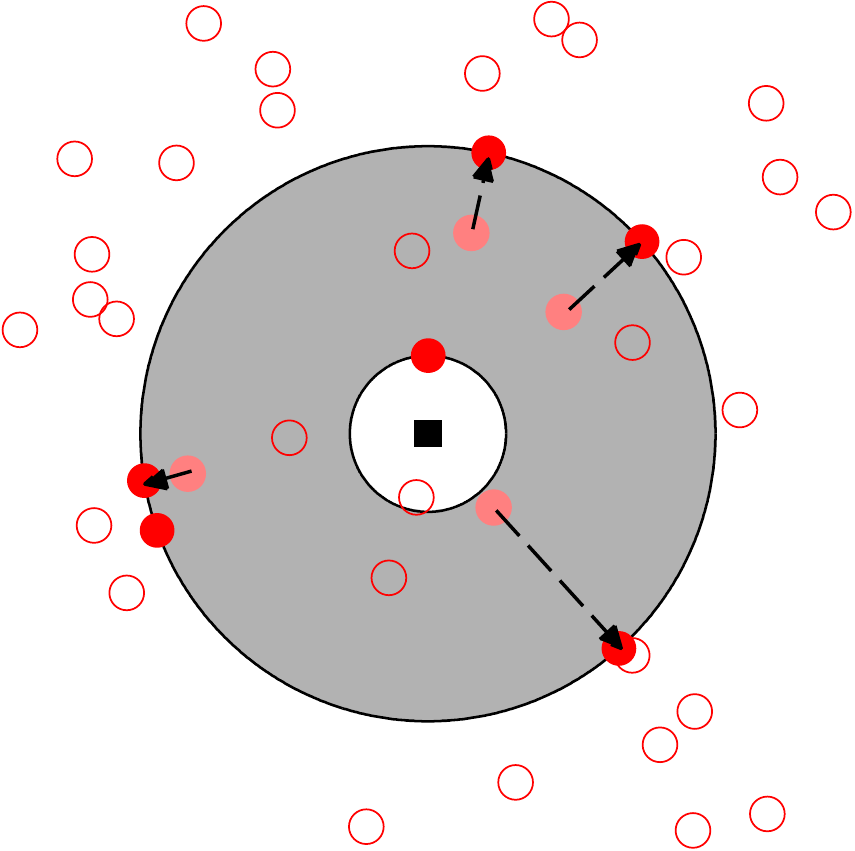}
\caption{\textsc{Lower bounding the interference}: As in Figure~\ref{Fig:Model_Example}, transmitting/quiet BSs are denoted by filled/hollow circles and the device to be localized is denoted by the filled square.}
\label{Fig:UpperBound_Using_Model_Example}
\end{figure}

\begin{theorem}[Upper bound on $L$-coverage probability]\label{Theorem:pl_ubound}
The probability that a device can successfully utilize a desired $L$ BSs for positioning is upper bounded by
\begin{align}
\pl(p, q, \alpha, \threshold, \pg, \pppai)
&\leq \sum_{\omega=0}^{\chi} \left(1-\left(\frac{\pg}{\threshold} - (\omega-1)\right)^{-\frac{2}{\alpha}} \right)^{\omega} \pomega, \label{Eq:pl_ubound}
\end{align}
where $\chi = \min\{L-1,\lfloor \pg/\beta \rfloor\}$.
\end{theorem}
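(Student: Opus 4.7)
The plan is to start from the simplified coverage expression $\pl = \E[\indicator(\SINR_L(L) \geq \threshold/\pg)]$ in \eqref{Eq:pl_2} and obtain the bound through two deterministic lower-bounds on the interference in the denominator of $\SINR_L(L)$. Because reducing the interference only enlarges the coverage event, each such replacement preserves the direction of the desired inequality. The two reductions are: $(i)$ drop the thermal noise $\N$ and the interference contributed by the non-participating BSs ($j \geq L+1$); and $(ii)$ conditional on $\Omega = \omega$, replace each of the $\omega-1$ subdominant active interferer distances $\Ra_2, \ldots, \Ra_\omega$---each at most $R_L$ by Lemma~\ref{Lemma:CircularBPP}---by $R_L$ itself, while leaving the dominant interferer at $\Ra_1$ untouched.

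After these two reductions, the coverage condition for $\omega \geq 1$ becomes
\begin{equation*}
\frac{R_L^{-\alpha}}{\Ra_1^{-\alpha} + (\omega-1) R_L^{-\alpha}} \geq \frac{\threshold}{\pg},
\end{equation*}
which rearranges to $\Ra_1 \geq [\pg/\threshold - (\omega-1)]^{-1/\alpha}\,R_L$. The right-hand side is a fixed multiple of $R_L$, so by Lemma~\ref{Lemma:FRa1} the conditional coverage probability is $\P(\Ra_1 \geq c R_L \mid R_L, \omega) = (1-c^2)^\omega$, and the $R_L$-dependence cancels entirely. Substituting $c = [\pg/\threshold - (\omega-1)]^{-1/\alpha}$ produces the $\omega$-th summand $(1 - [\pg/\threshold - (\omega-1)]^{-2/\alpha})^\omega$ of \eqref{Eq:pl_ubound}. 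The event is feasible only when $c \leq 1$, i.e., when $\omega \leq \lfloor \pg/\threshold \rfloor$; combined with $\omega \leq L-1$ from the support of $\Omega$, this yields the truncation at $\chi = \min\{L-1, \lfloor \pg/\threshold \rfloor\}$, with every $\omega > \chi$ contributing zero. De-conditioning on $\Omega$ via its binomial pmf $\pomega$ then gives \eqref{Eq:pl_ubound}; the $\omega = 0$ case is absorbed cleanly by the convention $0^0 = 1$ in the summand.

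The only non-routine observation is that after pushing the subdominant interferers to the boundary of $\b(\origin, R_L)$, the coverage condition depends on $(\Ra_1, R_L)$ only through their ratio, and the CDF of $\Ra_1$ in Lemma~\ref{Lemma:FRa1} scales in the same way. This scale invariance is what removes $R_L$ from the conditional probability and makes the bound closed-form despite the absence of any exponential/Rayleigh-type fading on the serving link that might otherwise be marginalised away. Everything else---the two monotone replacements of interference terms and the identification of the feasibility region for $\omega$---is routine bookkeeping.
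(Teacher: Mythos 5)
Your proposal is correct and follows essentially the same route as the paper's proof: ignore noise and the interference from beyond the $L^\text{th}$ BS, keep the dominant interferer at $\Ra_1$ exactly while pushing the remaining $\Omega-1$ active interferers out to radius $R_L$, invoke Lemma~\ref{Lemma:FRa1} so that the conditional coverage probability depends only on the ratio $R_L/\Ra_1$ and the $R_L$-dependence cancels, and then decondition over $\Omega$ with the feasibility constraint yielding the truncation at $\chi$. No gaps.
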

\begin{proof}
See Appendix~\ref{Proof:pl_ubound}.
\end{proof}
\noindent One thing we notice immediately from \eqref{Eq:pl_ubound} is that the bound is independent from the BS deployment density $\pppai$. This is in line with similar observations made in coverage analyses for interference-limited networks (i.e., network interference dominates thermal noise)~\cite{Andrews2011,Dhillon2012}. We will return to this observation at the end of this section. In the special case when all participating BSs transmit simultaneously, the above bound reduces to a simple closed-form expression.
\begin{corollary}[The special case of $p=1$]\label{Corollary:pl_ubound_p=1}
When all participating BSs transmit simultaneously,
\begin{equation}
\pl(1, 1, \alpha, \threshold, \pg, \pppai)
\leq \left(1-\left(\frac{\pg}{\threshold} - (L-2)\right)^{-\frac{2}{\alpha}} \right)^{L-1},
\label{Eq:pl_ubound_p=1}
\end{equation}
for $\threshold < \frac{\pg}{L-1}$, and zero otherwise.
\end{corollary}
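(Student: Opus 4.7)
The plan is to obtain this corollary as a direct specialization of Theorem~\ref{Theorem:pl_ubound} by setting $p = 1$. The key simplification is that when $p = 1$, each indicator $a_i$ in~\eqref{Eq:SINR_k_L} equals $1$ almost surely, so the binomial random variable $\Omega = \sum_{i=1}^{L-1} a_i$ collapses to the deterministic value $L - 1$. Hence its probability mass function becomes $\pomega = \indicator(\omega = L - 1)$, and the summation over $\omega$ in~\eqref{Eq:pl_ubound} reduces at most to the single term at $\omega = L - 1$, provided this index falls within the summation range $\{0, 1, \ldots, \chi\}$.

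The remaining work is a case split on whether the strict inequality $\threshold < \pg/(L-1)$ holds. This inequality is equivalent to $\pg/\threshold > L - 1$, which forces $\lfloor \pg/\threshold \rfloor \geq L - 1$, and therefore $\chi = \min\{L-1, \lfloor \pg/\threshold \rfloor\} = L - 1$. In this regime the single nonzero term at $\omega = L - 1$ is included in the sum of~\eqref{Eq:pl_ubound}, and substituting it directly yields~\eqref{Eq:pl_ubound_p=1}. Conversely, if $\threshold > \pg/(L-1)$ then $\lfloor \pg/\threshold \rfloor \leq L - 2$ and hence $\chi < L - 1$, so the only indices remaining in the sum correspond to $\omega$ values on which $\pomega = 0$; Theorem~\ref{Theorem:pl_ubound} then yields the trivial bound $\pl \leq 0$, which combined with $\pl \geq 0$ gives $\pl = 0$, as asserted.

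There is essentially no analytical obstacle: once Theorem~\ref{Theorem:pl_ubound} is in hand, the entire argument is bookkeeping on the support of the binomial and on the truncation index $\chi$. The only minor point I would verify is consistency at the boundary $\threshold = \pg/(L-1)$, where the right-hand side of~\eqref{Eq:pl_ubound_p=1} evaluates to $(1 - 1^{-2/\alpha})^{L-1} = 0$, which agrees with the ``zero otherwise'' clause; thus the two pieces of the statement match continuously and no additional casework is needed at that point.
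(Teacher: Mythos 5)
Your proof is correct and follows essentially the same route as the paper, which simply notes that $p=1$ forces $\pOmega(L-1)=1$ and reads off the $\omega=L-1$ term of \eqref{Eq:pl_ubound}. Your additional bookkeeping on the truncation index $\chi$ and the boundary case $\threshold=\pg/(L-1)$ is sound and merely makes explicit what the paper leaves implicit.
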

\begin{proof}
When $p=1$, $\pOmega(L-1)=1$ and the result follows from \eqref{Eq:pl_ubound}.
\end{proof}
\noindent From this result, we see that as greater numbers of participating BSs are desired, the probability of attaining those numbers of BSs with sufficiently strong connections decreases dramatically. Note that by rearranging \eqref{Eq:pl_ubound_p=1}, we also obtain a useful lower bound on the processing gain required to achieve some desired $L$-localizabity probability.
\begin{corollary}[Lower bound on processing gain with no BS coordination]\label{Corollary:lb_pg}
The processing gain required to reach a desired $L$-coverage probability $\pl$ when $p=1$ is lower-bounded by
\begin{equation}
\pg \geq \threshold \left( \left( 1-\pl^{\frac{1}{L-1}} \right)^{-\frac{\alpha}{2}} + L - 2 \right).
\end{equation}
\end{corollary}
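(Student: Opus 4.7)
The plan is to obtain this bound by straightforward algebraic inversion of the upper bound on $\pl$ established in Corollary~\ref{Corollary:pl_ubound_p=1}. Since the right-hand side of \eqref{Eq:pl_ubound_p=1} is a monotonically increasing function of $\pg$, one can invert the inequality to express the requirement on $\pg$ in terms of a desired $L$-coverage probability.

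First I would start from
\begin{equation*}
\pl \;\leq\; \left(1-\left(\frac{\pg}{\threshold} - (L-2)\right)^{-\frac{2}{\alpha}} \right)^{L-1},
\end{equation*}
which is valid under the condition $\threshold < \pg/(L-1)$ (guaranteeing that $\pg/\threshold - (L-2) > 1 > 0$ so that the fractional power is well defined). Taking $(L-1)^\text{th}$ roots of both sides preserves the inequality and yields
\begin{equation*}
\pl^{\frac{1}{L-1}} \;\leq\; 1 - \left(\frac{\pg}{\threshold} - (L-2)\right)^{-\frac{2}{\alpha}}.
\end{equation*}
Rearranging, $\left(\pg/\threshold - (L-2)\right)^{-2/\alpha} \leq 1 - \pl^{1/(L-1)}$. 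Since both sides are positive, inverting and raising to the $\alpha/2$ power flips the inequality exactly once to give
\begin{equation*}
\frac{\pg}{\threshold} - (L-2) \;\geq\; \left(1-\pl^{\frac{1}{L-1}}\right)^{-\frac{\alpha}{2}},
\end{equation*}
and multiplying through by $\threshold$ gives the claimed bound.

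The one subtlety worth checking is consistency of the sign/monotonicity manipulations: because $0 \leq \pl \leq 1$, the quantity $1 - \pl^{1/(L-1)}$ lies in $[0,1]$, so its $(-\alpha/2)$-power is well-defined and at least $1$, which is consistent with the earlier requirement $\pg/\threshold - (L-2) > 1$. Apart from this sanity check, the corollary is purely a rearrangement of Corollary~\ref{Corollary:pl_ubound_p=1} and requires no additional probabilistic input, so there is no real obstacle to overcome.
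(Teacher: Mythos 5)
Your proposal is correct and matches the paper's approach exactly: the paper offers no separate proof of Corollary~\ref{Corollary:lb_pg}, stating only that it follows ``by rearranging \eqref{Eq:pl_ubound_p=1},'' which is precisely the algebraic inversion you carry out. Your sanity check that $\left(1-\pl^{\frac{1}{L-1}}\right)^{-\frac{\alpha}{2}} \geq 1$ is consistent with the validity condition $\threshold < \pg/(L-1)$ is a welcome addition the paper leaves implicit.
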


\subsection{\texorpdfstring{Approximations of $\pl$}{Approximations of P\_L}}
Let
\begin{equation}
\Ia = \sum_{i=2}^{\Omega} P\|\xa_i\|^{-\alpha}
\end{equation}
be the aggregate interference due to the active BSs (if any) between $\xa_1$ and $x_L$. Furthermore, let
\begin{equation}
\Ix = \sum_{j=L+1}^{\infty} b_j P\|x_j\|^{-\alpha}
\end{equation}
be the aggregate interference due to the infinite field of BSs located further than the $L^\text{th}$ BS. Clearly then, when $\Omega \geq 1$,
\begin{equation}
\SINR_{L}(L) = \frac{P R_L^{-\alpha}}{P \Ra_1^{-\alpha} + \Ia + \Ix + \sigma^2}.
\label{Eq:SINR_L_L}
\end{equation}
In this section, we will approximate \eqref{Eq:SINR_L_L} and in turn $\pl$ by making the following assumption.
\begin{assumption}\label{Assumption:Approx}
We assume that if the dominant interferer (i.e., the closest active BS) is considered \emph{exactly}, the remaining interference terms in \eqref{Eq:SINR_L_L}, namely $\Ia$ and $\Ix$, may be accurately approximated by their means conditioned on $\Ra_1$, $R_L$, and $\Omega$.
\end{assumption}
\noindent This type of \emph{dominant interferer analysis} has been employed previously with desirable results~\cite{Weber2007,Heath2013} and will yield remarkably simple, yet accurate, approximations in our analysis, as will be demonstrated later. This is because considering $\Ra_1$ exactly and $\Ia$ using its mean (conditioned on $\Ra_1$), results in an accurate approximation of the total interference due to the BSs closer than $x_L$, which is typically the performance-limiting term.

In interference-limited networks, Assumption~\ref{Assumption:Approx} allows us to replace $\SINR_L(L)$ in \eqref{Eq:pl_2} by
\begin{equation}
\SIR_{L}(L) = \frac{P R_L^{-\alpha}}{P \Ra_1^{-\alpha} + \E[\Ia | \Ra_1, R_L, \Omega] + \E[\Ix|R_L]},
\label{Eq:SIR_L_L}
\end{equation}
where SIR stands for the signal-to-interference ratio. As the deployment density grows, the assumption of interference-limited networks increases in validity, and since our work is motivated by the reuse of existing infrastructures, such as cellular networks with small cell extensions which have relatively high deployment densities, we aver that it is quite reasonable. Expressions for the above conditional means are presented in the following lemmas.
\begin{lemma}\label{Lemma:EI1}
The expected value of $\Ia$ conditioned on $\Ra_1$, $R_L$, and $\Omega$ is
\begin{align}
	\E[\Ia | \Ra_1, R_L, \Omega]
	&= \frac{2P(\Omega-1)}{2-\alpha}\cdot \frac{R_L^{2-\alpha} - \Ra_1^{2-\alpha}}{R_L^2 - \Ra_1^2},
\end{align}
for $\Omega \geq 1$, and zero otherwise.
\end{lemma}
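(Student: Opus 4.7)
The plan is to reduce the expectation to a single one-dimensional integral by invoking Lemma~\ref{Lemma:AnnularBPP}. Recall that $\Ia = \sum_{i=2}^{\Omega} P \|\xa_i\|^{-\alpha}$ is the sum over the $\Omega-1$ active interferers lying in the annulus $\ncalA = \b(\origin, R_L) \backslash \b(\origin, \Ra_1)$. By Lemma~\ref{Lemma:AnnularBPP}, conditioned on $\Ra_1$ and $R_L$, these points form a uniform BPP on $\ncalA$; in particular, the $\Omega-1$ locations are i.i.d.\ uniform on $\ncalA$, and their conditional distribution does not further depend on $\Omega$ (only the count does).

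Next, I would derive the marginal distribution of the radial coordinate of a single uniform point on $\ncalA$. Since $|\ncalA| = \pi(R_L^2 - \Ra_1^2)$, the distance $\|\xa_i\|$ has density
\begin{equation*}
f_{\|\xa_i\|\,|\,\Ra_1,R_L}(r) = \frac{2r}{R_L^2 - \Ra_1^2}, \qquad \Ra_1 \leq r \leq R_L,
\end{equation*}
obtained by integrating out the angle. Taking expectation of $\|\xa_i\|^{-\alpha}$ against this density gives
\begin{equation*}
\E\!\left[\|\xa_i\|^{-\alpha}\,\big|\,\Ra_1, R_L\right] = \int_{\Ra_1}^{R_L} \frac{2 r^{1-\alpha}}{R_L^2 - \Ra_1^2}\,\d r = \frac{2}{2-\alpha}\cdot \frac{R_L^{2-\alpha} - \Ra_1^{2-\alpha}}{R_L^2 - \Ra_1^2}.
\end{equation*}

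Finally, I would invoke linearity of expectation and the i.i.d.\ property of the $\Omega-1$ terms: for $\Omega \geq 1$,
\begin{equation*}
\E[\Ia \mid \Ra_1, R_L, \Omega] = P(\Omega-1)\,\E\!\left[\|\xa_i\|^{-\alpha}\,\big|\,\Ra_1, R_L\right],
\end{equation*}
which, after substitution, yields the stated closed form. The case $\Omega = 0$ (no active interior BSs) or $\Omega = 1$ (only the dominant interferer $\xa_1$) corresponds to an empty sum defining $\Ia = 0$, trivially matching the $(\Omega-1)$ prefactor. There is no real obstacle here; the only subtlety is confirming that the conditional distribution of the \emph{locations} of the remaining $\Omega-1$ active BSs (given $\Ra_1$, $R_L$, $\Omega$) is exactly the uniform BPP on $\ncalA$, which is precisely the content of Lemma~\ref{Lemma:AnnularBPP} and requires no further work.
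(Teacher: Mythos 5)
Your proposal is correct and follows essentially the same route as the paper's own proof: invoke Lemma~\ref{Lemma:AnnularBPP} to get the uniform distribution on the annulus, integrate $r^{-\alpha}$ against the radial density $2r/(R_L^2 - \Ra_1^2)$, and multiply by $P(\Omega-1)$ via linearity of expectation. No gaps.
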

\begin{proof}
See Appendix~\ref{Proof:EI1}.
\end{proof}
\begin{lemma}\label{Lemma:EI2}
The expected value of $\Ix$ conditioned on $R_L$ is
\begin{align}
\E[\Ix|R_L]
&= \frac{2 P \pi q\pppai}{\alpha-2} R_L^{2-\alpha},
\end{align}
for $\alpha > 2$, and unbounded otherwise.
\end{lemma}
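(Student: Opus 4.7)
The plan is to compute $\E[\Ix|R_L]$ by combining two standard stochastic geometry facts: the conditional law of a PPP given the distance to its $L\th$ nearest point, and Campbell's theorem for shot-noise functionals on a PPP. First I would observe that, conditioned on $R_L$, the points of $\PPPa$ lying outside the ball $\b(\origin, R_L)$ form a homogeneous PPP of intensity $\pppai$ on $\R^2 \setminus \b(\origin, R_L)$, independent of what happens inside (this is the standard ordered-distance construction of the PPP used implicitly already in Lemma~\ref{Lemma:CircularBPP} and in deriving $f_{R_L}$). Since the load indicators $\{b_j\}_{j > L}$ are i.i.d.\ Bernoulli($q$) and independent of $\PPPa$, the independent-thinning theorem yields a PPP of intensity $q\pppai$ on the exterior region that indexes the set of \emph{active} far-field interferers.

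Next I would apply Campbell's theorem to the shot-noise sum $\Ix$ restricted to this thinned exterior PPP. Because the contribution of a point $x$ depends only on $\|x\|$, switching to polar coordinates gives
\begin{equation}
\E[\Ix | R_L] = q\pppai \int_{\R^2 \setminus \b(\origin, R_L)} P\|x\|^{-\alpha}\,\nrmd x
= 2\pi q\pppai P \int_{R_L}^{\infty} r^{1-\alpha}\,\nrmd r.
\end{equation}
The final step is a one-dimensional radial integral: it converges precisely when $\alpha > 2$, producing $R_L^{2-\alpha}/(\alpha-2)$ and hence the stated expression. For $\alpha \leq 2$, the path-loss decay cannot offset the linear growth in the number of far-field BSs, so the integral diverges and the mean is unbounded, matching the ``unbounded otherwise'' clause of the statement.

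The only step that requires a bit of care, rather than being entirely mechanical, is justifying the conditional PPP property, namely that conditioning on $R_L$ does not perturb the Palm distribution outside $\b(\origin, R_L)$; everything downstream is a straightforward Campbell computation and an elementary integral. Since this conditional property is a well-known consequence of the complete independence of the PPP (and is already used earlier in the paper), a brief citation suffices and the proof reduces to the displayed integral evaluation above.
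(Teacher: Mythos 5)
Your proposal is correct and follows essentially the same route as the paper: model the active far-field BSs as an independently thinned PPP of intensity $q\pppai$ outside $\b(\origin, R_L)$, apply Campbell's theorem, and evaluate the radial integral, which converges exactly for $\alpha > 2$. Your explicit justification that conditioning on $R_L$ leaves the exterior process a homogeneous PPP is a point the paper leaves implicit, but it is the same argument.
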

\begin{proof}
See Appendix~\ref{Proof:EI2}.
\end{proof}
\noindent By inserting the results of Lemmas~\ref{Lemma:EI1} and~\ref{Lemma:EI2} into \eqref{Eq:SIR_L_L}, we obtain increasingly specialized (and simpler) approximations of the $L$-localizability probability. In order to proceed, however, we must first account for the case when $\Omega = 0$, in which case $\Ra_1$ in \eqref{Eq:SIR_L_L} has no meaning. Thus, we first derive an approximation of $\pl$ for the special $\Omega = 0$ case, which incidentally corresponds to the perfect coordination ($p=0$) scenario.
%\begin{proposition}
%When all BSs are able to perfectly coordinate and remain quiet during the other BSs' transmissions, the probability of $L$-localizability is approximated by
%\begin{equation}
%\pl(0, q, \alpha, \threshold, \pg, \pppai) = \int_0^\infty \indicator\left(\frac{P R_L^{-\alpha}}{\frac{2 P \pi q\pppai}{\alpha-2} R_L^{2-\alpha} + \sigma^2} \geq \frac{\threshold}{\pg} \right) f_{R_L}(r) \d r.
%\end{equation}
%\end{proposition}
%\noindent A much more tractable expression is obtained in interference-limited scenarios as follows.
\begin{proposition}\label{Proposition:pl_EI1+EI2approx_p=0}
Under Assumption~\ref{Assumption:Approx}, the probability of $L$-localizability with perfect BS coordination in interference-limited networks is
\begin{equation}\label{Eq:pl_EI1+EI2approx_p=0}
\pl(0, q, \alpha, \threshold, \pg, \pppai) = 1 - \sum_{\ell=0}^{L-1} e^{-\frac{\alpha-2}{2 q \threshold/\pg}} \frac{\left(\frac{\alpha-2}{2 q \threshold/\pg}\right)^\ell}{\ell!}.
\end{equation}
\end{proposition}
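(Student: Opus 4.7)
The plan is to specialize the dominant-interferer SIR expression in \eqref{Eq:SIR_L_L} to the $p=0$ case, reduce $\pl$ to a simple tail probability on $R_L$, and then recognize the resulting event as the CDF of a Gamma random variable, which produces the Poisson-like sum on the right-hand side.

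First, I would observe that when $p=0$, the binomial variable $\Omega = \sum_{i=1}^{L-1} a_i$ is identically zero, so $\pOmega(0) = 1$. Thus the $L-1$ other participating BSs contribute no interference, and the dominant interferer distance $\Ra_1$ has no role: the conditional-mean interior-interference term $\E[\Ia \mid \Ra_1, R_L, \Omega]$ vanishes by Lemma~\ref{Lemma:EI1}. Under Assumption~\ref{Assumption:Approx} and in the interference-limited regime, \eqref{Eq:SIR_L_L} therefore collapses to
\begin{equation*}
\SIR_L(L) = \frac{P R_L^{-\alpha}}{\E[\Ix \mid R_L]} = \frac{\alpha-2}{2\pi q\pppai R_L^2},
\end{equation*}
after substituting the expression from Lemma~\ref{Lemma:EI2} and simplifying powers of $R_L$.

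Next, I would use \eqref{Eq:pl_2} to write $\pl = \P(\SIR_L(L) \geq \threshold/\pg)$ and rearrange the inequality in terms of $R_L^2$, giving
\begin{equation*}
\pl = \P\!\left(\pi\pppai R_L^2 \;\leq\; \frac{\alpha-2}{2q\,\threshold/\pg}\right).
\end{equation*}
The final step is distributional: starting from the density of $R_L$ stated earlier in the paper, a change of variables $U = \pi\pppai R_L^2$ yields $f_U(u) = u^{L-1}e^{-u}/\GammaL$, i.e., $U \sim \text{Gamma}(L,1)$. Plugging in the well-known Erlang CDF
\begin{equation*}
\P(U \leq t) = 1 - \sum_{\ell=0}^{L-1} e^{-t}\frac{t^\ell}{\ell!}
\end{equation*}
with $t = (\alpha-2)/(2q\threshold/\pg)$ produces exactly \eqref{Eq:pl_EI1+EI2approx_p=0}.

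There is no real technical obstacle here; the proof is essentially a direct specialization followed by recognizing the Gamma/Erlang CDF. The only subtlety worth stating clearly is why the dominant-interferer machinery of Assumption~\ref{Assumption:Approx} becomes trivial at $p=0$: with $\Omega \equiv 0$ almost surely, the closest-active-BS term $P\Ra_1^{-\alpha}$ simply drops out of \eqref{Eq:SINR_L_L}, so the approximation reduces to replacing only the exterior interference $\Ix$ by its mean conditioned on $R_L$, and the rest is bookkeeping.
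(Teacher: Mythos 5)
Your proof is correct and follows essentially the same route as the paper: specialize \eqref{Eq:SIR_L_L} to $\Omega\equiv 0$, substitute Lemma~\ref{Lemma:EI2} to reduce $\pl$ to $\P\bigl(\pi\pppai R_L^2 \leq \tfrac{\alpha-2}{2q\threshold/\pg}\bigr)$, and evaluate that tail probability. The only cosmetic difference is in the last step, where you invoke the Erlang CDF of $U=\pi\pppai R_L^2$ while the paper counts Poisson points in the corresponding ball --- these are the same identity.
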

\begin{proof}
See Appendix~\ref{Proof:pl_EI1+EI2approx_p=0}.
\end{proof}
\noindent As noted in the proof of above Proposition, (19) can be interpreted as the probability that \emph{at least} $L$ BSs (modeled by a PPP) lie inside $\b\left(\origin, \sqrt{\frac{\alpha-2}{2 \pi q \pppai \threshold / \pg}} \right)$. Moreover, note that the density of the BS deployment does not appear to affect $\pl$. With this special case out of the way, we now proceed, starting with our most general result.

\begin{theorem}\label{Theorem:pl_EI1+EI2approx}
%Using the means of $\Ia$ and $\Ix$, yet considering the dominant interferer exactly
Under Assumption~\ref{Assumption:Approx}, the probability that a mobile device is able to localize itself using $L$ BSs in interference-limited networks is $\pl(p, q, \alpha, \threshold, \pg, \pppai) =$
\begin{align}
&\left(1 - \sum_{\ell=0}^{L-1} e^{-\frac{\alpha-2}{2 q \threshold/\pg}} \frac{\left(\frac{\alpha-2}{2 q \threshold/\pg}\right)^\ell}{\ell!}\right) \pOmega(0) + 
\frac{4(\pppai \pi)^L}{\GammaL}~\omegasum  \notag \\
&\int_0^\infty \int_0^{r_L} \indicator \left( \frac{r_L^{-\alpha}}{\ra_1^{-\alpha} + \frac{2 (\omega-1)}{2-\alpha}\cdot \frac{r_L^{2-\alpha} - \ra_1^{2-\alpha}}{r_L^2 - \ra_1^2} + \frac{2 \pi q\pppai}{\alpha-2}r_L^{2-\alpha}} \geq \frac{\threshold}{\pg} \right) \notag \\
&\ra_1 (r_L^2 - \ra_1^2)^{\omega-1} r_L^{2(L-\omega)-1} \omega e^{-\pppai \pi r_L^2}\ \d \ra_1\ \d r_L. \label{Eq:pl_EI1+EI2approx}
\end{align}
\end{theorem}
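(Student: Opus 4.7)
The plan is to start from the equivalent expression $\pl \approx \P(\SIR_L(L) \geq \threshold/\pg)$, which combines Eq.~\eqref{Eq:pl_2} with Assumption~\ref{Assumption:Approx}, and then condition on $\Omega$ so that the binomial PMF $\pomega$ factors out cleanly. Writing
\begin{equation*}
\pl \;=\; \sum_{\omega=0}^{L-1} \pomega \,\P\!\left(\SIR_L(L) \geq \threshold/\pg \,\big|\, \Omega = \omega\right),
\end{equation*}
the first term (i.e., $\omega = 0$) is handled immediately by Proposition~\ref{Proposition:pl_EI1+EI2approx_p=0}, since the event $\{\Omega = 0\}$ matches the perfect-coordination analysis used there (no active interior interferers). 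That accounts for the first summand of~\eqref{Eq:pl_EI1+EI2approx}, leaving only the $\omega \geq 1$ contributions to be evaluated.

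Next, for each $\omega \in \{1,\ldots,L-1\}$, I would substitute Lemmas~\ref{Lemma:EI1} and~\ref{Lemma:EI2} into the SIR in~\eqref{Eq:SIR_L_L} to reduce the event $\{\SIR_L(L) \geq \threshold/\pg\}$ to a deterministic condition on $(R_L, \Ra_1)$ — precisely the indicator that appears inside the double integral of~\eqref{Eq:pl_EI1+EI2approx}. At this point the conditional probability is just the expectation of that indicator under the joint law of $(R_L, \Ra_1)$ given $\Omega = \omega$.

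The key remaining step is to write down this joint density. I would use the well-known PPP distance distribution
\begin{equation*}
f_{R_L}(r_L) \;=\; \frac{2(\pppai\pi)^L r_L^{2L-1}}{\GammaL}\, e^{-\pppai \pi r_L^2},
\end{equation*}
and differentiate the CDF of Lemma~\ref{Lemma:FRa1} to obtain
\begin{equation*}
f_{\Ra_1 \mid R_L, \Omega}(r \mid r_L, \omega) \;=\; \frac{2\omega r}{r_L^{2\omega}}\,(r_L^2 - r^2)^{\omega-1}, \qquad 0 \leq r \leq r_L.
\end{equation*}
Multiplying these yields the joint density, whose product with the indicator and integration over $0 \leq \ra_1 \leq r_L$ and $r_L \geq 0$ reproduces the integral in~\eqref{Eq:pl_EI1+EI2approx}; the prefactor $4(\pppai \pi)^L / \GammaL$ and the power $r_L^{2(L-\omega)-1}$ both emerge from combining the $r_L^{2L-1}$ from $f_{R_L}$ with the $r_L^{-2\omega}$ in the conditional density, while the $\omega$ and $(r_L^2 - \ra_1^2)^{\omega-1}$ factors come directly from the conditional density. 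Summing over $\omega$ with weights $\pomega$ gives the second half of~\eqref{Eq:pl_EI1+EI2approx}.

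The only delicate point is the justification of isolating the $\Omega = 0$ summand: one must verify that when there are no active interior interferers, the interference reduces to $\Ix$ alone (with $\Ia \equiv 0$ and $\Ra_1$ ill-defined), so that the analysis collapses to exactly the setting of Proposition~\ref{Proposition:pl_EI1+EI2approx_p=0}. Once that is cleared up, the remainder is a mechanical assembly of Lemmas~\ref{Lemma:CircularBPP}–\ref{Lemma:EI2}, and no new technical ingredient is required; the main obstacle is really just bookkeeping the conditioning and ensuring the density factors are combined without an algebraic slip.
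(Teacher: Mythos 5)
Your proposal is correct and follows essentially the same route as the paper: condition on $\Omega$, dispatch the $\Omega=0$ term via Proposition~\ref{Proposition:pl_EI1+EI2approx_p=0}, substitute Lemmas~\ref{Lemma:EI1} and~\ref{Lemma:EI2} into \eqref{Eq:SIR_L_L} to get a deterministic indicator in $(R_L,\Ra_1)$, and integrate against $f_{R_L}$ and the conditional density obtained by differentiating Lemma~\ref{Lemma:FRa1}. Your explicit density bookkeeping (yielding the prefactor $4(\pppai\pi)^L/\GammaL$ and the exponent $2(L-\omega)-1$) is exactly what the paper's terser appendix leaves implicit, and it checks out.
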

\begin{proof}
See Appendix~\ref{Proof:pl_EI1+EI2approx}.
\end{proof}
\noindent Thus, we obtain an expression which requires only double integration in order to determine $\pl$, and though the outside integral has an infinite integration bound, the expression is not difficult to evaluate numerically thanks to the decaying exponential term in the integrand. To appreciate the value of this approximation, consider the $p=1$ scenario in which case the first term in \eqref{Eq:pl_EI1+EI2approx} disappears and the summand need only be evaluated at $\omega=L-1$. Though the authors did not have localization in mind, the exact expression for the $k$-coverage problem in the absence of fast fading in~\cite[Corr. 7]{Keeler2013} applies directly to $\pl$ for $p=q=1$. A cursory glance at the exact expression in \cite{Keeler2013} reveals its cumbersome nature, involving sums over many multi-fold integrals even in the absence of thermal noise, which is considerably more complex in comparison to \eqref{Eq:pl_EI1+EI2approx}. Moreover, by applying a slight secondary approximation and multiplying $\E[\Ix|R_L]$ in \eqref{Eq:SIR_L_L} by $\E[R_1^2]/R_1^2$, where $\E[R_1^2] = \frac{1}{\pi \pppai}$~\cite{Haenggi2005}, \eqref{Eq:pl_EI1+EI2approx} is further reduced, yielding a single-integral expression applicable to the general case.

\begin{corollary}\label{Corollary:pl_EI1+EI2approx_singleintegral}
The general expression for $\pl$ in \eqref{Eq:pl_EI1+EI2approx} may be further (and consequently, less reliably) approximated for interference-limited networks by: $\pl(p, q, \alpha, \threshold, \pg, \pppai) = $
\begin{align}
&\pl(0, q, \alpha, \threshold, \pg, \pppai) \pOmega(0) + 2\omegasum \omega \int_1^\infty \indicator \bigg( x^\alpha + \notag \\
& \frac{2(\omega-1)}{2-\alpha}\cdot \frac{x^2 - x^\alpha}{x^2-1} + \frac{2 q x^2}{\alpha-2} \leq \frac{\pg}{\threshold} \bigg) 
\frac{\left(1-x^{-2}\right)^{\omega-1}}{x^3} \d x. \label{Eq:pl_EI1+EI2approx_singleintegral}
\end{align}
\end{corollary}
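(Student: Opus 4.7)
My plan is as follows. I start from the double-integral expression for $\pl$ in Theorem~\ref{Theorem:pl_EI1+EI2approx}. The term weighted by $\pOmega(0)$ is already in the closed form given by Proposition~\ref{Proposition:pl_EI1+EI2approx_p=0}, so the task reduces to simplifying the double integral inside the sum over $\omega \in \{1, \ldots, L-1\}$ into a single integral.

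The first step is to apply the stated secondary approximation, which replaces $\E[\Ix\mid R_L]$ by $\E[\Ix\mid R_L]\cdot\E[R_1^2]/\Ra_1^2$, with $\E[R_1^2] = 1/(\pi\pppai)$. The point of this manoeuvre is to absorb the loose $R_L^2$ factor in $\E[\Ix\mid R_L]$ so that the interference-to-signal ratio depends on $(R_L,\Ra_1)$ only through the ratio $x = R_L/\Ra_1$. Concretely, dividing numerator and denominator of $\SIR_L(L)$ by $P R_L^{-\alpha}$ and invoking Lemmas~\ref{Lemma:EI1} and~\ref{Lemma:EI2} rewrites the threshold condition $\SIR_L(L) \ge \threshold/\pg$ as
\begin{equation*}
x^{\alpha} + \frac{2(\omega-1)}{2-\alpha}\cdot\frac{x^2 - x^{\alpha}}{x^2 - 1} + \frac{2q x^2}{\alpha-2} \le \frac{\pg}{\threshold},
\end{equation*}
which is precisely the indicator appearing in \eqref{Eq:pl_EI1+EI2approx_singleintegral}.

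With the indicator reduced to a constraint on $x$ alone, I substitute $t = \ra_1/r_L \in [0,1]$ in the inner integral (equivalently $x = 1/t$), with Jacobian $\d \ra_1 = r_L\,\d t$. The $r_L$ dependence factorizes:
\begin{equation*}
\ra_1 (r_L^2 - \ra_1^2)^{\omega-1} r_L^{2(L-\omega)-1}\,\d\ra_1 = t(1-t^2)^{\omega-1}\, r_L^{2L-1}\,\d t,
\end{equation*}
after which the $r_L$ integral is the standard Gamma integral
\begin{equation*}
\int_0^\infty r_L^{2L-1} e^{-\pppai\pi r_L^2}\,\d r_L = \frac{\GammaL}{2(\pppai\pi)^L}.
\end{equation*}
This cancels the prefactor $4(\pppai\pi)^L/\GammaL$ in Theorem~\ref{Theorem:pl_EI1+EI2approx} exactly, leaving a factor of $2$. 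A final change of variables $x = 1/t$ maps $[0,1]$ onto $[1,\infty)$ and sends $t(1-t^2)^{\omega-1}\,\d t$ to $(1-x^{-2})^{\omega-1} x^{-3}\,\d x$, yielding the single-integral expression claimed.

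There is no serious obstacle here. The derivation is essentially a one-line change-of-variables calculation propped up by the secondary approximation; the only thing requiring care is to keep the exponents straight when collecting the $r_L$ powers (so that the Gamma integral cleanly absorbs the $(\pppai\pi)^L/\GammaL$ prefactor) and to verify that, after the approximation, no residual $R_L$ dependence remains inside the indicator. Everything else is bookkeeping on top of Theorem~\ref{Theorem:pl_EI1+EI2approx}.
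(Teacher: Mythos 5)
Your proposal is correct and follows essentially the same route as the paper: both hinge on multiplying $\E[\Ix|R_L]$ by $\E[R_1^2]/\Ra_1^2$ so that the threshold condition depends on $(R_L,\Ra_1)$ only through $X=R_L/\Ra_1$, after which the double integral collapses to a single one. The paper de-conditions directly against the distribution $F_X(x)=(1-1/x^2)^\Omega$ obtained from Lemma~\ref{Lemma:FRa1}, whereas you carry out the equivalent substitution $t=\ra_1/r_L$ explicitly and verify that the Gamma integral over $r_L$ cancels the $4(\pppai\pi)^L/\GammaL$ prefactor; this is the same computation in slightly more mechanical form, and your bookkeeping of the exponents and limits checks out.
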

\begin{proof}
See Appendix~\ref{Proof:pl_EI1+EI2approx_singleintegral}.
\end{proof}
\noindent This expression is instantly and accurately evaluated in a numerical computation environment such as MATLAB. We now briefly describe how the introduction of the additional terms into \eqref{Eq:SIR_L_L} reduces \eqref{Eq:pl_EI1+EI2approx}. The key is that after some algebraic manipulation, \eqref{Eq:SIR_L_L} may now be expressed as follows (see \eqref{Eq:pl_EI1+EI2approx_singleintegral_appendix} in the appendix): $\SIR_{L}(L) \approx $
\begin{equation}
\frac{1}{\left(\frac{R_L}{\Ra_1}\right)^{\alpha} + \frac{2 (\Omega-1)}{2-\alpha}\cdot \frac{\left(\frac{R_L}{\Ra_1}\right)^2 - \left(\frac{R_L}{\Ra_1}\right)^\alpha}{\left(\frac{R_L}{\Ra_1}\right)^2 - 1} + \frac{2 q}{\alpha-2} \left(\frac{R_L}{\Ra_1}\right)^{2}},
\label{Eq:SIR_L_L_approx}
\end{equation}
which contains $\Ra_1$ and $R_L$ only through their ratio $X = R_L/\Ra_1$. The cumulative distribution function of this ratio admits the simple form of $F_X(x) = (1-1/x^2)^\Omega$, which is readily obtained from \eqref{Eq:F_Ra_1_given_R_L}. For the special case of $\alpha = 4$, this additional approximation needn't be used, however, as a single integral expression is obtained directly from \eqref{Eq:pl_EI1+EI2approx}.

\begin{corollary}\label{Corollary:pl_EI1+EI2approx_a=4}
Under Assumption~\ref{Assumption:Approx} and for the special case of $\alpha = 4$ in interference-limited networks,
\begin{align}
&\pl(p, q, 4, \threshold, \pg, \pppai)
= \sum_{\omega=0}^\chi \pomega \int_0^{\sqrt{\frac{\pg/\threshold-\omega}{\pi q\pppai}}} \notag \\
&\left(1 - \frac{1}{\sqrt{\pg/\threshold - \pi q\pppai r^2 + \frac{(\omega-1)^2}{4}} - \frac{\omega-1}{2}}\right)^{\omega} f_{R_L}(r)\d r, \label{Eq:pl_EI1+EI2approx_a=4}
\end{align}
where $\chi = \min\{L-1,\lfloor \pg/\threshold \rfloor\}$.
\end{corollary}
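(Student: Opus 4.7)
The plan is to start from the general expression in Theorem~\ref{Theorem:pl_EI1+EI2approx} and specialize to $\alpha=4$, exploiting a key algebraic simplification that makes the inner $\ra_1$-integral closed-form. Note that for $\alpha=4$,
\[
\frac{r_L^{2-\alpha}-\ra_1^{2-\alpha}}{r_L^2-\ra_1^2}
=\frac{r_L^{-2}-\ra_1^{-2}}{r_L^2-\ra_1^2}
=-\frac{1}{\ra_1^2 r_L^2},
\]
so the denominator inside the indicator in \eqref{Eq:pl_EI1+EI2approx} collapses to $\ra_1^{-4}+(\omega-1)\ra_1^{-2}r_L^{-2}+\pi q\pppai\, r_L^{-2}$. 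Multiplying the resulting $\SIR_L(L)\geq\threshold/\pg$ condition through by $r_L^4$ and setting $u=r_L^2/\ra_1^2\geq 1$ reduces it to the quadratic inequality $u^2+(\omega-1)u+\pi q\pppai r_L^2\le \pg/\threshold$.

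Next I would solve this quadratic in $u$, which yields $u\le u_{\max}(r_L,\omega):=\sqrt{\pg/\threshold-\pi q\pppai r_L^2+(\omega-1)^2/4}-(\omega-1)/2$. Translating back to $\ra_1$, the indicator is simply $\indicator(r_L/\sqrt{u_{\max}}\le \ra_1\le r_L)$. Requiring $u_{\max}\geq 1$ (so that the interval in $\ra_1$ is non-empty) gives $\pi q\pppai r_L^2\le \pg/\threshold-\omega$, which is precisely the upper limit $\sqrt{(\pg/\threshold-\omega)/(\pi q\pppai)}$ on the outer integral and also forces the truncation $\omega\le \lfloor \pg/\threshold\rfloor$, yielding the stated $\chi=\min\{L-1,\lfloor \pg/\threshold\rfloor\}$.

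Then I would carry out the inner integral over $\ra_1$ by the substitution $t=r_L^2-\ra_1^2$:
\[
\int_{r_L/\sqrt{u_{\max}}}^{r_L}\ra_1(r_L^2-\ra_1^2)^{\omega-1}\,\d\ra_1
=\frac{r_L^{2\omega}}{2\omega}\bigl(1-1/u_{\max}\bigr)^{\omega}.
\]
Plugging this into \eqref{Eq:pl_EI1+EI2approx}, the factor $\omega$ in the outer integrand cancels $2\omega$ in the denominator (up to a factor $2$), leaving $\frac{2(\pppai\pi)^L}{(L-1)!}\,r_L^{2L-1}e^{-\pppai\pi r_L^2}=f_{R_L}(r_L)$ as the remaining radial density. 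This recovers the $\omega\geq 1$ summands of \eqref{Eq:pl_EI1+EI2approx_a=4}.

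The last step is to absorb the $\pOmega(0)$ term from Theorem~\ref{Theorem:pl_EI1+EI2approx} into the $\omega=0$ slot of the sum in \eqref{Eq:pl_EI1+EI2approx_a=4}. Setting $\omega=0$ makes the bracketed factor $(\,\cdot\,)^0=1$, so that summand reduces to $\pOmega(0)\,F_{R_L}\!\bigl(\sqrt{\pg/(\pi q\pppai\threshold)}\bigr)$; using the standard CDF of the $L$th nearest neighbor in a PPP, this equals $\pOmega(0)\,\pl(0,q,4,\threshold,\pg,\pppai)$ by Proposition~\ref{Proposition:pl_EI1+EI2approx_p=0} with $\alpha=4$. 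The main obstacle I anticipate is purely algebraic: correctly tracking signs when $(r_L^{2-\alpha}-\ra_1^{2-\alpha})/(r_L^2-\ra_1^2)$ is negative, and verifying that the non-emptiness condition on $\ra_1$ coincides exactly with the quadratic discriminant condition that produces the outer integration limit $\sqrt{(\pg/\threshold-\omega)/(\pi q\pppai)}$; everything else is a direct specialization of Theorem~\ref{Theorem:pl_EI1+EI2approx}.
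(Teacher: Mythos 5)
Your proposal is correct and follows essentially the same route as the paper's proof: both reduce the $\alpha=4$ SIR condition to the quadratic $u^2+(\omega-1)u+\pi q\pppai r_L^2\le\pg/\threshold$ in $u=(R_L/\Ra_1)^2$, complete the square to get $u_{\max}$, obtain the outer integration limit and the truncation $\chi$ from the non-emptiness condition $u_{\max}\ge 1$, and handle $\omega=0$ via Proposition~\ref{Proposition:pl_EI1+EI2approx_p=0}. The only difference is presentational: you evaluate the inner $\ra_1$-integral explicitly by the substitution $t=r_L^2-\ra_1^2$, whereas the paper reads off the same quantity $(1-1/u_{\max})^{\Omega}$ directly from the conditional CDF of $\Ra_1$ in Lemma~\ref{Lemma:FRa1}.
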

\begin{proof}
See Appendix~\ref{Proof:pl_EI1+EI2approx_a=4}.
\end{proof}
\noindent For fully-loaded networks (i.e., $p=q=1$), \eqref{Eq:pl_EI1+EI2approx_a=4} simplifies further to the following expression.
\begin{corollary}\label{Corollary:pl_EI1+EI2approx_a=4_pq=1}
Under Assumption~\ref{Assumption:Approx} and for the special case of $\alpha = 4$ in fully-loaded interference-limited networks,
\begin{align}
&\pl(1, 1, 4, \threshold, \pg, \pppai) = \int_0^{\sqrt{\frac{\pg/\threshold-(L-1)}{\pi \pppai}}} \notag \\
&\left(1 - \frac{1}{\sqrt{\pg/\threshold - \pi \pppai r^2 + \frac{(L-2)^2}{4}} - \frac{L-2}{2}}\right)^{L-1} f_{R_L}(r) \d r, \label{Eq:pl_EI1+EI2approx_a=4_pq=1}
\end{align}
for $\threshold < \frac{\pg}{L-1}$ and zero otherwise.
\end{corollary}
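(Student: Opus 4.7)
The plan is to simply specialize the general $\alpha=4$ result of Corollary~\ref{Corollary:pl_EI1+EI2approx_a=4} by substituting the values $p=q=1$ and observing that the summation collapses to a single term. Recall from the system model that $\Omega$ is binomial with parameters $(L-1,p)$, so when $p=1$ the PMF concentrates on $\omega=L-1$: $\pOmega(L-1)=1$ and $\pOmega(\omega)=0$ for all $\omega<L-1$. Consequently, only one summand in \eqref{Eq:pl_EI1+EI2approx_a=4} can contribute, namely the one at $\omega=L-1$.

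Next, I would verify that this summand is actually included in the sum, i.e.\ that $L-1\le\chi=\min\{L-1,\lfloor \pg/\threshold\rfloor\}$. This reduces to $L-1\le \pg/\threshold$, equivalently $\threshold<\pg/(L-1)$ (the strict inequality coming from the requirement that the upper limit of integration $\sqrt{(\pg/\threshold-\omega)/(\pi q\pppai)}$ be positive for a nondegenerate integral). When this condition fails, $\chi<L-1$ and the sum is empty, so $\pl=0$, matching the ``zero otherwise'' clause of the statement.

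Assuming $\threshold<\pg/(L-1)$, I plug $\omega=L-1$ and $q=1$ into the integrand of \eqref{Eq:pl_EI1+EI2approx_a=4}. The upper limit of integration becomes $\sqrt{(\pg/\threshold-(L-1))/(\pi\pppai)}$, the term $(\omega-1)^2/4$ becomes $(L-2)^2/4$, the term $(\omega-1)/2$ becomes $(L-2)/2$, and the outer exponent $\omega$ becomes $L-1$. The factor $\pi q\pppai r^2$ reduces to $\pi\pppai r^2$. This reproduces \eqref{Eq:pl_EI1+EI2approx_a=4_pq=1} verbatim.

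Since this is a direct specialization rather than a new derivation, I do not anticipate any genuine obstacle; the only minor care is bookkeeping around the validity condition $\threshold<\pg/(L-1)$ and confirming that this matches the behavior of $\chi$ in the parent corollary. The heavy lifting (the dominant-interferer analysis, Lemmas~\ref{Lemma:EI1}--\ref{Lemma:EI2}, and the evaluation of the inner integral for $\alpha=4$) has already been done in the proofs of Theorem~\ref{Theorem:pl_EI1+EI2approx} and Corollary~\ref{Corollary:pl_EI1+EI2approx_a=4}.
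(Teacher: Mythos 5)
Your proposal is correct and matches the paper's own treatment: the paper presents this corollary as a direct specialization of Corollary~\ref{Corollary:pl_EI1+EI2approx_a=4} (``simplifies further'') with no separate proof, relying exactly on the fact that $p=1$ collapses the binomial mass of $\Omega$ onto $\omega=L-1$. Your bookkeeping on the condition $\threshold<\pg/(L-1)$ versus $\chi$ is a sensible addition and consistent with the ``zero otherwise'' clause.
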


\noindent For larger values of $L$ and higher values of $p$, it may be reasonable to ignore the $\Ix$ term in \eqref{Eq:SIR_L_L} under the assumption that the interference due to the $\Omega$ active participating BSs dominates that of the rest of the network. By doing so, expressions not requiring any integration may be found for the special cases covered in Corollaries~\ref{Corollary:pl_EI1+EI2approx_a=4} and~\ref{Corollary:pl_EI1+EI2approx_a=4_pq=1}.
\begin{corollary}\label{Corollary:pl_EI1approx_a=4}
Under Assumption~\ref{Assumption:Approx}, $\alpha=4$, $\pl$ for networks limited by the interference from the nearby BSs participating in the localization procedure is $\pl(p, q, 4, \threshold, \pg, \pppai) =$
\begin{equation}\label{Eq:pl_EI1approx_a=4}
 \sum_{\omega=0}^{\chi} \pomega \left(1 - \frac{1}{\sqrt{\pg/\threshold + \frac{(\omega-1)^2}{4}} - \frac{\omega-1}{2}}\right)^{\omega},
\end{equation}
where $\chi = \min\{L-1, \lfloor \pg/\threshold \rfloor\}$.
\end{corollary}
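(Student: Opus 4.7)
The plan is to specialize the proof of Corollary \ref{Corollary:pl_EI1+EI2approx_a=4} by dropping the $\E[\Ix|R_L]$ contribution from the SIR, since the present corollary is premised on the participating-BS interference dominating the far field. Starting from \eqref{Eq:SIR_L_L} and discarding that term gives $\SIR_L(L) \approx PR_L^{-\alpha}/(P\Ra_1^{-\alpha} + \E[\Ia|\Ra_1,R_L,\Omega])$. Substituting $\alpha=4$ into Lemma \ref{Lemma:EI1} and simplifying $(R_L^{-2}-\Ra_1^{-2})/(R_L^2-\Ra_1^2) = -1/(R_L^2\Ra_1^2)$ collapses the expression to
\[
\SIR_L(L) \approx \frac{(\Ra_1/R_L)^4}{1+(\Omega-1)(\Ra_1/R_L)^2},
\]
a function of the single ratio $Z = \Ra_1/R_L$. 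This elimination of absolute scale is precisely why no integration survives in the final answer.

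Next, I would translate the event $\SIR_L(L) \geq \threshold/\pg$, conditional on $\Omega=\omega$, into a quadratic in $v = Z^{-2}$, namely $v^2 + (\omega-1)v - \pg/\threshold \leq 0$, whose positive root is $v_+(\omega) = \sqrt{\pg/\threshold+(\omega-1)^2/4} - (\omega-1)/2$. The event is therefore equivalent to $Z^2 \geq 1/v_+(\omega)$, which is meaningful only when $v_+(\omega) \geq 1$; squaring the inequality $\sqrt{\pg/\threshold + (\omega-1)^2/4} \geq (\omega+1)/2$ reduces this to $\omega \leq \pg/\threshold$. Combining with the deterministic bound $\omega \leq L-1$ yields the cutoff $\chi = \min\{L-1,\lfloor \pg/\threshold\rfloor\}$.

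Finally, I would invoke Lemma \ref{Lemma:FRa1} to obtain, conditional on $R_L$ and $\Omega$, that $\P(Z \geq z \,|\, R_L,\Omega) = (1-z^2)^\Omega$. Critically, this probability does not depend on $R_L$, so no outer integration remains. Plugging $z^2 = 1/v_+(\omega)$ gives the per-$\omega$ success probability $(1 - 1/v_+(\omega))^\omega$, which equals $1$ at $\omega=0$ in accordance with the fact that $\Omega=0$ leaves no internal interferer. Averaging over the binomial PMF $\pomega$ and truncating at $\chi$ then delivers the stated expression.

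The main obstacle is really just the algebraic simplification in the $\alpha=4$ step: verifying both that the SIR collapses to a function of $Z$ alone and that the quadratic inequality admits the clean root $v_+(\omega)$ displayed in the corollary. Once that is in hand, the rest is bookkeeping---selecting the correct root, verifying the range of $\omega$ over which the probability is well-defined, and reading off the conditional distribution of $Z$ directly from Lemma \ref{Lemma:FRa1}.
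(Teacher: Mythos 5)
Your derivation is correct, but it takes a different (and more self-contained) route than the paper. The paper's own proof of this corollary is a one-line limit argument: it starts from the already-established single-integral expression \eqref{Eq:pl_EI1+EI2approx_a=4} of Corollary~\ref{Corollary:pl_EI1+EI2approx_a=4} and lets $q \to 0$, so that the term $\pi q \pppai r^2$ vanishes, the parenthetical factor becomes independent of $r$, and the remaining integral of $f_{R_L}(r)$ over its full support evaluates to $1$. You instead go back to \eqref{Eq:SIR_L_L}, drop $\E[\Ix|R_L]$ at the outset, and redo the $\alpha=4$ algebra from scratch; your computation of $\E[\Ia|\Ra_1,R_L,\Omega] = P(\Omega-1)/(R_L^2\Ra_1^2)$, the reduction of the SIR to a function of $Z=\Ra_1/R_L$ alone, the quadratic root $v_+(\omega)$, the feasibility condition $\omega \le \pg/\threshold$ giving $\chi$, and the use of Lemma~\ref{Lemma:FRa1} to get $\P(Z\ge z \mid R_L,\Omega)=(1-z^2)^\Omega$ are all accurate and match what the paper effectively did once in the proof of Corollary~\ref{Corollary:pl_EI1+EI2approx_a=4} (where the same quadratic appears with $\kappa^{-1}=\pg/\threshold-\pi\pppai R_L^2$; setting $q=0$ replaces $\kappa^{-1}$ by $\pg/\threshold$). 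What your route buys is an explicit explanation of \emph{why} the result is integral-free and scale-invariant---the SIR depends on $\Ra_1,R_L$ only through their ratio, whose conditional law does not involve $R_L$ or $\pppai$---whereas the paper's route buys brevity by reusing prior work. The only cosmetic gap is that you do not explicitly note that dropping $\E[\Ix|R_L]$ is formally identical to taking $q\to 0$, which is how the paper justifies the phrase ``limited by the interference from the nearby participating BSs''; this does not affect correctness.
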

\begin{proof}
Consider the limit of \eqref{Eq:pl_EI1+EI2approx_a=4} as $q \to 0$. The parenthetical term becomes independent of $r$ and may be pulled out of the integral, while the remaining integral covers the probability density function $f_{R_L}(r)$ over the entirety of its domain, thus evaluating to 1.
\end{proof}
\begin{corollary}\label{Corollary:EI1approx_a=4_pq=1}
Under Assumption~\ref{Assumption:Approx} and when $\alpha=4$, $\pl$ for fully-loaded networks limited by the interference from the nearby BSs participating in the localization procedure is
\begin{equation}\label{Eq:pl_EI1approx_a=4_pq=1}
\pl(1, 1, 4, \threshold, \pg, \pppai) = \left(1 - \frac{1}{\sqrt{\pg/\threshold + \frac{(L-2)^2}{4}} - \frac{L-2}{2}}\right)^{L-1},
\end{equation}
for $\threshold < \frac{\pg}{L-1}$ and zero otherwise.
\end{corollary}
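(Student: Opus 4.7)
The plan is to obtain this result as a direct specialization of Corollary~\ref{Corollary:pl_EI1approx_a=4} to $p = 1$. Since the dependence on the activity factor $p$ enters the expression solely through the binomial PMF $\pomega$, the argument is essentially combinatorial rather than analytic, and no new integration or approximation step is required.

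First, I would observe that when $p = 1$ each indicator variable $a_i$ takes value $1$ almost surely, so $\Omega = \sum_{i=1}^{L-1} a_i = L-1$ deterministically. Equivalently, $\pomega$ collapses to a Kronecker delta concentrated at $\omega = L-1$. Consequently, the sum in \eqref{Eq:pl_EI1approx_a=4} reduces to at most the single term at $\omega = L-1$, and substituting this value of $\omega$ into the parenthetical expression of Corollary~\ref{Corollary:pl_EI1approx_a=4} yields immediately the claimed closed-form.

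Second, I would carefully handle the range of the summation index so as to recover the ``zero otherwise'' clause. The upper limit in \eqref{Eq:pl_EI1approx_a=4} is $\chi = \min\{L-1, \lfloor \pg/\threshold \rfloor\}$, so the term $\omega = L-1$ is actually included in the sum precisely when $\lfloor \pg/\threshold \rfloor \geq L-1$, i.e., when $\threshold \leq \pg/(L-1)$. In the complementary regime $\threshold > \pg/(L-1)$, the index $\omega = L-1$ falls outside $\{0, \ldots, \chi\}$, the sum is empty, and $\pl = 0$, matching the statement.

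I do not anticipate any real obstacle, as the proof is a one-line specialization once the degenerate behavior of $\pomega$ at $p=1$ is identified. The only subtlety worth pointing out is the strict-vs-non-strict inequality at the boundary $\threshold = \pg/(L-1)$, which concerns only a measure-zero set of parameter values and has no practical consequence for the evaluation of $\pl$.
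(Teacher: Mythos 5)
Your proposal is correct and matches the paper's (implicit) route exactly: the paper states Corollary~\ref{Corollary:EI1approx_a=4_pq=1} without a separate proof precisely because it is the one-line specialization of Corollary~\ref{Corollary:pl_EI1approx_a=4} to $p=1$, where $\pOmega$ degenerates to a point mass at $\omega=L-1$ and the truncation $\chi=\min\{L-1,\lfloor \pg/\threshold\rfloor\}$ yields the ``zero otherwise'' clause. Your remark about the strict versus non-strict inequality at $\threshold=\pg/(L-1)$ is a fair observation but, as you note, inconsequential.
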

\noindent Interestingly, when interference from the participating BSs is ignored in \eqref{Eq:pl_EI1+EI2approx_p=0} and when distant network interference is ignored in \eqref{Eq:pl_EI1approx_a=4} and~\eqref{Eq:pl_EI1approx_a=4_pq=1}, the resulting expressions are independent of the BS deployment density. This is termed as \emph{scale invariance} in the literature and is known to hold for most of the metrics that are derived from SINR when the thermal noise is negligible or ignored (interference-limited scenarios), e.g., see~\cite{Andrews2011,Dhillon2012}.

%%%%%%%%%%%%
% Numerical Results %
%%%%%%%%%%%%

\section{Numerical Results and Discussion}\label{Sec:NumericalResultsAndDiscussion}
We now study the tightness of the bound and the approximations, as well as gather insights from numerical results. For consistency, all results were gathered using a BS deployment density equivalent to that of an infinite hexagonal grid with 500 m intersite distances (i.e., $\pppai = 2/(\sqrt{3} \cdot 500^2~\text{m}^2)$) and a shadowing standard deviation of 8~dB. While our analytical expressions were derived by focusing solely on the post-processing SINR from the $L^\text{th}$ BS surpassing $\threshold$, the truth data, which is compiled through simulation using an average of 1000 BSs, contains no such approximation and considers the SINRs from all $L$ participating BSs jointly with each required to surpass $\threshold$. Let us first consider $\pl$ for $L=4$ in the case of $\alpha = 4$ and fully-loaded networks ($p=q=1$), the setup considered in Figure~\ref{Fig:AllMethods_p1_q1_a4}. To interpret the meaning of this figure, it may be helpful to provide a simple example. Consider a cellular localization system employing TDOA observations and requiring a post-processing SINR of at least $-6$~dB for successful operation after experiencing a $10$~dB processing gain due to its integration time (i.e. $\threshold/\pg=-16$~dB). Figure~\ref{Fig:AllMethods_p1_q1_a4} tells us that when all BSs in the network are fully-loaded (i.e., actively transmitting), unambiguous TDOA positioning (i.e., using $L=4$ or more BSs) is possible only 50\% of the time, a result consistent with the fact that universal frequency reuse (e.g., using synchronization signals) was deemed untenable for positioning signals in LTE. Also in this figure, the truth data is compared against the closed-form bound in Corollary~\ref{Corollary:pl_ubound_p=1}, the nice closed-form approximation in Corollary~\ref{Corollary:EI1approx_a=4_pq=1} which ignores distant interference, as well as the simple single-integral approximation in Corollary~\ref{Corollary:pl_EI1+EI2approx_a=4_pq=1}. We note that the bound is tight for the \emph{high-reliability regime}, which is the range of greatest interest (localizability probabilities greater than 0.75). For low probabilities of localizability, the bound diverges by a maximum of around 3 dB. However, this low coverage range is not very desirable for system design, anyway. In addition to the bound, the closed-form approximation performs well for $\pl > 0.6$, while the single-integral approximation provides a remarkably tight approximation which is invariably indistinguishable from truth across all SINR values. The key factor in enabling these accurate approximations is that the locations of the strongest BS (i.e., the dominant interferer) and the $L^\text{th}$ BS are considered \emph{exactly}.

\begin{figure}
\centering
\includegraphics[width=\figurewidth]{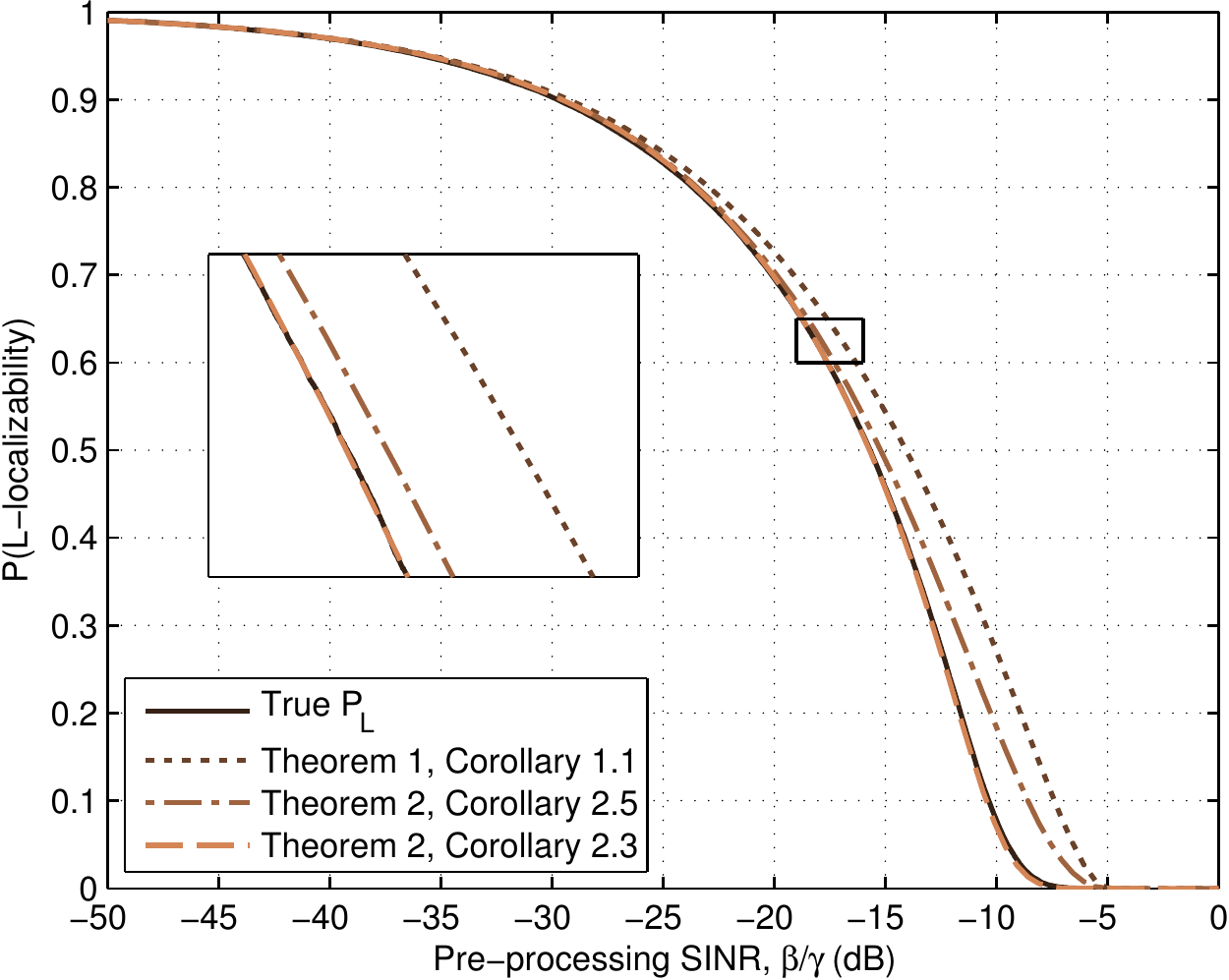}
\caption{\textsc{An initial comparison}: This figure compares the true $\pl$ with the bound and several approximations for $L=4$, $\alpha=4$, and $p=q=1$. The most accurate approximation is indistinguishable from truth, and requires a simple analytical evaluation instead of a tedious Monte Carlo approach.}
\label{Fig:AllMethods_p1_q1_a4}
\end{figure}

Inspired by the results in Figure~\ref{Fig:AllMethods_p1_q1_a4}, we now delve deeper into the approximation of Theorem~\ref{Theorem:pl_EI1+EI2approx} in \eqref{Eq:pl_EI1+EI2approx} and its single-integral counterpart in \eqref{Eq:pl_EI1+EI2approx_singleintegral} to examine how they perform under more general conditions, such as across the range of realistic pathloss exponents. Figure~\ref{Fig:Varyinga_L4_p067_q1} shows that the former remains almost indistinguishable from truth for all $\alpha$ values, while the latter is very tight in the high-reliability regime for $\alpha > 3$ and exhibits a noticeable deviation (amounting to less than 1~dB) when $\alpha=3$. For $\pl < 0.5$, the single-integral approximation deviates significantly, however, due to the fact that the interference from beyond the $L^\text{th}$ BS becomes more influential with a higher SINR requirement. This deviation is further accentuated by our intentional selection of disadvantageous $p$ and $q$ values, where $p=2/3$ partially de-emphasizes the interference from the closest BSs and $q=1$ is the worst case $q$ for the single-integral approximation since it maximally emphasizes the distant interference term where the additional approximation lies. Nevertheless, this single-integral approximation for the general parameter case performs well even under these adverse conditions for the higher $\pl$ values of interest, and will only improve with more favorable values of $p$ and $q$. On the other hand, subsequent figures will make it evident that the approximation in Theorem~\ref{Theorem:pl_EI1+EI2approx} and its exact derivatives remain extremely accurate across the ranges of all network parameter values.

\begin{figure}
\centering
\subfloat[Theorem~\ref{Theorem:pl_EI1+EI2approx}]{\label{Fig:Varyinga_L4_p067_q1_double_integral}\includegraphics[width=\figurewidth]{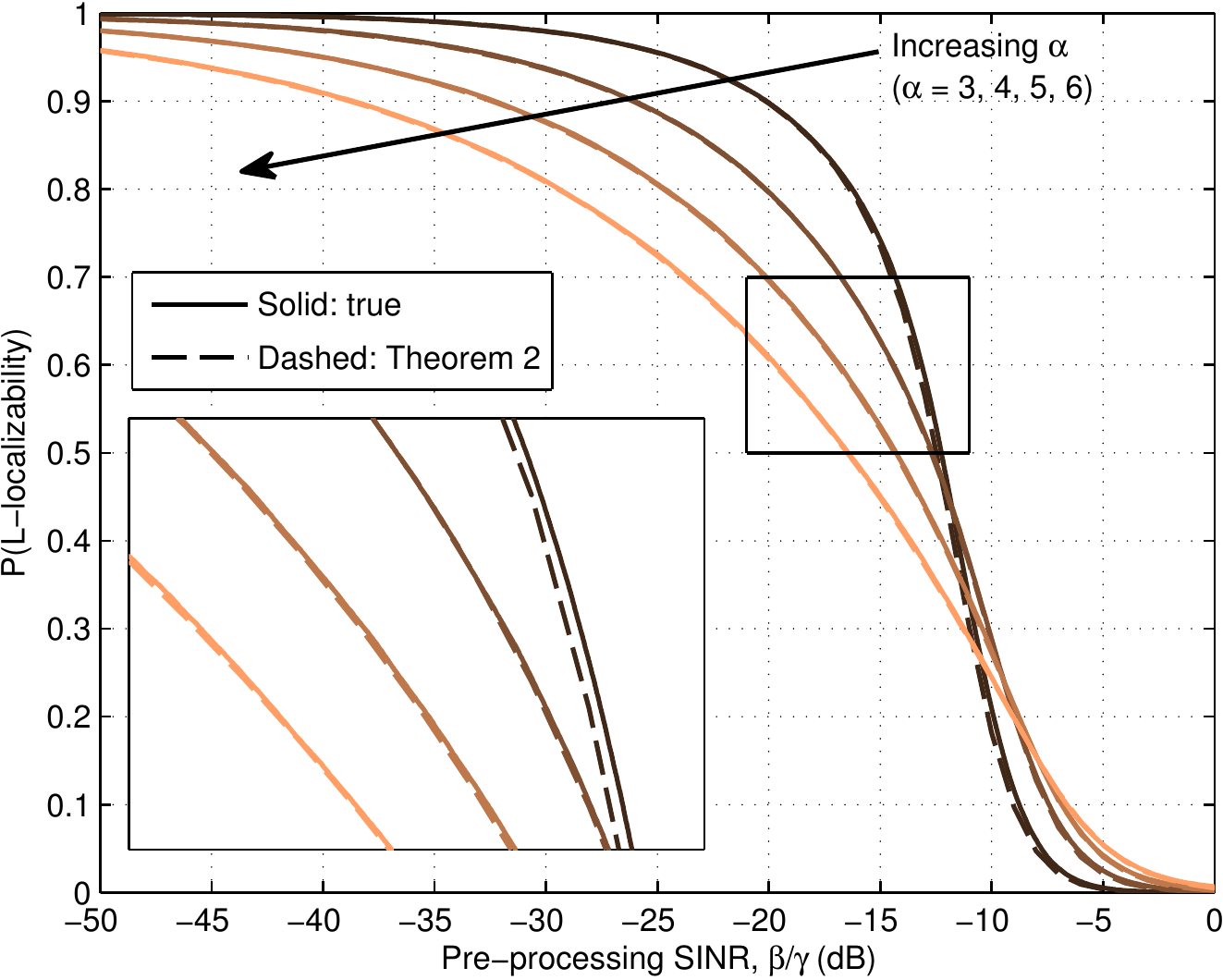}}
\ifx\sidebysidefigures\undefined
	\\
\else
	\qquad
\fi
\subfloat[Corollary~\ref{Corollary:pl_EI1+EI2approx_singleintegral}]{\label{Fig:Varyinga_L4_p067_q1_single_integral}\includegraphics[width=\figurewidth]{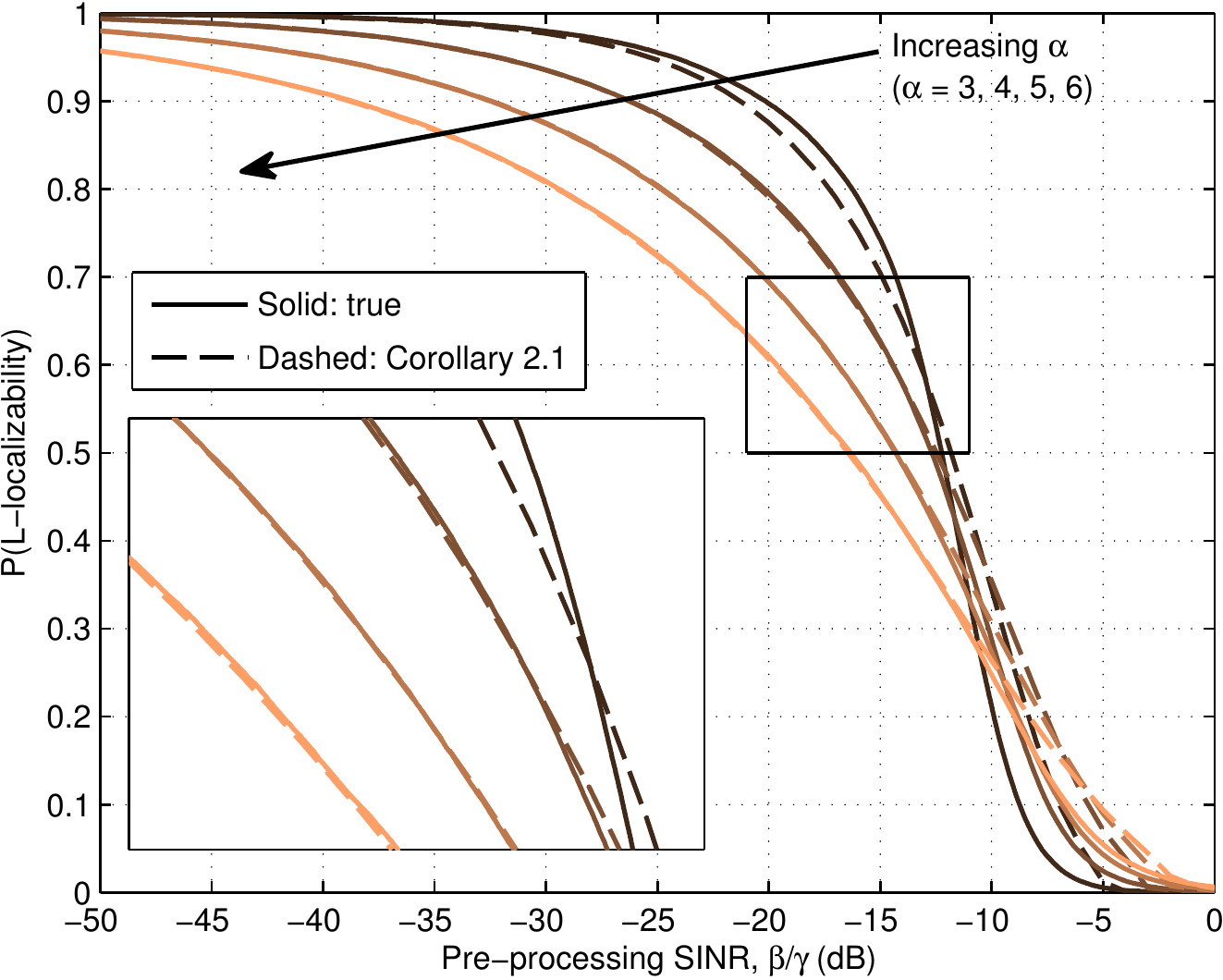}}
\caption{\textsc{General Theorem~\ref{Theorem:pl_EI1+EI2approx} approximations} across a range of pathloss exponents for $L=4$, $p=2/3$, and $q=1$.}
\label{Fig:Varyinga_L4_p067_q1}
\end{figure}

As discussed in Corollary~\ref{Corollary:lb_pg}, the upper-bound on coverage provides a useful closed-form lower-bound on the processing gain required to obtain a desired $L$-localizability probability, which is likely a specification set prior to system design. From Figure~\ref{Fig:LowerBound_ProcessingGain_pl80_a4}, it is clear that for $p=1$ and high desired values of $\pl$, the bound provides a tight criterion for the minimum processing gain required. Unfortunately, a closer look reveals that the processing gains required are impractically high, highlighting the need for some sort of interference mitigation. Moreover, consider that both bounds are very tight in the region of interest despite the fact that the interference from beyond the $L^\text{th}$ BS is fully present (i.e., $q=1$), while the bound derivation essentially assumes $q=0$. This provides more insight into the \emph{nature} of the interference, revealing that the localization performance is limited by the interference from \emph{other participating BSs} (which is consistent with conclusions drawn from LTE positioning simulations~\cite{R1-091912}), prompting one to consider coordination amongst the BSs as the desired interference mitigation technique. This becomes even more evident as the propagation conditions become more severe, such as indoors, where higher path loss exponents demand higher processing gains to achieve the same $\pl$ as seen in Figure~\ref{Fig:LowerBound_ProcessingGain_pl80_L4}.

\ifx\sidebysidefigures\undefined

\begin{figure}
\centering
\includegraphics[width=\figurewidth]{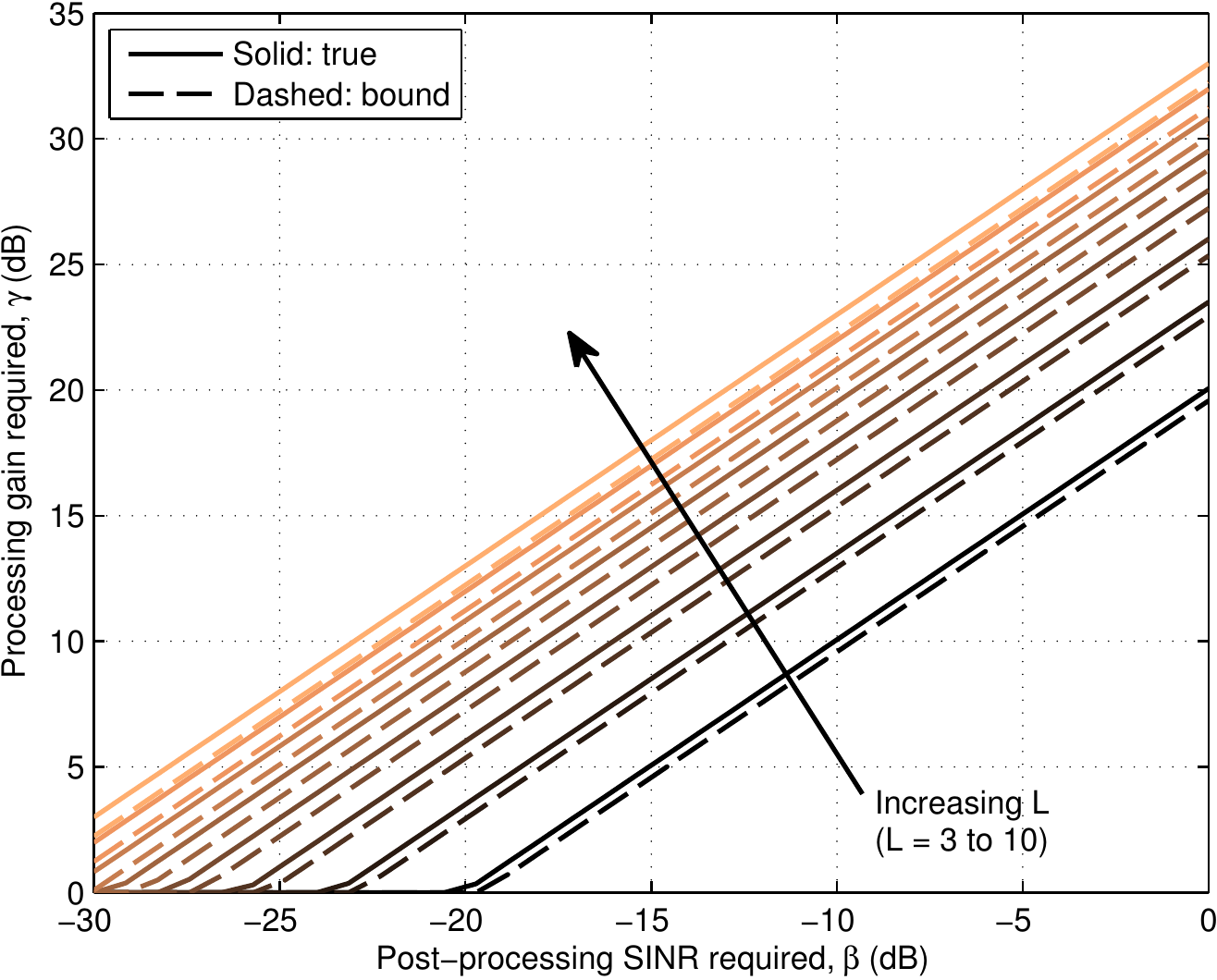}
\caption{\textsc{The importance of interference mitigation}: Processing gains required to achieve $\pl=0.8$ without any sort of interference mitigation ($p=1$) for $q=1$ and $\alpha=4$.}
\label{Fig:LowerBound_ProcessingGain_pl80_a4}
\end{figure}

\begin{figure}
\centering
\includegraphics[width=\figurewidth]{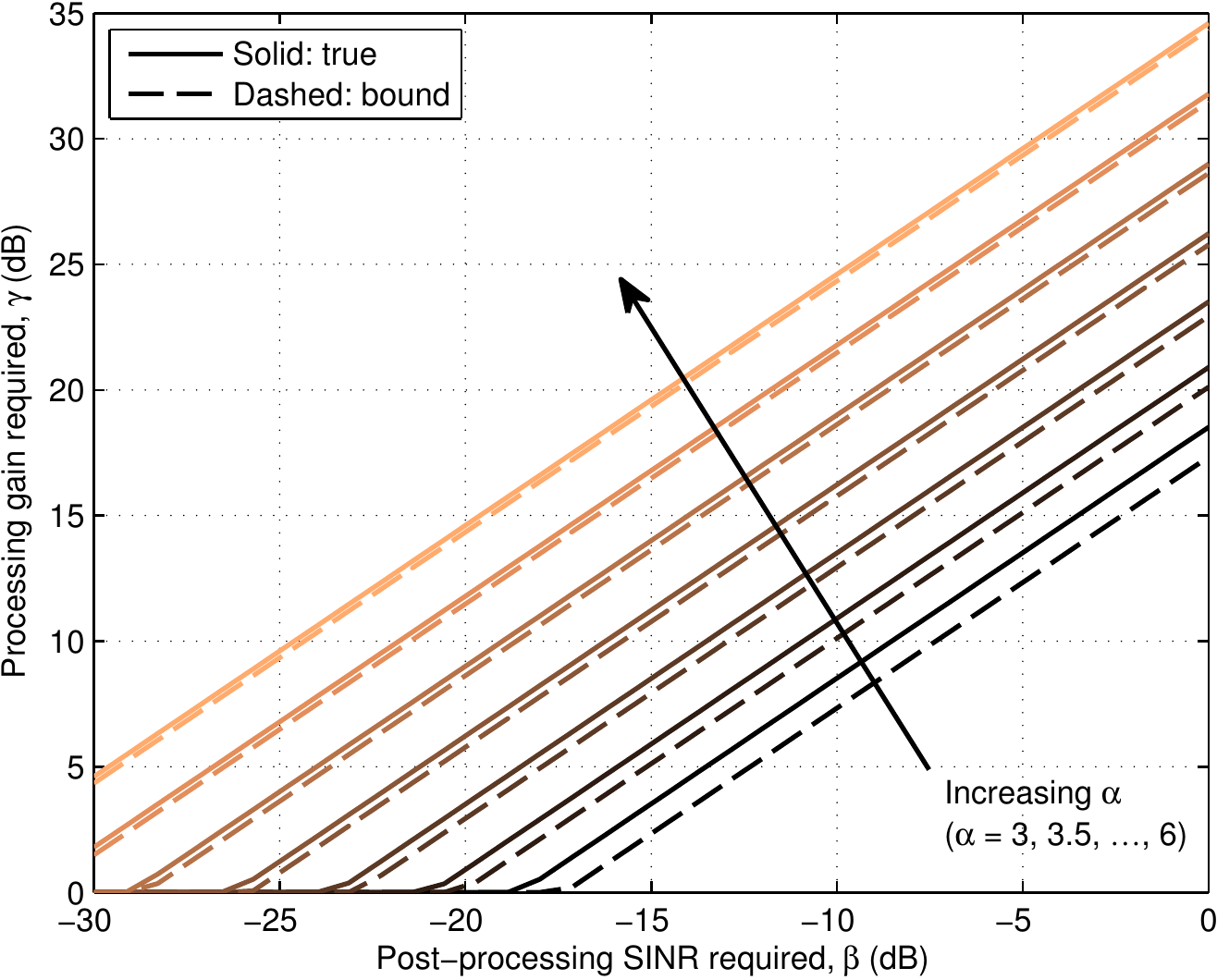}
\caption{\textsc{The impact of path loss}: Processing gains required for desired $\pl$ values grow exponentially with the pathloss exponent. ($\pl=0.8$, $L=4$, $p=q=1$.)}
\label{Fig:LowerBound_ProcessingGain_pl80_L4}
\end{figure}

\else

\begin{figure}%
\centering%
\begin{minipage}{0.475\textwidth}%
\centering%
\includegraphics[width=\figurewidth]{LowerBound_ProcessingGain_pl80_a4}%
\caption{\textsc{The importance of interference mitigation}: Processing gains required to achieve $\pl=0.8$ without any sort of interference mitigation ($p=1$) for $q=1$ and $\alpha=4$.}%
\label{Fig:LowerBound_ProcessingGain_pl80_a4}%
\end{minipage}\hfill%
\begin{minipage}{0.475\textwidth}%
\centering%
\includegraphics[width=\figurewidth]{LowerBound_ProcessingGain_pl80_L4}%
\caption{\textsc{The impact of path loss}: Processing gains required for desired $\pl$ values grow exponentially with the pathloss exponent. ($\pl=0.8$, $L=4$, $p=q=1$.)}%
\label{Fig:LowerBound_ProcessingGain_pl80_L4}%
\end{minipage}%
\end{figure}

\fi

Through Figure~\ref{Fig:Varyingp_L4_q1_a4}, we demonstrate the value of mitigating interference through BS coordination. With the BSs beyond the $L^\text{th}$ BS fully-loaded, we note an increasing benefit from BS coordination as the degree of the coordination is incrementally increased. The large improvement in hearability from $p=1$ to $p=0$ again sheds light on the issue of interference from nearby BSs. In fact, at its least detrimental point, the interference from the $L-1$ closest BSs accounts for nearly a 10 dB drop in signal quality. That's roughly the equivalent of going from successfully obtaining eleven to obtaining only four BSs in the network considered in Figure~\ref{Fig:VaryingL_p05_q075_a4}. While coordination is able to help significantly, a closer look at Figures~\ref{Fig:Varyingp_L4_q1_a4} and~\ref{Fig:VaryingL_p05_q075_a4} makes it clear that more is necessary in order to reach the truly high $L$-localizability probabilities at reasonable SINR values. The most prevalent technique to further reduce the interference is that of thinning the interference field by separating nearby transmitters in frequency, which is discussed next.

\ifx\sidebysidefigures\undefined

\begin{figure}
\centering
\includegraphics[width=\figurewidth]{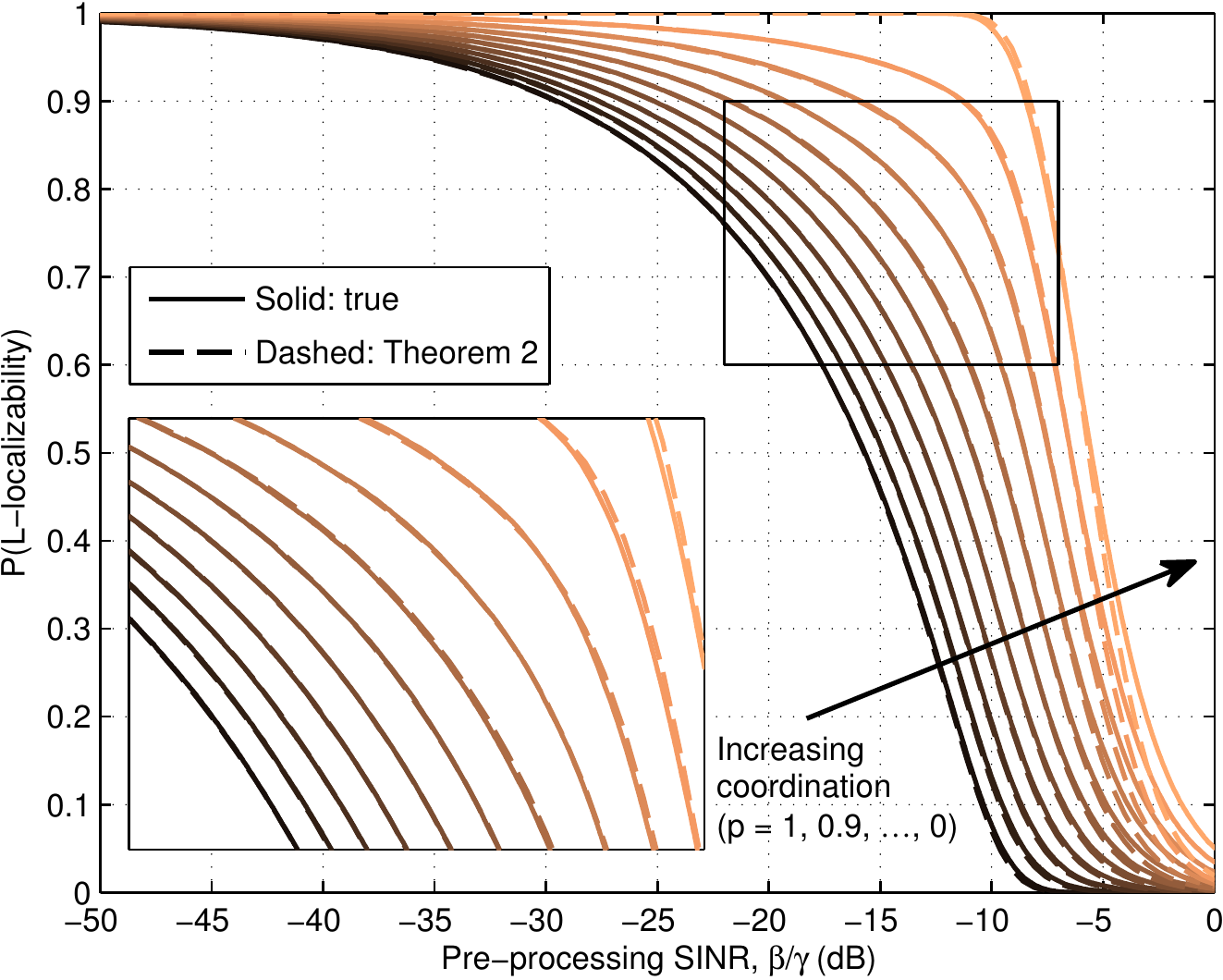}
\caption{\textsc{The benefit of BS coordination}: These $L$-localizability probabilities emphasize the gains achievable through the coordinated mitigation of participating BS interference. ($L=4$, $\alpha=4$, $q=1$.)}
\label{Fig:Varyingp_L4_q1_a4}
\end{figure}

\begin{figure}
\centering
\includegraphics[width=\figurewidth]{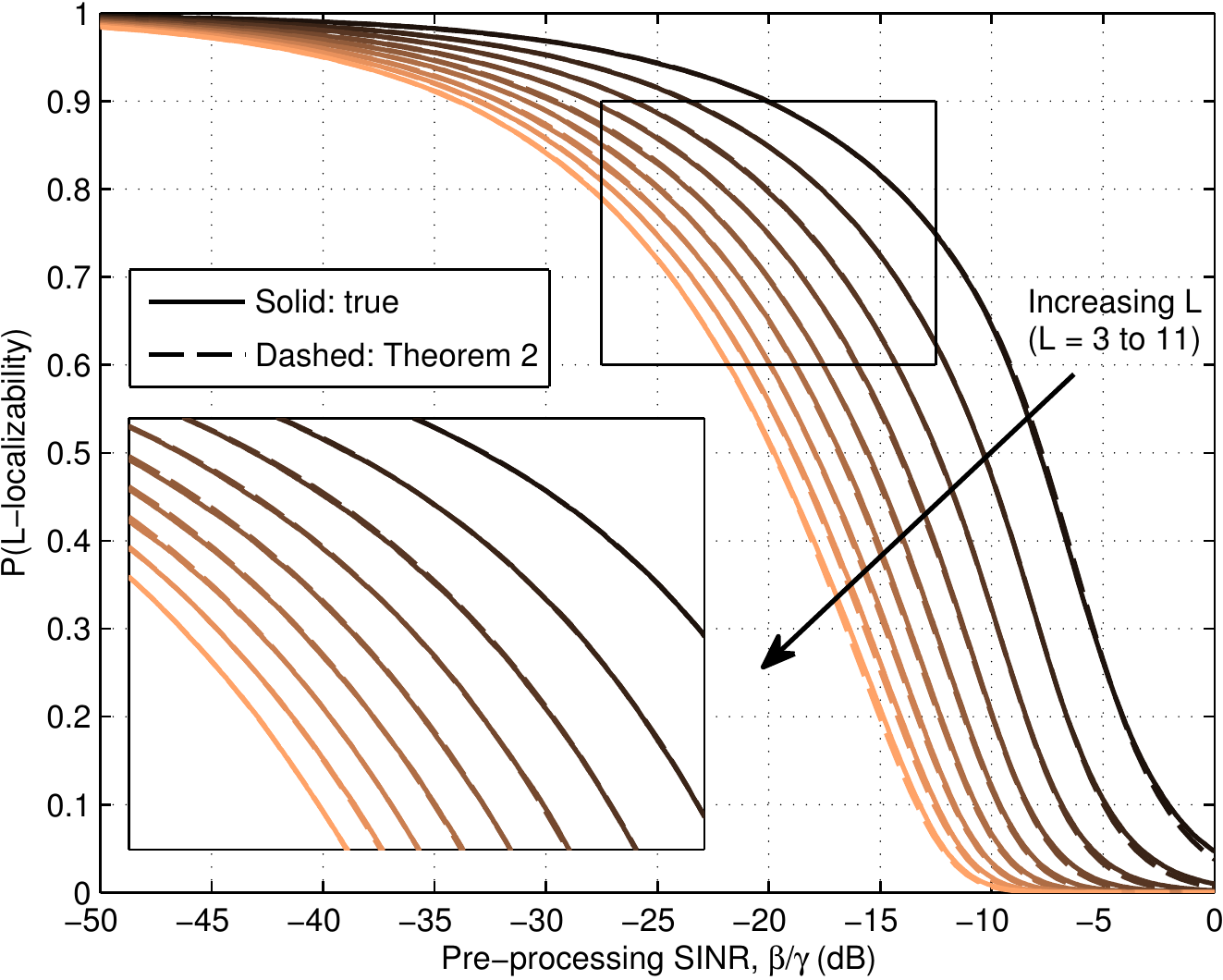}
\caption{\textsc{The cost of involving more BSs}: This figure underscores the high cost of obtaining a greater desired number of BSs, particularly in the high-reliability regime. ($\alpha=4$, $p=1/2$, $q=3/4$.)}
\label{Fig:VaryingL_p05_q075_a4}
\end{figure}

\else

\begin{figure}%
\centering%
\begin{minipage}{0.475\textwidth}%
\centering%
\includegraphics[width=\figurewidth]{Varyingp_L4_q1_a4}%
\caption{\textsc{The benefit of BS coordination}: These $L$-localizability probabilities emphasize the gains achievable through the coordinated mitigation of participating BS interference. ($L=4$, $\alpha=4$, $q=1$.)}%
\label{Fig:Varyingp_L4_q1_a4}%
\end{minipage}\hfill%
\begin{minipage}{0.475\textwidth}%
\centering%
\includegraphics[width=\figurewidth]{VaryingL_p05_q075_a4}%
\caption{\textsc{The cost of involving more BSs}: This figure underscores the high cost of obtaining a greater desired number of BSs, particularly in the high-reliability regime. ($\alpha=4$, $p=1/2$, $q=3/4$.)}%
\label{Fig:VaryingL_p05_q075_a4}%
\end{minipage}%
\end{figure}

\fi

\subsection{Application to frequency reuse}
An effective technique to reduce interference from co-located transmitters is that of \emph{frequency reuse}, commonly included in wireless standards~\cite{3GPP.36.305}. If a total of $K$ frequency bands are available and we independently assign one of the bands to each $x \in \PPPa$ with equal probability, we can easily incorporate frequency reuse into our model by considering the transmission activity on each band separately using independent PPPs whose densities are that of the original PPP thinned by the frequency reuse factor $K$. Let us denote the number of detectable BSs in the $k\th$ band by $n_k$. For $L$-localizability, we thus require $\sum_{k=1}^K n_k \geq L$. The main technical difference between this analysis and that already presented is that we are now obliged to evaluate the probability of being able to detect \emph{exactly} $n$ BSs in a given frequency band. Under the localization setup we consider, it is not difficult to show that when $p=q$ this can be readily determined by considering the difference between the probabilities of $n$ and $n+1$ localizability. The $p=q$ condition isn't particularly restrictive, as it still allows us to model average network load and only removes the ability to perform per-band BS coordination (which isn't done in practice, anyway). From here, the expression for the $L$-localizability probability with frequency reuse is procured by simply considering all combinations of $\{n_1, \ldots, n_K\}$ for which $\sum_{k=1}^K n_k \geq L$. This is formally stated in the following theorem.

\begin{theorem}[$L$-localizability probability with random frequency reuse]\label{Theorem:FrequencyReuse}
The $L$-localizability probability with random frequency reuse and reuse factor $K$ is:
\begin{align}
&\plk(p, q, \alpha, \threshold, \pg, \pppai)
= 1 - \sum_{l=0}^{L-1} \sum\limits_{\substack{\{n_1, \ldots, n_K\} \\ \sum_i n_i = l}} \prod_{i=1}^K  \notag \\
&\left( {\tt P}_{n_i}\left (p, q, \alpha, \threshold, \pg, \frac{\pppai}{K}\right ) - {\tt P}_{n_i+1}\left (p, q, \alpha, \threshold, \pg, \frac{\pppai}{K}\right ) \right). \label{Eq:FrequencyReuse}
\end{align}
\end{theorem}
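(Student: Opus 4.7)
The plan is to exploit the independent thinning property of a PPP: if each point of $\PPPa$ is independently assigned to one of $K$ frequency bands uniformly at random, then the restriction of $\PPPa$ to band $k$ is itself a homogeneous PPP with density $\pppai/K$, and the processes on distinct bands are mutually independent. Let $N_k$ denote the number of BSs that the mobile device can successfully detect on band $k$. By independence of the thinned PPPs (and, crucially, independence of the activity indicators governing transmissions within each band), the random variables $N_1,\ldots,N_K$ are mutually independent.

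Next, I would argue that the analysis of Section~\ref{Sec:pl_ubound}--Section~4.3 applies verbatim within any single band, provided we replace the BS density $\pppai$ by $\pppai/K$. This is where the hypothesis $p=q$ enters: within one band, there is no longer a meaningful distinction between ``participating'' and ``non-participating'' BSs, since per-band BS coordination is not performed, so every BS on that band is active with the same probability $p=q$. Consequently, the probability of detecting at least $n$ BSs on band $k$ equals ${\tt P}_n(p,q,\alpha,\threshold,\pg,\pppai/K)$ as defined earlier, and the probability of detecting \emph{exactly} $n$ BSs is obtained by telescoping,
\begin{equation*}
\P(N_k = n) = {\tt P}_n(p,q,\alpha,\threshold,\pg,\pppai/K) - {\tt P}_{n+1}(p,q,\alpha,\threshold,\pg,\pppai/K),
\end{equation*}
with the convention ${\tt P}_0 = 1$.

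Finally, I would combine the two observations. The device is $L$-localizable in the frequency-reuse network if and only if $\sum_{k=1}^K N_k \geq L$, so
\begin{equation*}
\plk = 1 - \sum_{l=0}^{L-1} \P\!\left(\textstyle\sum_{k=1}^K N_k = l\right).
\end{equation*}
Partitioning the event $\{\sum_k N_k = l\}$ into the disjoint events indexed by compositions $\{n_1,\ldots,n_K\}$ with $\sum_i n_i = l$ and invoking independence of the $N_k$ to factor the joint probability yields exactly the product in \eqref{Eq:FrequencyReuse}.

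The main obstacle, and really the only subtle point, is justifying step two: I must be careful that the per-band detection problem is a genuine instance of the original model. The independent thinning gives independent PPPs, but I also need the activity (load/coordination) indicators on band $k$ to be independent of those on other bands and of the thinning assignments. When $p=q$ this is immediate because every BS on band $k$ is active i.i.d.\ with probability $p$, so the on-band interference field is again a PPP of intensity $p\,\pppai/K$ of interferers plus the tagged serving BS, which matches the hypotheses of the earlier ${\tt P}_n$ derivations; once this is in place, the rest is bookkeeping over compositions.
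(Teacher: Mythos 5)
Your proposal is correct and follows essentially the same route as the paper, which justifies the theorem informally in the paragraph preceding its statement: independent thinning of $\PPPa$ into $K$ bands of density $\pppai/K$, the telescoping identity ${\tt P}_n - {\tt P}_{n+1}$ for the probability of detecting exactly $n$ BSs per band (valid since $p=q$ removes the dependence of the per-band SINR on the number of participating BSs), and a sum over compositions $\{n_1,\ldots,n_K\}$ with $\sum_i n_i = l$ using independence across bands. Your added remarks on the independence of the activity indicators and on why $p=q$ makes the per-band problem a genuine instance of the original model fill in details the paper leaves implicit.
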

\noindent This expression is easily implemented in software using a recursive function and any of the previous expressions for evaluating $\pl$ may be substituted into \eqref{Eq:FrequencyReuse}. Using the worst-case interference scenario ($p=q=1$), the value of frequency reuse is clearly demonstrated in Figure~\ref{Fig:FrequencyReuse_L4_p1_q1_a4} for $K = 3$ and $6$, which correspond to the values used in LTE when positioning using cell-specific reference signals (CRS) and positioning reference signals (PRS), respectively. Since random frequency reuse makes no attempt to intelligently separate interference sources, gains from planned frequency reuse are likely to be even better.

\begin{figure}
\centering
\includegraphics[width=\figurewidth]{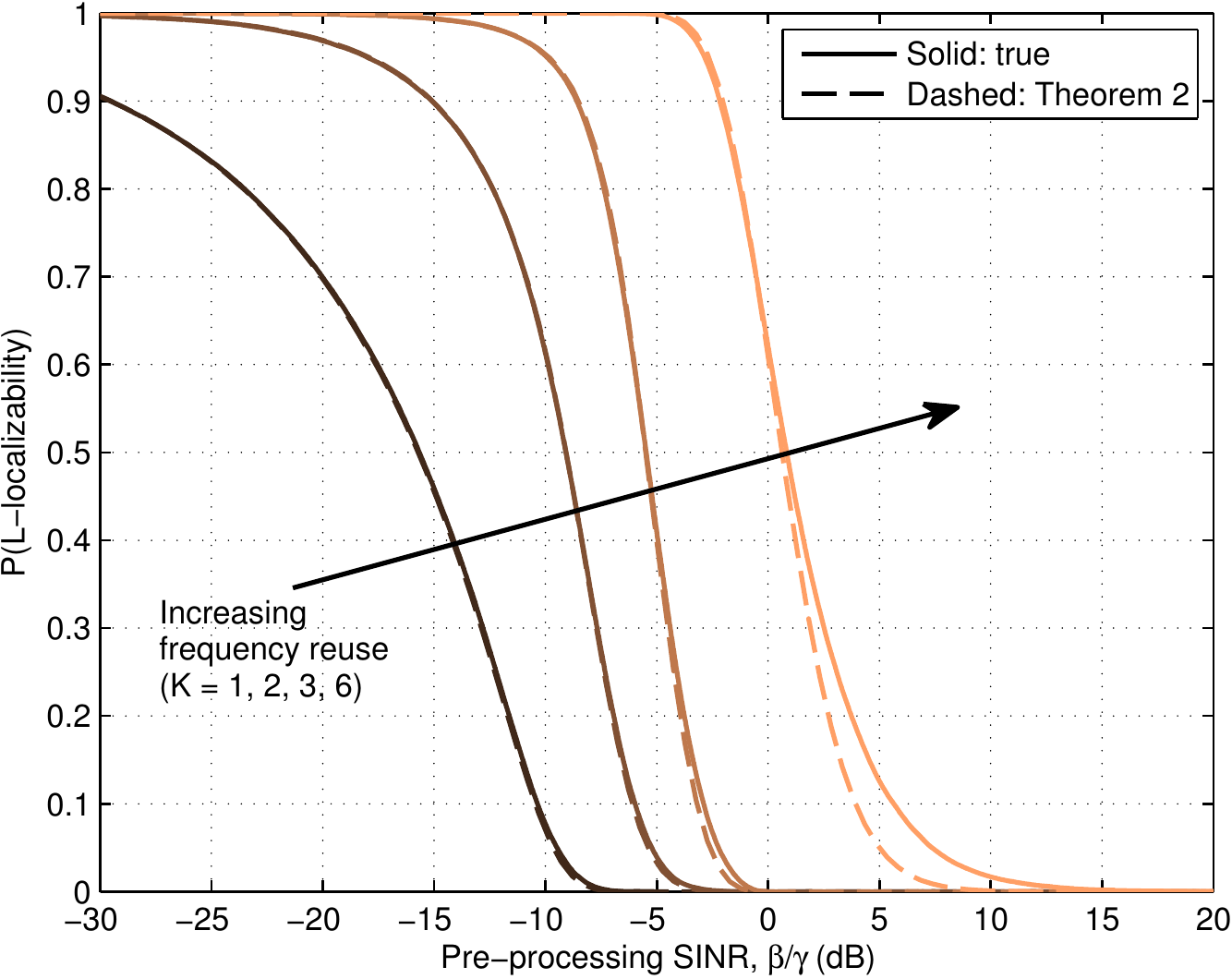}
\caption{\textsc{The value of frequency reuse} is undeniable for improving $\pl$. Here, $L=4$, $\alpha=4$, $p=q=1$.}
\label{Fig:FrequencyReuse_L4_p1_q1_a4}
\end{figure}

\subsection{Example LTE use case}
Although the model presented in this paper is general and not tied to any specific localization system, here we provide an example of how our model may be initially related to an existing technique. Consider the scenario of OTDOA positioning in LTE using the PRS, for which it is reasonable to assume that up to six close BSs may be selected to localize a mobile device without interfering with each other due to a frequency reuse factor of six among the PRS. BSs farther away may still interfere, however. This setup may then be treated as one with perfect coordination ($p=0$) among the $L=6$ participating BSs, while the expected interference from beyond the sixth BS is adjusted using $q \leq 1$ in accordance with the average network load. The processing gain $\pg \approx 15$ dB from the length-31 Gold sequences used in the PRS while $\alpha=3.76$ per 3GPP assumptions \cite{R1-091443}. Furthermore, even PRS power boosting may be accounted for in a preliminary way by considering the \emph{power boost} at the participating BSs as an equivalent \emph{power reduction} at the interfering BSs. Practically, this may be implemented by reducing the value of $q$, the ultimate value of which will depend on the amount the participating BSs' powers are boosted.

\subsection{Similarity between PPP and hexagonal grid hearability}

\ifx\sidebysidefigures\undefined

\begin{figure}
\centering
\includegraphics[width=\figurewidth]{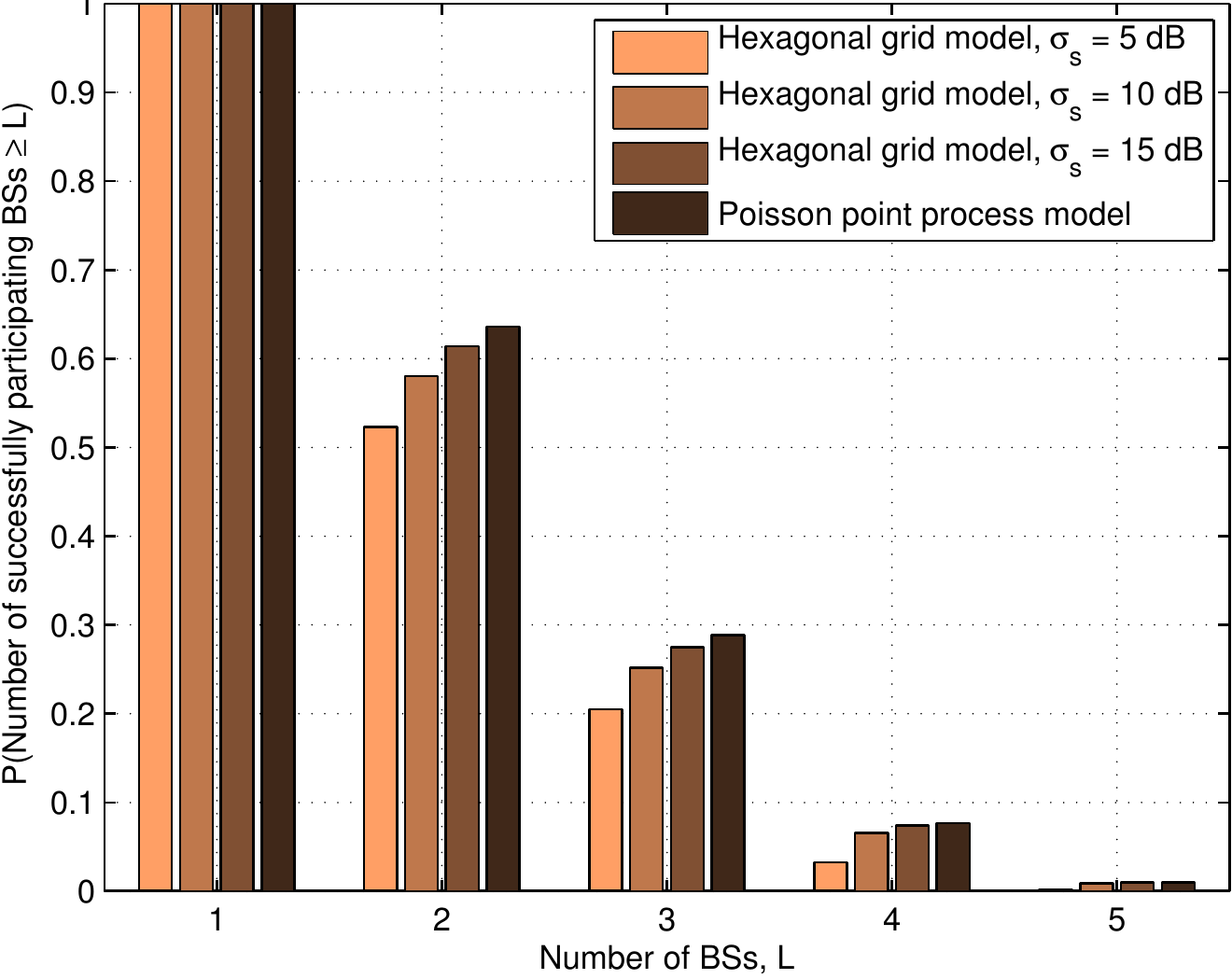}%
\caption{\textsc{Convergence to PPP}: Hearability results gathered using a regular grid model converge to those gathered using a PPP model as the shadowing standard deviation $\sigma_s$ increases. ($\threshold/\pg = -10$~dB, $\alpha=4$, $p=q=1$.)}%
\label{Fig:PPPvHexGridHearability2}%
\end{figure}

\begin{figure}
\centering
\includegraphics[width=\figurewidth]{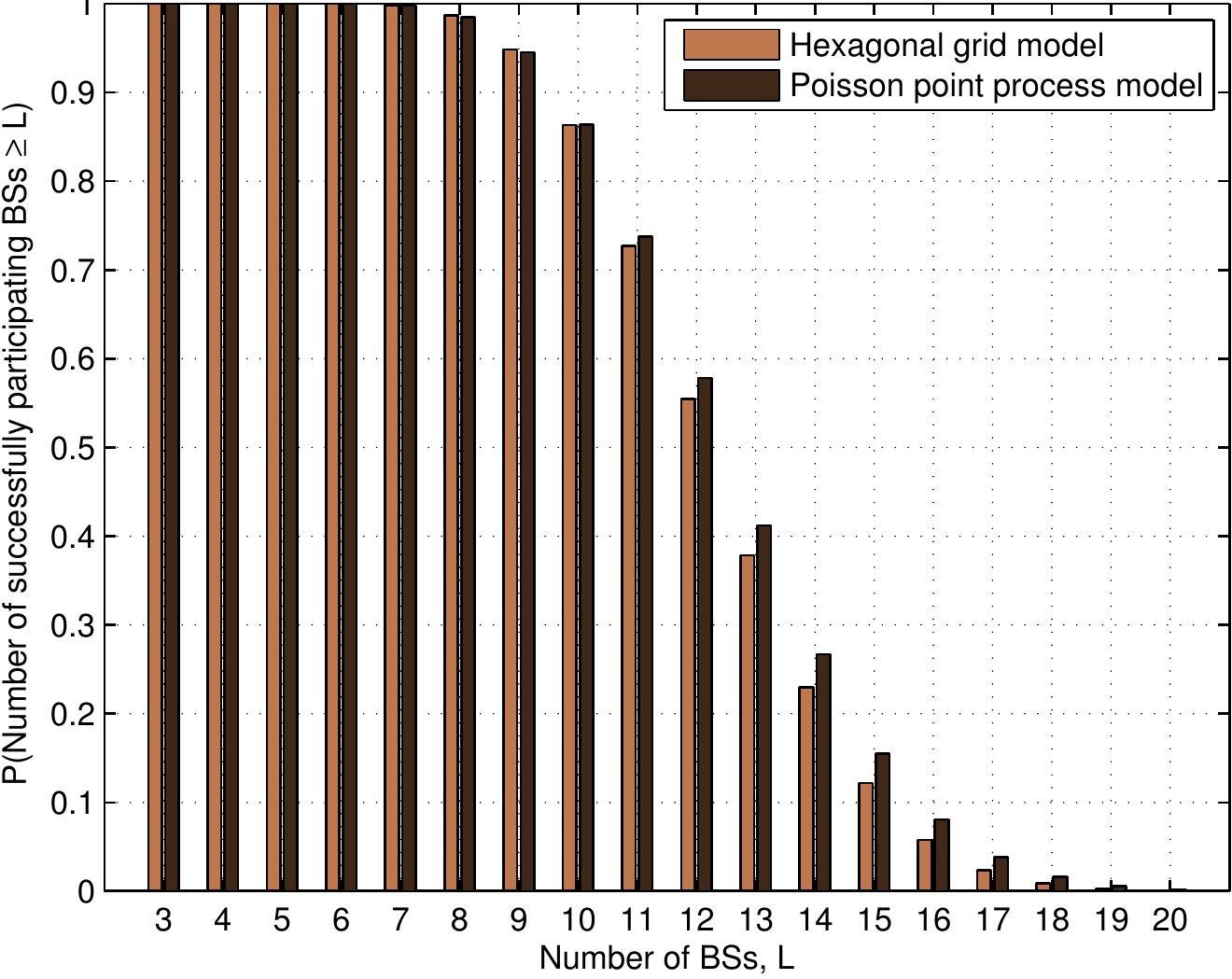}%
\caption{\textsc{Comparison with Frequency Reuse}: When $K>1$, the models agree for greater values of $L$ even at a lower $\sigma_s = 8$~dB. A required $L > 10$ is not typically expected. ($\threshold/\pg = -10$~dB, $\alpha=4$, $p=q=1$, $K=6$.)}%
\label{Fig:PPPvHexGridHearability1}%
\end{figure}

\else

\begin{figure}%
\centering%
\begin{minipage}{0.475\textwidth}%
\centering%
\includegraphics[width=\figurewidth]{PPPvHexGridHearability2}%
\caption{\textsc{Convergence to PPP}: Hearability results gathered using a regular grid model converge to those gathered using a PPP model as the shadowing standard deviation $\sigma_s$ increases. ($\threshold/\pg = -10$~dB, $\alpha=4$, $p=q=1$.)}%
\label{Fig:PPPvHexGridHearability2}%
\end{minipage}\hfill%
\begin{minipage}{0.475\textwidth}%
\centering%
\includegraphics[width=\figurewidth]{PPPvHexGridHearability1}%
\caption{\textsc{Comparison with Frequency Reuse}: When $K>1$, the models agree for greater values of $L$ even at a lower $\sigma_s = 8$~dB. A required $L > 10$ is not typically expected. ($\threshold/\pg = -10$~dB, $\alpha=4$, $p=q=1$, $K=6$.)}%
\label{Fig:PPPvHexGridHearability1}%
\end{minipage}%
\end{figure}

\fi

We now briefly revisit our assumption of BSs distributed according to a PPP and show that this model yields hearability results which are comparable to those yielded by the well-accepted hexagonal grid model. As discussed in \cite{Blaszczyszyn2013,Keeler2014,Blaszczyszyn2015}, increased shadowing causes more regular network models to appear Poisson when considering received signal strength-based metrics. In Figure~\ref{Fig:PPPvHexGridHearability2}, hearability results using the two models are compared for several increasing shadowing strengths (note that, as discussed previously, the PPP results are invariant to shadowing strength), $p=q=1$, and $\threshold/\pg = -10~\text{dB}$, and it is clear that the models are converging. Furthermore, frequency reuse causes the models to provide similar results for higher values of $L$ even at lower shadowing strengths. This is shown in Figure~\ref{Fig:PPPvHexGridHearability1} for $K=6$, $\sigma_s = 8~\text{dB}$, and the other parameters the same as above. While there will always be some difference in the hearability results using the two models, the results using actual deployments likely lie somewhere in between the two, as was discovered in \cite{Andrews2011} for cellular coverage.

%%%%%%%%
% Conclusion %
%%%%%%%%

\section{Conclusion}
We have employed concepts from point process theory and stochastic geometry in order to provide a new tractable model for studying localization in cellular networks. The model is accompanied by an analysis of hearability, which is shown to be an important metric providing insight into fundamental localization performance, and easy-to-use analytical expressions for this metric are provided. This is in contrast to most previous approaches, which either provide insights specific to deterministic deployments or rely on time-consuming simulations, neither of which allow for the drawing of general insights. While more base stations participating in localization is beneficial to its accuracy, the analytical results show that obtaining an increasing number of successful base stations connections decreases exponentially. The primary culprit for this is the interference due to the participating base stations themselves. When base stations coordinate (e.g., through a network controller), more of them are able to successfully participate in the localization procedure. However, in the end, it is clear that the greatest gains in hearability are realized through frequency reuse. This is consistent with the findings during the standardization of LTE positioning, where this conclusion was reached after performing many complex system-level simulations.

We have only scratched the surface of the model, which may be used to extend this work in many different ways. Besides hearability, understanding how other metrics, such as ones focused on network geometry, are affected by changes in design parameters would yield valuable contributions to the localization literature. In addition, recognizing how directional antennas and power control affect localization performance would be helpful to cellular system designers. Lastly, location information can be gathered in diverse ways, for example by simply knowing that one is able to hear a low-power femtocell, which lends itself nicely to the consideration of different classes of base stations through multi-tiered heterogeneous networks~\cite{Dhillon2012}.

%%%%%%%%
% Appendices %
%%%%%%%%

\appendix

\subsection{Proof of Lemmas~\ref{Lemma:CircularBPP} and~\ref{Lemma:AnnularBPP}}\label{Proof:AnnularBPP}
The proof strategy used here is quite standard, e.g., see~\cite[Theorem 2.9]{Haenggi2013}. For the proof, let $i < j$, $n \leq \aleph$, $\aleph$ be the number of BSs between $x_i$ and $x_j$, $A = \b(\origin,\|x_i\|+\delta/2) \backslash \b(\origin,\|x_i\|-\delta/2)$, $B = \b(\origin,\|x_j\|+\varepsilon/2) \backslash \b(\origin,\|x_j\|-\varepsilon/2)$, $C = \b(\origin,\|x_j\|-\varepsilon/2) \backslash \b(\origin,\|x_i\|+\delta/2)$, $D \subseteq C$, $\b(\btheta, r)$ represent a ball of radius $r$ centered at $\btheta$, $\delta$ and $\varepsilon$ be small modifications to the regions of $A$, $B$, and $C$, and $N_\ncalH$ be the number of points in region $\ncalH$. Then, $\P(N_D=n| x_i, x_j) =$
\begin{align*}
&\lim_{\delta,\varepsilon \to 0} \P(N_D=n | N_C=\aleph, N_B=1, N_A=1) \\
&= \lim_{\delta,\varepsilon \to 0} \frac{\P(N_D=n, N_C=\aleph, N_B=1, N_A=1)}{\P(N_C=\aleph, N_B=1, N_A=1)}  \\
&= \lim_{\delta,\varepsilon \to 0} \frac{\P(N_D=n, N_{C\backslash D}=\aleph-n, N_B=1, N_A=1)}{\P(N_C=\aleph, N_B=1, N_A=1)} \\
%&= \lim_{\delta,\varepsilon \to 0} \frac{\P(N_D=n)\,\P(N_{C\backslash D}=\aleph-n) \,\P(N_B=1)\,\P(N_A=1)}{\P(N_C=\aleph)\,\P(N_B=1)\,\P(N_A=1)} \\
&= \frac{\P(N_D=n)\,\P(N_{C\backslash D}=\aleph-n)}{\P(N_C=\aleph)} \\
&= \frac{(\pppai|D|)^n e^{-\pppai|D|}/n!}{(\pppai|C|)^{\aleph} e^{-\pppai|C|}/\aleph!} \cdot \frac{(\pppai(|C|-|D|))^{\aleph-n} e^{-\pppai(|C|-|D|)}}{(\aleph-n)!} \\
&= {\aleph \choose n} %\frac{\aleph!}{n!(\aleph-n)!} 
\left (\frac{|D|}{|C|}\right )^n\left (1-\frac{|D|}{|C|}\right )^{\aleph-n},
\end{align*}
which shows that the $\aleph$ BSs inside the annular region $\lim_{\delta,\varepsilon \to 0}\ C = \b(\origin,\|x_j\|) \backslash \b(\origin,\|x_i\|)$ are distributed uniformly at random independent of each other. In other words, they form a uniform Binomial Point Process (BPP) over the annular region $\b(\origin,\|x_j\|) \backslash \b(\origin,\|x_i\|)$. By letting $i=0$, $j=L$, and defining $x_0 \triangleq \origin$ (i.e., $\|x_0\| = 0$), the BSs inside the circular region $\b(\origin,\|x_L\|)$ are shown to form a uniform BPP, thus proving Lemma~\ref{Lemma:CircularBPP}. By letting $i=1$ and $j=L$, Lemma~\ref{Lemma:AnnularBPP} is proved.

\subsection{Proof of Lemma~\ref{Lemma:FRa1}}\label{Proof:FRa1}
Let $\xa$ be an active BS that is uniformly distributed inside the circular region $\b(o, R_L)$. Then, for $\Omega \geq 1$,
\begin{align*}
	&F_{\Ra_1|R_L,\Omega}(r | R_L, \Omega)
	%= \P(\Ra_1 \leq r | R_L, \Omega) 
	= 1 - \P(\Ra_1 > r | R_L, \Omega) \\
	&= 1 - \P(\min\{ \xa : \|\xa\| < R_L \} > r | R_L, \Omega) \\
	&\stackrel{(a)}{=} 1 - \prod_{\{ \xa : \|\xa\| < R_L \}} \P(\|\xa\| > r | R_L) \\
	&\stackrel{(b)}{=} 1 - \prod_{\{ \xa : \|\xa\| < R_L \}} \frac{\pi R_L^2 - \pi r^2}{\pi R_L^2}
	= 1 - \left(\frac{R_L^2 - r^2}{R_L^2}\right)^\Omega,
\end{align*}
where $(a)$ and $(b)$ follow from Lemma~\ref{Lemma:CircularBPP}.

\subsection{Proof of Theorem~\ref{Theorem:pl_ubound}}\label{Proof:pl_ubound}
For $\Omega \geq 1$,
\begin{align*}
\pl &= \P\left( \frac{P R_L^{-\alpha}}{\sum_{i=1}^\infty P \Ra_i^{-\alpha} - P R_L^{-\alpha}} \geq \frac{\threshold}{\pg} \right) \\
&\leq \P\left( \frac{P R_L^{-\alpha}}{\sum_{i=1}^{\Omega+1} P \Ra_i^{-\alpha} - P R_L^{-\alpha}} \geq \frac{\threshold}{\pg} \right) \\
&\stackrel{(a)}{=}\P\left( \frac{R_L^{-\alpha}}{\Ra_1^{-\alpha} + \sum_{i=2}^{\Omega} \Ra_i^{-\alpha}} \geq \frac{\threshold}{\pg} \right) \\
&\stackrel{(b)}{\leq} \P\left( \frac{R_L^{-\alpha}}{\Ra_1^{-\alpha} + (\Omega-1) R_L^{-\alpha}} \geq\frac{\threshold}{\pg} \right) \\
%&= \P\left( \Ra_1^{-\alpha} \leq R_L^{-\alpha}\left(\frac{\pg}{\threshold} - (\Omega-1)\right) \right)
&= \P\left(\!\Ra_1 \geq R_L\left(\frac{\pg}{\threshold} - (\Omega-1)\right)^{\!\!-\frac{1}{\alpha}} \right)\\
&\stackrel{(c)}{=} \P\left(\!\min\{ \|\xa\|\!:\!\|\xa\| < R_L \} \geq R_L\left(\frac{\pg}{\threshold} - (\Omega-1)\right)^{\!\!-\frac{1}{\alpha}} \right) \\
%&= \E_{\Omega} \left[ \E_{R_L}\left [\prod_{\{\xa : \|\xa\| < R_L \}} \P\left( \|\xa\| \geq R_L\left(\frac{\pg}{\threshold} - (\Omega-1)\right)^{-\frac{1}{\alpha}} \middle\vert R_L, \Omega \right)\right ] \right ] \\
%&= \E_{\Omega} \left[ \E_{R_L} \left [ \prod_{\{\xa : \|\xa\| < R_L \}} \frac{\pi R_L^2 - \pi \left (R_L\left(\frac{\pg}{\threshold} - (\Omega-1)\right)^{-\frac{1}{\alpha}}\right )^2}{\pi R_L^2} \right ] \right ] \indicator\left(\pg^{-1}\threshold \leq \Omega^{-1}\right) \\
%&= \E_{\Omega} \left[ \E_{R_L}\left [\left(1-\left(\frac{\pg}{\threshold} - (\Omega-1)\right)^{-\frac{2}{\alpha}} \right)^{\Omega}\right ] \right ] \indicator\left(\pg^{-1}\threshold \leq \Omega^{-1}\right) \\
&\stackrel{(d)}{=} \E_{\Omega} \left[ \left(1-\left(\frac{\pg}{\threshold} - (\Omega-1)\right)^{-\frac{2}{\alpha}} \right)^{\Omega} \indicator\left(\frac{\threshold}{\pg} \leq \Omega^{-1}\right)\right] ,% \\
%&= \sum_{\omega=0}^{\lfloor \pg/\beta \rfloor} \left(1-\left(\frac{\pg}{\threshold} - (\omega-1)\right)^{-\frac{2}{\alpha}} \right)^{\omega} \binom{L-1}{\omega} p^\omega (1-p)^{L-1-\omega}.
\end{align*}
where $(a)$ follows from $\Ra_{\Omega+1} \triangleq R_L$ and pulling the dominant interferer out of the sum, $(b)$ from the fact that $\Ra_i \leq R_L$ for $i \leq \Omega$, $(c)$ since $\Ra_1$ is the smallest among all interferers, and $(d)$ from the fact that given $R_L$, all the BSs are uniformly distributed in the circle of radius $R_L$, followed by some algebraic manipulations. The final result is obtained by deconditioning over $\Omega \geq 1$, resulting in an expression which also trivially valid for the $\Omega=0$ case.

\subsection{Proof of Lemma~\ref{Lemma:EI1}}\label{Proof:EI1}
Let $\xa$ be an active BS inside the annular region $\b(o, R_L) \backslash \b(o, \Ra_1)$. Then,
\begin{align*}
&\E[P \|\xa\|^{-\alpha} | \Ra_1, R_L] \stackrel{(a)}{=} \int_{\Ra_1}^{R_L} P r^{-\alpha} \frac{2 r}{R_L^2 - \Ra_1^2}\ \d r \\
&= \frac{2P}{R_L^2 - \Ra_1^2} \int_{\Ra_1}^{R_L} r^{1-\alpha} \d r = \frac{2P}{R_L^2 - \Ra_1^2} \left( \frac{r^{2-\alpha}}{2-\alpha} \biggr|_{r = \Ra_1}^{R_L} \right)\\
&= \frac{2P}{2-\alpha}\cdot \frac{R_L^{2-\alpha} - \Ra_1^{2-\alpha}}{R_L^2 - \Ra_1^2},
\end{align*}
where $(a)$ follows from the fact that $\xa$ is uniformly distributed inside the annular region $\b(o, R_L) \backslash \b(o, \Ra_1)$ (proved in Lemma~\ref{Lemma:AnnularBPP}). Lastly, the mean of the sum of the interference from all interferers located inside the annular region $\b(o, R_L) \backslash \b(o, \Ra_1)$ is simply the sum of their individual means, from which the result follows.

\subsection{Proof of Lemma~\ref{Lemma:EI2}}\label{Proof:EI2}
In order to accommodate partial network loading, $q$ is the fraction of the far-off BSs transmitting throughout the localization procedure, which can be modeled using an equivalent thinned PPP $\PPPb$ with density $q\pppai$. For $\alpha > 2$, $\E[\Ix|R_L] = $
\begin{align*}
&\E\left[\sum_{x \in \PPPb \backslash \b(\origin, R_L)} P\|x\|^{-\alpha} \right]
\stackrel{(a)}{=} q\pppai \int_{\b^c(\origin, R_L)} P \|x\|^{-\alpha} \d x \\
&= P q\pppai \int_0^{2\pi} \int_{R_L}^\infty r^{-\alpha} r\ \d r\ \d \theta
= 2 P \pi q\pppai \int_{R_L}^\infty r^{1-\alpha} \d r,
%&= \left.2 P \pi q\pppai \cdot \frac{r^{2-\alpha}}{2-\alpha} \right\vert_{r=R_L}^\infty \\
%= \frac{2 P \pi q\pppai}{\alpha-2} R_L^{2-\alpha},
\end{align*}
where $(a)$ follows from Campbell's theorem~\cite{Haenggi2013}, and the result follows by solving the final integral (assuming $\alpha>2$).

\subsection{Proof of Proposition~\ref{Proposition:pl_EI1+EI2approx_p=0}}\label{Proof:pl_EI1+EI2approx_p=0}
The $L$-localizability probability in this case is $\pl \vert_{\Omega = 0}=$
\begin{align*}
&\E_{R_L}\left[ \indicator\left( \frac{P R_L^{-\alpha}}{\E[\Ix|R_L]} \geq \frac{\threshold}{\pg} \right) \right]
= \E_{R_L}\left[ \indicator\left( \frac{\pg}{\beta} \geq \frac{2 \pi q \pppai}{\alpha-2} R_L^2 \right) \right] \\
&= \E_{R_L}\left[ \indicator\left( \frac{\alpha-2}{2 \pi q \pppai \threshold / \pg} \geq R_L^2 \right) \right]
%= \E_{R_L}\left[ \indicator\left( R_L \leq \sqrt{\frac{\alpha-2}{2 \pi q \pppai \threshold / \pg}} \right) \right] \\
= \P\left( R_L \leq \sqrt{\frac{\alpha-2}{2 \pi q \pppai \threshold / \pg}} \right) \\
&\stackrel{(a)}{=} 1 - \sum_{\ell=0}^{L-1} e^{-\frac{\alpha-2}{2 q \threshold / \pg}} \frac{(\frac{\alpha-2}{2 q \threshold/\pg})^\ell}{\ell!},
\end{align*}
where $(a)$ follows from the fact that the previous expression is simply the probability that \emph{at least} $L$ BSs lie inside the region $\b\left(\origin, \sqrt{\frac{\alpha-2}{2 \pi q \pppai \threshold / \pg}} \right)$.

\subsection{Proof of Theorem~\ref{Theorem:pl_EI1+EI2approx}}\label{Proof:pl_EI1+EI2approx}
The general $L$-localizability probability is
\begin{align}
\pl
&= \P\left(\SIR_L \geq \frac{\threshold}{\pg}\right) \notag \\
&= \E_{\Omega} \left[ \E_{R_L} \left[ \E_{\Ra_1} \left[ \indicator\left( \SIR_L \geq \frac{\threshold}{\pg} \middle\vert \Ra_1, R_L, \Omega \right) \middle\vert R_L, \Omega \right] \right] \right].
\label{eq:Thm2_intermediate_HD1}
\end{align}
The $\SIR_L$ term in the above expression is
\begin{align}
&\SIR_L = \frac{P R_L^{-\alpha}}{P \Ra_1^{-\alpha} + \E[\Ia | \Ra_1, R_L, \Omega] + \E[\Ix | R_L]} \\
&= \frac{P R_L^{-\alpha}}{P \Ra_1^{-\alpha} + \frac{2 P (\Omega-1)}{2-\alpha}\cdot \frac{R_L^{2-\alpha} - \Ra_1^{2-\alpha}}{R_L^2 - \Ra_1^2} + \frac{2 P \pi q\pppai}{\alpha-2}R_L^{2-\alpha}}. \label{eq:Thm2_intermediate_HD2} 
\end{align}
The result follows by substituting this expression for $\SIR_L$ back in \eqref{eq:Thm2_intermediate_HD1}, followed by integrating over the densities of $\Ra_1$, $R_L$, and $\Omega \geq 1$, and treating the $\Omega=0$ case separately using Proposition~\ref{Proposition:pl_EI1+EI2approx_p=0}.

\subsection{Proof of Corollary~\ref{Corollary:pl_EI1+EI2approx_singleintegral}}\label{Proof:pl_EI1+EI2approx_singleintegral}
As in the Proof of Theorem~\ref{Theorem:pl_EI1+EI2approx} above, $\pl$ can be expressed in terms of $\SIR_L$ as \eqref{eq:Thm2_intermediate_HD1}, where $\SIR_L$ is given by \eqref{eq:Thm2_intermediate_HD2}. In this Corollary, we multiply the $\E[\Ix|R_L]$ term of $\SIR_L$ by $\E[\Ra_1^2]/\Ra_1^2 = 1/(\pi \lambda \Ra_1^2)$ to get a simpler approximation for $\SIR_L$, and consequently for $\pl$, as follows:
\begin{align}
\SIR_L
&= \frac{P R_L^{-\alpha}}{P \Ra_1^{-\alpha} + \frac{2 P (\Omega-1)}{2-\alpha}\cdot \frac{R_L^{2-\alpha} - \Ra_1^{2-\alpha}}{R_L^2 - \Ra_1^2} + \frac{2 P \pi q\pppai}{\alpha-2}R_L^{2-\alpha}\frac{\E[\Ra_1^2]}{\Ra_1^2}} \notag \\
&\stackrel{(a)}{=}  \frac{1}{X^{\alpha} + \frac{2(\Omega-1)}{2-\alpha}\cdot \frac{X^{2} - X^{\alpha}}{X^2 - 1} + \frac{2q}{\alpha-2}X^2} ,\label{Eq:pl_EI1+EI2approx_singleintegral_appendix}
\end{align}
where $(a)$ follows by substituting $R_L/\Ra_1 \rightarrow X$. The distribution of $X$ can be readily obtained from \eqref{Eq:F_Ra_1_given_R_L} as $F_X(x) = (1-1/x^2)^\Omega$. Note that after this substitution, the dependence of $\SIR_L$ on $\Ra_1$ and $R_L$ is completely captured through $X$. The final expression is now obtained by substituting \eqref{Eq:pl_EI1+EI2approx_singleintegral_appendix} in \eqref{eq:Thm2_intermediate_HD1}, followed by integrating over the support of $X$ and subsequently summing over the probability mass of $\Omega$, excluding $\Omega=0$ which is treated separately using Proposition~\ref{Proposition:pl_EI1+EI2approx_p=0}. The key here is that instead of the {\em double} integration over $\Ra_1$ and $R_L$, {\em single} integration over their ratio $X$ suffices in this case.

\subsection{Proof of Corollary~\ref{Corollary:pl_EI1+EI2approx_a=4}}\label{Proof:pl_EI1+EI2approx_a=4}
As in the proof of Corollary~\ref{Corollary:pl_EI1+EI2approx_singleintegral}, we start with the expression of $\pl$ given by \eqref{eq:Thm2_intermediate_HD1} in the proof of Theorem~\ref{Theorem:pl_EI1+EI2approx}. The $\SIR_L$ term appearing in the expression of $\pl$ is given by \eqref{eq:Thm2_intermediate_HD2}. In this Corollary, we simplify the $\SIR_L$ expression for the case of $\alpha=4$. Starting with the $\SIR_L$ expression given by \eqref{eq:Thm2_intermediate_HD2}:
\begin{align}
\SIR_L
%&= \frac{P R_L^{-\alpha}}{P \Ra_1^{-\alpha} + \E[\Ia | \Ra_1, R_L] + \E[\Ix | R_L] } \notag \\
&=  \frac{P R_L^{-\alpha}}{P \Ra_1^{-\alpha} + \frac{2 P (\Omega-1)}{2-\alpha}\cdot \frac{R_L^{2-\alpha} - \Ra_1^{2-\alpha}}{R_L^2 - \Ra_1^2} + \frac{2 P \pi \pppai}{\alpha-2}R_L^{2-\alpha} } \nonumber \\
%&=  \frac{P}{P \left(\frac{R_L}{\Ra_1}\right)^{\alpha} + \frac{2 P (\Omega-1)}{2-\alpha}\cdot \frac{R_L^{2} - \Ra_1^{2-\alpha}R_L^{\alpha}}{R_L^2 - \Ra_1^2} + \frac{2 P \pi \pppai}{\alpha-2}R_L^2} \vert_{\alpha = 4} \\
&= \frac{1}{\left(\frac{R_L}{\Ra_1}\right)^{\alpha} + \frac{2(\Omega-1)}{2-\alpha}\cdot \frac{\left(\frac{R_L}{\Ra_1}\right)^{2} - \left(\frac{R_L}{\Ra_1}\right)^{\alpha}}{\left(\frac{R_L}{\Ra_1}\right)^2 - 1} + \frac{2 \pi \pppai}{\alpha-2}R_L^2} \notag \\
%&= \frac{1}{X^{\alpha} + \frac{2(\Omega-1)}{2-\alpha}\cdot \frac{X^{2} - X^{\alpha}}{X^2 - 1} + \frac{2 \pi \pppai}{\alpha-2}R_L^2} \vert_{\alpha = 4} \\
%&= \frac{1}{X^{4} + \frac{2(\Omega-1)}{2-4}\cdot \frac{X^{2} - X^{4}}{X^2 - 1} + \pi \pppai R_L^2}  \\
%&= \frac{1}{X^{4} + \frac{2(\Omega-1)}{2}\cdot \frac{X^{2}\left(X^2 - 1\right)}{X^2 - 1} + \pi  \pppai R_L^2} \\
&\stackrel{(a)}{=} \frac{1}{X^{4} + (\Omega-1) X^2 + \pi \pppai R_L^2},
\label{eq:SINR_L_hd}
\end{align}
where $(a)$ follows by defining $X = \frac{R_L}{\Ra_1}$ and substituting $\alpha=4$. Using \eqref{eq:SINR_L_hd}, we now simplify $\SIR_L \ge \threshold/\pg$ which will then be used to derive $\pl$ from \eqref{eq:Thm2_intermediate_HD1}. Defining $Y = X^2$ we have
\begin{align*}
&\SIR_L \ge \threshold/\pg %\Rightarrow  Y^2 + (\Omega-1) Y + \pi \pppai R_L^2 \le \pg/\threshold  \\
\stackrel{(a)}{\Rightarrow} Y^2 + (\Omega-1) Y \le \kappa^{-1}\\
&\Rightarrow \left(Y + \frac{(\Omega-1)}{2} \right)^2 \le \kappa^{-1} + \frac{(\Omega-1)^2}{4} \\
&\stackrel{(b)}{\Rightarrow} 0 \le Y \le \sqrt{\kappa^{-1} + \frac{(\Omega-1)^2}{4}} - \frac{(\Omega-1)}{2}\\
&\stackrel{(c)}{\Rightarrow}  X^2 \le \sqrt{\kappa^{-1} + \frac{(\Omega-1)^2}{4}} - \frac{(\Omega-1)}{2} \\
&\stackrel{(d)}{\Rightarrow} 1 \le X \le \sqrt{\sqrt{\kappa^{-1} + \frac{(\Omega-1)^2}{4}} - \frac{(\Omega-1)}{2}}\\
&\Rightarrow \frac{R_L}{\sqrt{\sqrt{\kappa^{-1} + \frac{(\Omega-1)^2}{4}} - \frac{(\Omega-1)}{2}}} \le \Ra_1 \le R_L,
\end{align*}
where $\kappa^{-1} = \pg/\threshold - \pi \pppai R_L^2$ in $(a)$, $(b)$ follows from the fact that $Y \geq 1$, and $(c)$ from $Y = X^2$. Note that $(b)$ and $(c)$ require $\kappa^{-1} \geq -\frac{(\Omega-1)^2}{4}$. Step $(d)$ follows from $X\geq 1$. The earlier condition on $\kappa^{-1}$ is replaced by a more strict condition $\kappa^{-1} \geq \Omega$ in $(d)$. Substituting this back in \eqref{eq:Thm2_intermediate_HD1} and doing some algebraic manipulations, we get
\begin{align*}
\pl
&= \E_{\Omega} \left[ \E_{R_L} \left[\left(\frac{R_L^2 - \frac{R_L^2}{\sqrt{\kappa^{-1} + \frac{(\Omega-1)^2}{4}} - \frac{(\Omega-1)}{2}}}{R_L^2}\right)^{\!\!\!\Omega\,} \right] \right] \\
&= \E_{\Omega} \left[ \left(1 - \frac{1}{\sqrt{\kappa^{-1} + \frac{(\Omega-1)^2}{4}} - \frac{(\Omega-1)}{2}}\right)^{\!\!\!\Omega\,} \right]
\end{align*}
from which the result follows by simply deconditioning over $\Omega$. Due to $\kappa^{-1} \geq \Omega$, the integration limits are from $0$ to $\sqrt{\frac{\pg/\threshold-\omega}{\pi \pppai}}$.

%%%%%%%%%
% Bibliography %
%%%%%%%%%

\bibliographystyle{IEEEtran}
\bibliography{Paper_TW_Feb_15_0254_R2,IEEEabrv}

\end{document}